\newif\ifprintVersion   
\newif\ifprofessionalPrint 
\newif\iffancyTheorems  
\newif\ifboldNumberSets 
\newif\ifbachelorThesis 
\newcommand*{\printTitle}{}
\newcommand*{\printGermanTitle}{}
\newcommand*{\myTitle}[2]{\renewcommand*{\printTitle}{#1}\renewcommand*{\printGermanTitle}{#2}}
\newcommand*{\printTitleBold}{\textbf{\printTitle}}
\newcommand*{\printAuthor}{}
\newcommand*{\myName}[1]{\renewcommand*{\printAuthor}{#1}}
\newcommand*{\printProgram}{}
\newcommand*{\myProgram}[1]{\renewcommand*{\printProgram}{#1}}
\newcommand*{\printDateReceived}{}
\newcommand*{\dateOfHandingIn}[1]{\renewcommand*{\printDateReceived}{#1}}
\newcommand*{\printSubject}{}
\newcommand*{\mySubject}[1]{\renewcommand*{\printSubject}{#1}}
\newcommand*{\printKeywords}{}
\newcommand*{\myKeywords}[1]{\renewcommand*{\printKeywords}{#1}}
\newcommand*{\printNameOfSupervisor}{}
\newcommand*{\nameOfMySupervisor}[1]{\renewcommand*{\printNameOfSupervisor}{#1}}
\newcommand*{\printAdditionalExaminers}{}
\newcommand*{\additionalExaminers}[1]{\renewcommand*{\printAdditionalExaminers}{#1}}
\newlength{\extraborderlength}
\newcommand*{\extraBorder}[1]{\setlength{\extraborderlength}{#1}}
\newlength{\mybindingcorrection}
\newcommand*{\bindingCorrection}[1]{\setlength{\mybindingcorrection}{#1}} 
\newlength{\myparindent}
\newlength{\myparskip}
\definecolor{stroke1}{HTML}{2574A9} 
\colorlet{captionlabel}{black}
\colorlet{footerpagenr}{black}
\colorlet{footerchapter}{stroke1}
\colorlet{footerchaptername}{black}
\colorlet{footersection}{stroke1}
\colorlet{footersectionname}{black}
\colorlet{chapternumber}{stroke1}
\newlength{\mypaperwidth}
\newlength{\mypaperheight}
\newlength{\mybodywidth}
\newlength{\mybodyheight}
\newlength{\myoutermargin}
\newlength{\mytopmargin}
\newlength{\myinnermargin}
\newlength{\mybottommargin}
\newcommand{\goldenratio}{1.618}
\newlength{\myheadsep} 
\newlength{\myfootskip} 
\newlength{\mymargininnersep} 
\newlength{\mymarginoutersep} 
\newlength{\mymarginwidth} 
\newlength{\mymarginwidthwithinnersep} 
\normalfont\textsf{\textbf{\color{footerchapter}\chaptername\ \thechapter}}
    \Ifstr{\rightmark}{\leftmark}%
    {%
        \begin{minipage}[b]{\mymarginwidth}%
            \small\raggedright\normalfont\textsf{\textbf{\color{footersection}Chapter\ \thechapter}}%
        \end{minipage}%
    }%
    {%
        \begin{minipage}[b]{\mymarginwidth}%
            \small\raggedright\normalfont\textsf{\textbf{\color{footersection}Section\ \thesection}}%
        \end{minipage}%
    }%
\normalfont\color{footerpagenr}\textbf{\thepage}%
\normalfont\color{footerpagenr}\textbf{\thepage}%
\newlength{\mytmpa}
\newlength{\mytmpb}
\renewcommand*{\partlineswithprefixformat}[3]%
{%
    #2
    \thispagestyle{empty}
    \setlength{\mytmpa}{0.618\mypaperwidth}%
    \setlength{\mytmpb}{0.382\mypaperheight}%
    \ifprintVersion
        \ifprofessionalPrint
            \setlength{\mytmpa}{0.618\mypaperwidth + \mybindingcorrection + \extraborderlength}%
            \setlength{\mytmpb}{0.382\mypaperheight + \extraborderlength}%
        \fi
    \fi
    \begin{tikzpicture}[overlay, remember picture]%
        \node [inner sep = 0, outer sep = 0, anchor = north] at (current page.north west)%
        {%
            \begin{tikzpicture}[overlay, remember picture]%
            \draw[color = stroke1, line width = 0.7 mm] (\mytmpa, 0) -- (\mytmpa, -\mytmpb);%
            \end{tikzpicture}%
        };%
        \node (align) [align = right, below = \mytmpb - 2 ex, inner sep = 0, outer sep = 0, anchor = north west] at (current page.north west)%
        {%
            \hspace{\mytmpa}\hspace{0.5 em}\partname\ \thepart\\[1 ex]
            \color{stroke1}#3%
        };%
    \end{tikzpicture}%
}
\renewcommand*{\chapterlinesformat}[3]%
{%
    \Ifnumbered{#1}{\setbool{chapterHasANumber}{true}}{\setbool{chapterHasANumber}{false}}%
    \Ifstr{#2}{}{\setbool{chapterHasAStar}{true}}{\setbool{chapterHasAStar}{false}}%
    \ifboolexpr{bool{chapterHasANumber} and not bool{chapterHasAStar}}%
    {%
        \begin{tikzpicture}[overlay, remember picture]%
            \node [right = \myinnermargin, below = \mytopmargin, inner sep = 0, outer sep = 0, anchor = north west] (numbernode) at (current page.north west)%
            {%
                \hspace{\myinnermargin}%
                \sffamily\fontsize{60}{60}\selectfont%
                \color{chapternumber}%
                \thechapter%
            };%
            \node [inner sep = 0, outer sep = 0, anchor = north west] at (numbernode.south west)%
            {%
                \begin{tikzpicture}[overlay, remember picture]%
                    \draw[color = stroke1, line width = 0.7 mm] (\myinnermargin, -1 ex) -- (\paperwidth, -1 ex);%
                \end{tikzpicture}%
            };%
            \node (align) [text width = \textwidth - 2 cm, align = right, right = \myinnermargin + \mybodywidth, inner sep = 0, outer sep = 0, anchor = east] at (numbernode.west)%
            {%
                #3%
            };%
        \end{tikzpicture}%
    }%
    {%
        \begin{tikzpicture}[overlay, remember picture]%
            \node [right = \myinnermargin, below = \mytopmargin, inner sep = 0, outer sep = 0, anchor = north west] (numbernode) at (current page.north west)%
            {%
                \hspace{\myinnermargin}%
                \sffamily\fontsize{60}{60}\selectfont%
                \color{white}%
                \thechapter%
            };%
            \node [inner sep = 0, outer sep = 0, anchor = north west] at (numbernode.south west)%
            {%
                \begin{tikzpicture}[overlay, remember picture]%
                    \draw[color = stroke1, line width = 0.7 mm] (\myinnermargin, -1 ex) -- (\paperwidth, -1 ex);%
                \end{tikzpicture}%
            };%
            \node (align) [align = left, right = \myinnermargin, inner sep = 0, outer sep = 0, anchor = south west] at (numbernode.south west)%
            {%
                #3%
            };%
        \end{tikzpicture}%
    }%
}
\DeclareCiteCommand{\conline}[\mkbibbrackets]
{\usebibmacro{prenote}}
{\usebibmacro{citeindex}%
  \usebibmacro{citenum}}
{\multicitedelim}
{\usebibmacro{postnote}}       
\newcommand*{\colloquialDegreeName}{Master}
\newcommand*{\colloquialDegreeNameLowercase}{master}
\newcommand*{\degreeAbbreviation}{M.}
    \renewcommand*{\colloquialDegreeName}{Bachelor}
    \renewcommand*{\colloquialDegreeNameLowercase}{bachelor}
    \renewcommand*{\degreeAbbreviation}{B.}
    \def\IfEmptyTF#1%
\relax\detokenize{#1}\relax%
\NewDocumentCommand{\mathOrText}{m}
{%
    \ensuremath{#1}\xspace%
}
\let\originalleft\left
\let\originalright\right
\renewcommand{\left}{\mathopen{}\mathclose\bgroup\originalleft}
\renewcommand{\right}{\aftergroup\egroup\originalright}
    \DeclareRobustCommand{\bfseries}%
    {%
        \not@math@alphabet\bfseries\mathbf%
        \fontseries\bfdefault\selectfont%
        \boldmath%
    }
\crefname{ineq}{inequality}{inequalities}
\crefname{term}{term}{terms}
\let\oldfootnote\footnote
\newlength{\spaceBeforeFootnote} 
\newlength{\spaceAfterFootnote}  
\RenewDocumentCommand{\footnote}{o o o m}%
{%
    \IfNoValueTF{#1}%
    {%
        \oldfootnote{#4}%
    }%
    {%
        \setlength{\spaceBeforeFootnote}{\IfEmptyTF{#1}{0}{#1} em}%
        \IfNoValueTF{#2}%
        {%
            \hspace*{\spaceBeforeFootnote}\oldfootnote{#4}%
        }%
        {%
            \setlength{\spaceAfterFootnote}{\IfEmptyTF{#2}{0}{#2} em}%
            \hspace*{\spaceBeforeFootnote}\IfNoValueTF{#3}{\oldfootnote{#4}}{\oldfootnote[#3]{#4}}\hspace*{\spaceAfterFootnote}%
        }%
    }%
}
    \declaretheoremstyle
    [
        spaceabove = \topsep,
        spacebelow = \topsep,
        headfont = \bfseries,
        headformat = \textcolor{stroke1}{$\blacktriangleright$} \NAME~\NUMBER \NOTE,
        notefont = \bfseries,
        notebraces = {(}{)},
        bodyfont = \normalfont,
        postheadspace = 0.5 em,
        qed = \textcolor{stroke1}{\bfseries$\blacktriangleleft$},
    ]
    {myTheoremStyle}
    \declaretheorem
    [
        style = myTheoremStyle,
        name = Conjecture,
        numberwithin = section,
    ]
    {conjecture}
    \declaretheorem
    [
        style = myTheoremStyle,
        name = Proposition,
        sharenumber = conjecture,
    ]
    {proposition}
    \declaretheorem
    [
        style = myTheoremStyle,
        name = Claim,
    ]
    {claim}
    \declaretheorem
    [
        style = myTheoremStyle,
        name = Lemma,
        sharenumber = conjecture,
    ]
    {lemma}
    \declaretheorem
    [
        style = myTheoremStyle,
        name = Corollary,
        sharenumber = conjecture,
    ]
    {corollary}
    \declaretheorem
    [
        style = myTheoremStyle,
        name = Theorem,
        sharenumber = conjecture,
    ]
    {theorem}
    \declaretheorem
    [
        style = myTheoremStyle,
        name = Definition,
        sharenumber = conjecture,
    ]
    {definition}
    \declaretheorem
    [
        style = myTheoremStyle,
        name = Example,
        sharenumber = conjecture,
    ]
    {example}
    \declaretheorem
    [
        style = myTheoremStyle,
        name = Remark,
        sharenumber = conjecture,
    ]
    {remark}
        \declaretheorem
    [
        style = myTheoremStyle,
        name = Observation,
        sharenumber = conjecture,
    ]
    {observation}
    \theoremstyle{plain}
    \newtheorem{conjecture}{Conjecture}[chapter]
    \newtheorem{claim}[conjecture]{Claim}
    \newtheorem{lemma}[conjecture]{Lemma}
    \newtheorem{corollary}[conjecture]{Corollary}
    \newtheorem{theorem}[conjecture]{Theorem}
    \newtheorem{definition}[conjecture]{Definition}
\NewDocumentCommand{\functionTemplate}{m m m m o}%
{%
    \IfNoValueTF{#5}%
    {%
        \mathOrText{#1\left#2{#4}\right#3}%
    }%
    {%
        \mathOrText{#1#5#2{#4}#5#3}%
    }%
}
\newcommand*{\leftBracketType}{(}
\newcommand*{\rightBracketType}{)}
\NewDocumentCommand{\createFunction}{m m o o}%
{%
    \renewcommand*{\leftBracketType}{\IfNoValueTF{#3}{(}{#3}}%
    \renewcommand*{\rightBracketType}{\IfNoValueTF{#4}{)}{#4}}%
    \NewDocumentCommand{#1}{o o}%
    {%
        \IfNoValueTF{##1}%
        {%
            \mathOrText{#2}%
        }%
        {%
            \functionTemplate{#2}{\leftBracketType}{\rightBracketType}{##1}[##2]%
        }%
    }%
}
\DeclareDocumentCommand{\probabilisticFunctionTemplate}{m m O{} o}
{%
    \functionTemplate{#1}%
    {\lbrack}%
    {\rbrack}%
    {#2\IfEmptyTF{#3}{}{\ \IfNoValueTF{#4}{\left}{#4}\vert\ \vphantom{#2}#3\IfNoValueTF{#4}{\right.}{}}}%
    [#4]%
}
    \newcommand*{\indicatorFunctionSymbol}{\mathbf{1}}
    \newcommand*{\indicatorFunctionSymbol}{\mathds{1}}
\RenewDocumentCommand{\Pr}{m O{} o}%
{%
    \probabilisticFunctionTemplate{\mathrm{Pr}}{#1}[#2][#3]%
}
\NewDocumentCommand{\E}{m O{} o}%
{%
    \probabilisticFunctionTemplate{\mathrm{E}}{#1}[#2][#3]%
}
\NewDocumentCommand{\Var}{m O{} o}%
{%
    \probabilisticFunctionTemplate{\mathrm{Var}}{#1}[#2][#3]%
}
\DeclareDocumentCommand{\bigO}{m o}%
{%
    \functionTemplate{\mathcal{O}}{(}{)}{#1}[#2]%
}
\DeclareDocumentCommand{\smallO}{m o}%
{%
    \functionTemplate{\mathrm{o}}{(}{)}{#1}[#2]%
}
\DeclareDocumentCommand{\bigTheta}{m o}%
{%
    \functionTemplate{\upTheta}{(}{)}{#1}[#2]%
}
\DeclareDocumentCommand{\bigOmega}{m o}%
{%
    \functionTemplate{\upOmega}{(}{)}{#1}[#2]%
}
\DeclareDocumentCommand{\smallOmega}{m o}%
{%
    \functionTemplate{\upomega}{(}{)}{#1}[#2]%
}
\DeclareDocumentCommand{\eulerE}{o}%
{%
    \mathOrText{\mathrm{e}\IfNoValueTF{#1}{}{^{#1}}}%
}
\DeclareDocumentCommand{\poly}{m o}%
{%
    \functionTemplate{\mathrm{poly}}{(}{)}{#1}[#2]%
}
\createFunction{\id}{\mathrm{id}}
\NewDocumentCommand{\ind}{m o o}%
{%
    \IfNoValueTF{#2}%
    {%
        \mathOrText{\indicatorFunctionSymbol_{#1}}%
    }%
    {%
        \functionTemplate{\indicatorFunctionSymbol_{#1}}{(}{)}{#2}[#3]%
    }%
}
\DeclareDocumentCommand{\dom}{m o}%
{%
    \functionTemplate{\mathrm{dom}}{(}{)}{#1}[#2]%
}
\DeclareDocumentCommand{\rng}{m o}%
{%
    \functionTemplate{\mathrm{rng}}{(}{)}{#1}[#2]%
}
\DeclareDocumentCommand{\d}{o}%
{%
    \mathrm{d}\IfNoValueTF{#1}{}{^{#1}}%
}
\DeclareDocumentCommand{\set}{m m o}%
{
    \mathOrText{\IfNoValueTF{#3}{\left}{#3}\{#1\ \IfNoValueTF{#3}{\left}{#3}\vert\
    \vphantom{#1}#2\IfNoValueTF{#3}{\right.}{}\IfNoValueTF{#3}{\right}{#3}\}}
}      
\preto\tabular{\setcounter{magicrownumbers}{0}}
\newcounter{magicrownumbers}
\newcommand{\sset}[1]{\{#1\}}
\newcommand{\msset}[1]{\{\,#1\,\}}
\newcommand{\abs}[1]{\left\lvert #1 \right\rvert}
\newcommand*{\Path}[2]{\mathOrText{{P}_{#1, #2}}}
\newcommand*{\PathP}{\mathOrText{{P}}}
\newcommand*{\alts}{\mathOrText{A}}
\newcommand*{\prefs}{\mathOrText{\mathbf{P}}}
\newcommand*{\allPrefs}{\mathOrText{\mathbb{P}}}
\newcommand{\mpref}[1]{\mathOrText{m_\prefs(#1)}}
\newcommand{\m}[1]{\mathOrText{m(#1)}}
\newcommand*{\mG}{\mathOrText{\mathcal{M}}}
\newcommand*{\mRV}{\mathOrText{\mathcal{M}^{RV}}}
\newcommand*{\mRVi}[1][i]{\mathOrText{\mathcal{M}^{RV}[ #1 ]}}
\newcommand*{\allDescOrders}{\mathOrText{\mathbb{O}_{\downarrow}}}
\newcommand*{\ine}{\mathOrText{\upGamma^{-}}}
\newcommand*{\msRV}{\mathOrText{\mathcal{M}^\textsc{sRV}\xspace}} 
\newcommand{\msRVi}[1][i]{\mathOrText{\mathcal{M}^\textsc{sRV}[ #1 ]\xspace}} 
\newcommand{\msRVm}[1][k]{\mathOrText{\mathcal{M}^\textsc{sRV}[m \geq #1]\xspace}} 
\newcommand{\EsRVm}[1][k]{\mathOrText{{E}^\textsc{sRV}[m \geq #1]\xspace}} 
\newcommand{\msRVgm}[1][k]{\mathOrText{\mathcal{M}^\textsc{sRV}[m > #1]\xspace}} 
\newcommand{\EsRVgm}[1][k]{\mathOrText{{E}^\textsc{sRV}[m > #1]\xspace}} 
\newcommand*{\wcand}{\mathOrText{w\xspace}}
\newcommand{\rspt}{\textormath{$T^{RSP}$\xspace}{T^{RSP}\xspace}}
\newcommand*{\rsptf}{\textit{recursively strongest path tree}\xspace}
\newcommand*{\ourPrim}{\textsc{directed Prim}\xspace}
\newcommand*{\Vrem}{\mathOrText{V \setminus S \xspace}}
\newcommand{\rvt}{\textormath{$T^{RV}(\ospecial)$\xspace}{T^{RV}(\ospecial)\xspace}}
\newcommand{\rvti}[1][i]{\textormath{$T^{RV}[#1](\ospecial)$\xspace}{T^{RV}[ #1 ](\ospecial)\xspace}}
\newcommand{\ospecial}{\mathOrText{{\lino}^{\textsc{RSP}}\xspace}}
\newcommand*{\lino}{\mathOrText{o}}
\newcommand*{\minweight}{\varrho}
\newcommand*{\eiin}{e_{\geq \text{in}}}
\newcommand*{\ebr}{e_{\text{break}}}
\newcommand*{\ebrm}{e_{\text{break}}} 
\DeclareMathOperator{\RV}{RV}
\DeclareMathOperator{\RVPUT}{RV-PUT}
\DeclareMathOperator{\str}{strength}
\newcommand*{\setOrdering}{\mathOrText{\textsc{setOrdering}}} 
\newcommand*{\EsRV}{\mathOrText{{E}^\textsc{sRV}\xspace}}
\newcommand*{\ERV}{\mathOrText{{E}^\textsc{RV}\xspace}}
\newcommand*{\ERVo}{\mathOrText{{E}^\textsc{RV}(\ospecial)\xspace}}
\newcommand*{\Ervt}{\mathOrText{{E}^\textsc{RV}(\ospecial)\xspace}} 
\newcommand*{\Ervti}[1][i]{\mathOrText{{E}^\textsc{RV}[ #1 ](\ospecial)\xspace}} 
\newcommand*{\Erspt}{\mathOrText{{E}(\rspt)\xspace}}
\newcommand*{\iCond}{\text{[RV-BC]}\xspace} 
\newcommand*{\iiCond}{\text{[RV-CC]}\xspace} 
\newcommand*{\sRVbranchingCond}{\text{[semi-RV-BC]}\xspace} 
\newcommand*{\sRVcycleCond}{\text{[semi-RV-CC]}\xspace} 
\newcommand*{\rvPutCheck}{\textsc{Constructive $\RVPUT$ Check}\xspace}
\newcommand*{\possible}{\textit{possible}}
\newcommand*{\impossible}{\textit{impossible}}
\newcommand*{\certain}{\textit{certain}}
\newcommand*{\uncertain}{\textit{uncertain}}
\newcommand*{\choosable}{\textit{choosable}}
\newcommand*{\breakable}{\textit{breakable}}
\newlist{casesproof}{enumerate}{6}
\setlist[casesproof]{label*=\arabic*.} 
\begin{document}

    \frontmatter

\ifprintVersion
    \ifprofessionalPrint
        \newgeometry
        {
            textwidth = 134 mm,
            textheight = 220 mm,
            top = 38 mm + \extraborderlength,
            inner = 38 mm + \mybindingcorrection + \extraborderlength,
        }
    \else
        \newgeometry
        {
            textwidth = 134 mm,
            textheight = 220 mm,
            top = 38 mm,
            inner = 38 mm + \mybindingcorrection,
        }
    \fi
\else
    \newgeometry
    {
        textwidth = 134 mm,
        textheight = 220 mm,
        top = 38 mm,
        inner = 38 mm,
    }
\fi

\begin{titlepage}
    \sffamily
    \begin{center}
        \includegraphics[height = 3.2 cm]{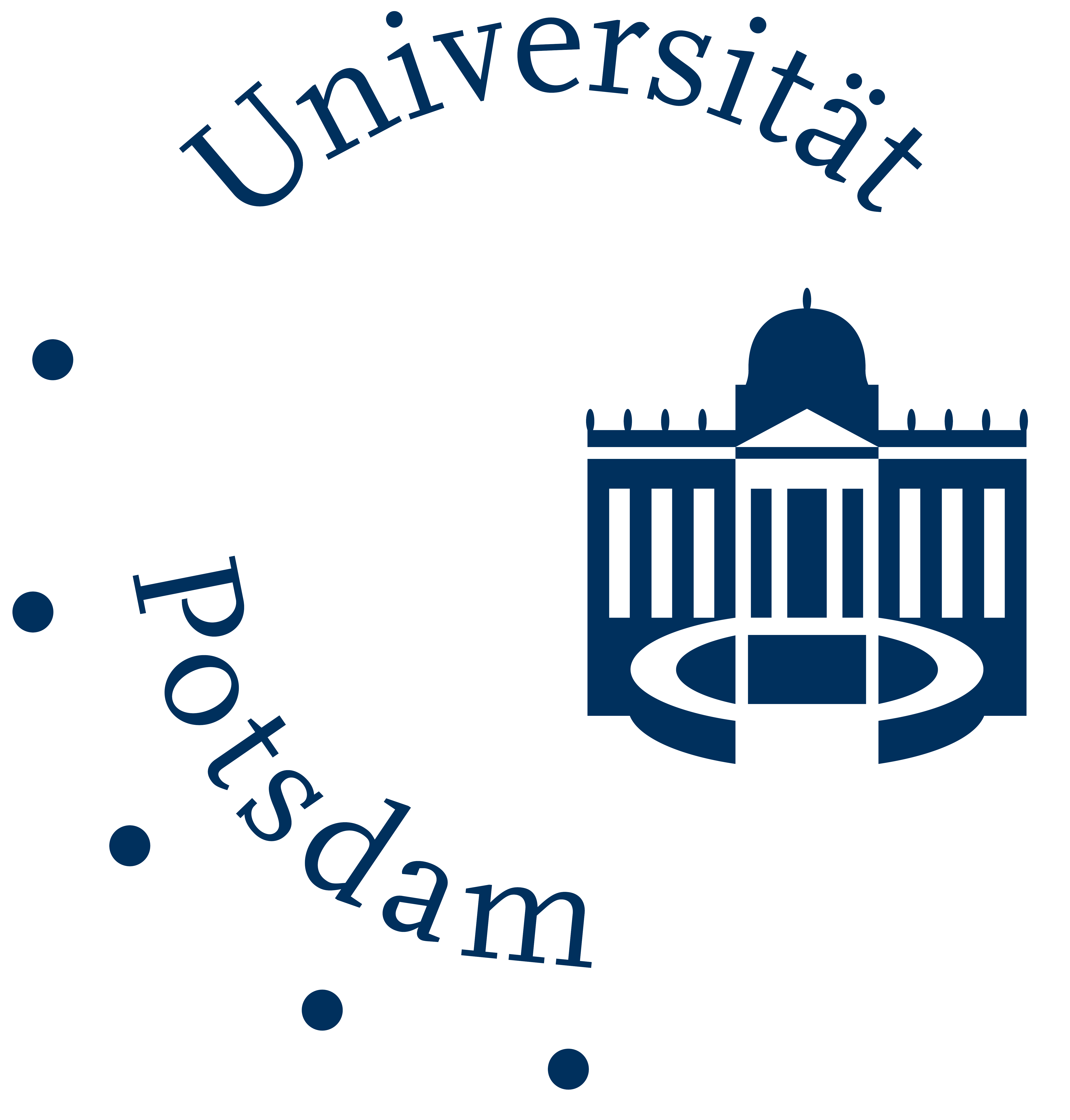} \hfill \includegraphics[height = 3 cm]{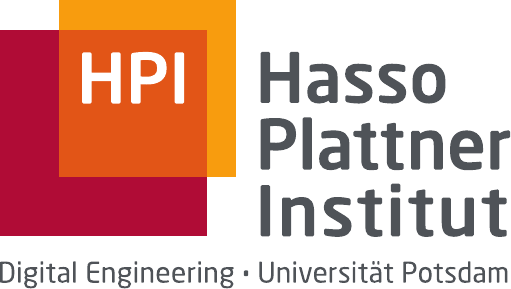}\\
        \vfil
        {\LARGE
            \rule[1 ex]{\textwidth}{1.5 pt}
            \onehalfspacing\printTitleBold\\[1 ex]
            {\vspace*{-1 ex}\Large \printGermanTitle}\\
            \rule[-1 ex]{\textwidth}{1.5 pt}
        }
        \vfil
        {\Large\textbf{\printAuthor}}
        \vfil
        {\large Universitäts\colloquialDegreeNameLowercase arbeit\\[0.25 ex]
        zur Erlangung des akademischen Grades}\\[0.25 ex]
        \bigskip
        {\Large \colloquialDegreeName{} of Science}\\[0.5 ex]
        {\large\emph{(\degreeAbbreviation\,Sc.)}}\\
        \bigskip
        {\large im Studiengang\\[0.25 ex]
        \printProgram}
        \vfil
        {\large eingereicht am \printDateReceived{} am\\[0.25 ex]
        Fachgebiet Algorithm Engineering der\\[0.25 ex]
        Digital-Engineering-Fakultät\\[0.25 ex]
        der Universität Potsdam}
    \end{center}
    
    \vfil
    \begin{table}[h]
        \centering
        \large
        \sffamily 
        {\def\arraystretch{1.2}
            \begin{tabular}{>{\bfseries}p{3.8 cm}p{5.3 cm}}
                Gutachter               & \printNameOfSupervisor\\
                Betreuende               & \printAdditionalExaminers
            \end{tabular}
        }
    \end{table}
\end{titlepage}

\restoregeometry





    \pagestyle{plain}

    \addchap{Abstract}
    Democracy relies on making collective decisions through voting. In addition, voting procedures have further applications, for example in the training of artificial intelligence. An essential criterion for determining the winner of a fair election is that all alternatives are treated equally: this is called \textit{neutrality}. The established \textit{Ranked Pairs} voting method cannot simultaneously guarantee neutrality and be computationally tractable for election with ties. \textit{River}, the recently introduced voting method, shares desirable properties with Ranked Pairs and has further advantages, such as a new property related to resistance against manipulation. Both Ranked Pairs and River use a weighted margin graph to model the election. In such a graph the vertices represent the alternatives and the edges represent how many voters prefer one alternative over another.

River can be computed as follows: iterate over the edges in descending order and insert an edge into the (initially empty) \textit{River diagram}, unless it creates a cycle or its target alternative already has an incoming edge. This computation is simple enough to do by hand.

Ties in the election can lead to edges of equal margin. To order the edges in such a case, a tiebreaking scheme must be employed. Different properties of the River voting method depend on which tiebreak is used. Many tiebreaks violate neutrality or other important properties. A tiebreaking scheme that preserves neutrality is \textit{Parallel Universe Tiebreaking} (PUT). Ranked Pairs with PUT is NP-hard to compute.

The main result of this thesis shows that River with PUT can be computed in polynomial worst-case runtime: We can check whether an alternative is a River PUT winner, by running River with a specially constructed ordering of the edges. To construct this ordering, we introduce the semi-River diagram which contains the edges that can appear in any River diagram for some arbitrary tiebreak. Next, we compute a tree that consists of strongest paths in the semi-River diagram from a selected alternative. Our specially constructed order puts edges in this tree before other equally weighted edges. We show that the selected alternative is a winner of River with the specially constructed order, if and only if it is a River PUT winner in the election.

To optimize this process, we improve the previous naïve runtime of River from $\bigO(n^4)$ to $\bigO(n^2 \cdot \log n)$, where $n$ is the number of alternatives.

    \selectlanguage{ngerman}
    \addchap{Zusammenfassung}
Die individuellen Präferenzen Einzelner mit Hilfe von Wahlen zu einer kollektiven Entscheidung zu vereinen ist die Grundlage der Demokratie. Es gibt auch weitere Anwendungsfälle hierfür, zum Beispiel beim Training von Künstlicher Intelligenz.
Ein essenzielles Kriterium für faire Wahlen ist, dass Chancengleichheit zwischen allen Alternativen besteht.
Dieses Kriterium wird als Neutralität bezeichnet.
Bei Wahlen mit Gleichständen kann die etablierte \textit{Ranked Pairs} Wahlmethode nicht gleichzeitig Neutralität gewährleisten und effizient berechenbar sein.
Die kürzlich entwickelte \textit{River} Wahlmethode hat verbesserte Resistenz gegen Wahlmanipulation und erfüllt ähnliche wünschenswerte Eigenschaften wie Ranked Pairs.
Sowohl River als auch Ranked Pairs nutzen einen Margin Graphen, welcher die Alternativen durch Knoten repräsentiert und eine gewichtete gerichtete Kante von einer Alternative zur anderen aussagt, wie viele Wählende die eine Alternative über die andere präferieren.
River wird berechnet, indem über die Kanten in absteigender Reihenfolge des Gewichts iteriert wird, und eine Kante nur in das anfangs leere River Diagramm eingefügt wird, falls es mit dieser Kante noch azyklisch ist und jeder Knoten maximal eine eingehende Kante hat.
Dieser Prozess ist einfach genug um ihn auch ohne Computer zu berechnen.
Kanten im Margin Graphen können das gleiche Gewicht haben, was Gleichstände in der repräsentieren Wahl darstellt. In diesem Fall muss ein Tiebreak eingesetzt werden, um diese Kanten zu sortieren. Mehrere Eigenschaften  von River hängen von den Eigenschaften dieses Tiebreaks ab und viele Tiebreaks verletzen die Neutralität der Wahl. Der Parallel Universe Tiebreak erhält allerdings die Neutralität.

Das Hauptergebnis dieser Arbeit ist, dass River mit dem Parallel Universe Tiebreak (River PUT) mit polynomieller worst-case Laufzeit berechenbar ist. Dies ist anders als bei Ranked Pairs.
Wir erreichen diese Laufzeit durch das Ausführen von River für jede Alternative mit einer speziellen Sortierung der Kanten.
Um diese Sortierung zu konstruieren, erstellen wir zunächst das semi-River Diagramm welches Kanten enthält die in einem River Diagramm mit einem beliebigen Tiebreak sind. Dann berechnen wir einen Teilbaum von diesem Graphen, der die stärksten Pfade von einer gewählten Alternative enthält. Unser speziell konstruierter Tiebreak sortiert Kanten, welche in diesem Baum sind vor anderen mit gleichem Gewicht.
Wir beweisen, dass die gewählte Alternative von River mit dieser speziellen Sortierung der Kanten als beste Alternative ausgewählt wird, genau dann, wenn sie auch von River PUT ausgewählt wird. 

Um einen Teil dieses Prozesses zu optimieren, stellen wir einen Algorithmus vor der River in $\bigO{n^2 \cdot \log n}$ mit der Anzahl der Alternativen $n$ berechnet, was eine Verbesserung gegenüber der naiven Laufzeit von River in $\bigO{n^4}$ darstellt.
    \selectlanguage{american}

    \addchap{Acknowledgments}
First and foremost I have to thank my advisors, Michelle and Stefan. Reading my final version it is very clear how significantly your feedback and advice improved this thesis and my abilities.
I would also like to thank my office mates during the bachelor's project and during the three final weeks for reminding me to eat and take breaks.
Of course, my family and friends enabled me to work on this thesis as much as I did by providing distractions through long adventures and short volleyball matches. 

    \renewcommand*\contentsname{Table Of Contents}
    \setuptoc{toc}{totoc}
    \tableofcontents

    \pagestyle{headings}
    \mainmatter
    \chapter{Introduction}


\section*{Motivation}
Democracy is the ''government by the people'', where citizens have equal rights and collectively make decisions \cite{oxfordenglishdictionaryDemocracy2023}.
Luckily, \SI{45.4}{\percent} of the world's population lives in a democracy. 
However, even though almost \SI{25}{\percent} of the world population is expected to vote in 2024, only roughly \SI{8}{\percent} of the global population live in a full democracy, based on truly free and fair elections \cite[Table 1]{eiulimitedDemocracyIndex2023} \cite{buchholzInfographic2024Super2024}. 
Elections are clearly integral to fully functioning democracies. Thus, the \textit{voting methods} that are used to determine the outcome of those elections need to be closely examined.
Doing this in a scientific way is one aspect of Social Choice Theory.
The field studies how we can ''aggregat[e] individual preferences toward a collective choice'' using multidisciplinary perspectives from mathematics, economics, political science, and most recently computer science  \cite[Ch. 1]{brandtHandbookComputationalSocial2016}.
How a winner of an election is determined can be modeled as a mathematical function, where the preferences of the voters over the alternatives are given and the winner of the election is selected among the alternatives. This is called a social choice function. We will use the terms social choice function and voting method interchangeably.
Which voting method is used, has a crucial impact on the fairness of the election and its effectiveness in selecting the best alternative.

Aggregating preferences toward a collective choice is also important beyond political elections. One example is the curation of outputs of large language models to train artificial intelligence \cite{10.5555/3666122.3668186}. 

The most widely used voting methods in politics are based on Plurality voting. The democratic system in Germany uses Plurality voting \cite{brandtComputationalSocialChoice2023}. In 1997 Plurality-majority electoral systems represented \SI{59}{\percent} of the global voting population (2.44 billion people) \cite{theelectoralknowledgenetworkGlobalDistributionElectoral2005}.
In Plurality voting a voter can give one vote to one of the candidates. The candidate with the highest number of votes wins. This has many errors.

\citeauthor{hamlinTopWaysPlurality2015}, an organization that promotes a different form of voting, argues that having to pick one top choice is very restricting on the voter's expression of their opinions.
It is prone to the ''Spoiler Effect'', where a candidate with very few votes, might take votes away from a vastly more popular candidate, who would have otherwise had an absolute majority.
Voters who want to avoid this effect, might misrepresent their opinion and not vote for their favorite candidate. 
This could also make it more difficult for new candidates to gain votes. 
Plurality voting may also disadvantage similar candidates, leading to the ''Centre-Squeeze Effect'', where moderate votes are split between many moderate candidates, while extremist candidates appear to get more support. \cite{hamlinTopWaysPlurality2015}

Social Choice experts often disagree about the best voting method. However, they do agree that Plurality voting is not a good choice for democratic elections. This is also shown by a small poll among 22 experts in 2010, where no one approved Plurality voting \cite{laslierLoserPluralityVoting2012}.

\begin{figure}[p]
    \centering
    \includegraphics[width=\linewidth]{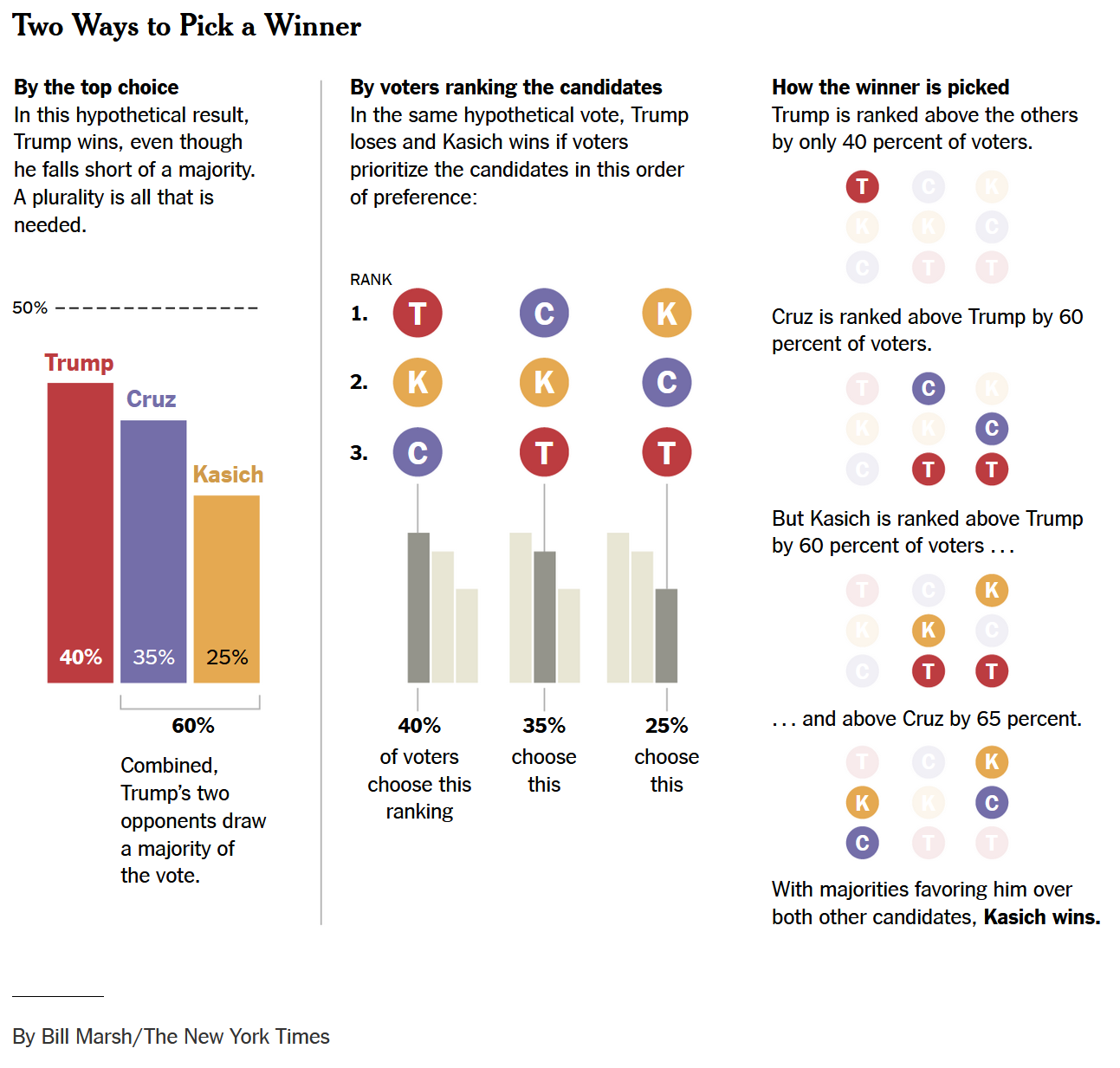}
    \caption{A real-world example where the Plurality voting result might have been against the preferences of a majority of voters, according to an opinion article by the economics Nobel laureates \textcite{maskinOpinionHowMajority2016a}. It also illustrates the difference between plurality voting (left) and ranked voting (right).}
    \label{fig:nytTrumpExamplePlurality}
\end{figure}
One of many real-world instances where Plurality voting may have produced a result that a majority of people found undesirable, is the 2016 Republican Primary Election. Donald Trump won, even though John Kasich may have been the choice that satisfied the most voters, while Donald Trump may have been the least favorite candidate of a majority of people \cite{maskinOpinionHowMajority2016a}. This is illustrated in \Cref{fig:nytTrumpExamplePlurality}. The figure also shows why a different voting system may have been better able to represent the collective opinion.

\section*{A Chronological Introduction to Social Choice}
Early discussions of democracy and elections already occurred in ancient Rome. The initial scientific approach occurred centuries later, during the Enlightenment.
In the 20th century came a more mathematical approach, based on axiomatic reasoning.
Through the later combination with computer science, the field of computational social choice came into existence. \cite{brandtHandbookComputationalSocial2016}  


\subsection*{Core Ideas of Social Choice Theory}
Two important figures in the early era of Social Choice Theory are Nicolas Condorcet (1743-1794) and Jean-Charles de Borda (1733-1799). While de Borda had the initial ideas for \textit{positional scoring rules}, Condorcet argued for \textit{pairwise majority comparisons}.


Both concepts require more information about the voters preferences than we may be familiar with from Plurality voting. In an election, a voter has preferences among the available alternatives. Thus, the ballot of a voter we consider in this work does not only contain information about their favorite alternative but clarifies their exact preference order among all alternatives. An illustration of the information this gives us for an election - or as we will later call it: The preference profile - can be seen in \Cref{tab:examplePreferenceProfile}. In \Cref{tab:examplePreferenceProfile} the first column shows, that 4 of 11 voters think Alice is the best alternative, Bob the second best, and Charlie the third best.
\begin{figure}[p]
\centering
\begin{subfigure}[b]{0.475\textwidth}
        \centering
        \begin{tabular}{@{}cccc@{}}
        \toprule
        \multicolumn{1}{c}{4} & \multicolumn{1}{c}{3} & \multicolumn{1}{c}{2} & \multicolumn{1}{c}{2} \\ \midrule
        Alice                 & Bob                  & Bob                  & Charlie                 \\
        Bob                  & Charlie                 & Alice                 & Alice                 \\
        Charlie                 & Alice                 & Charlie                 & Bob                  \\ \bottomrule
    \end{tabular}
    \caption{\small 3 alternatives; 11 voters \cite[Ch. 1]{brandtHandbookComputationalSocial2016}}
    \label{tab:examplePreferenceProfile}
\end{subfigure}
\hfill
\begin{subfigure}[b]{0.475\textwidth}  
    \centering 
    \begin{tabular}{@{}ccccc@{}}
        \toprule
        \multicolumn{1}{c}{4} & \multicolumn{1}{c}{3} & \multicolumn{1}{c}{2} & \multicolumn{1}{c}{4} \\ \midrule
        Alice                 & Bob                  & Bob                  & Charlie                 \\
        Bob                  & Charlie                 & Alice                 & Alice                 \\
        Charlie                 & Alice                 & Charlie                 & Bob                  \\ \bottomrule
    \end{tabular}
    \caption{\small 3 alternatives; 13 voters \cite[Ch. 1]{brandtHandbookComputationalSocial2016}}
    \label{tab:examplePreferenceProfileMod}\end{subfigure}
\vskip\baselineskip
\begin{subfigure}[b]{0.475\textwidth}   
        \centering 
    \includegraphics[scale=2]{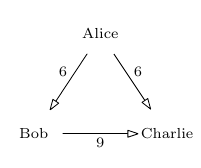}
    
    \caption{\small The margin graph of the election from \Cref{tab:examplePreferenceProfile}. The edges are the result of the pairwise majority comparison between the two alternatives they connect. Here Alice is the Condorcet winner.}
    \label{fig:exampleMarginGraph}
\end{subfigure}
\hfill
\begin{subfigure}[b]{0.475\textwidth}   
    \centering 
    \includegraphics[scale=2]{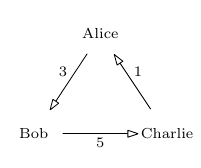}
    
    \caption{\small The margin graph of the modified election from \Cref{tab:examplePreferenceProfileMod}. The edges are the result of the pairwise majority comparison between the two alternatives they connect. Here there is no Condorcet winner.}
    \label{fig:exampleMarginGraphMod}
\end{subfigure}
\vskip\baselineskip
\begin{subfigure}[b]{0.475\textwidth}   
        \centering 
    \begin{tabular}{ll}
            \textbf{Plurality}  & Bob with\\
            & $0\cdot4+1\cdot3+1\cdot2+0\cdot2$\\
            & $=5$ Points\\
            \textbf{Borda}      & Bob with\\
            & $2\cdot4+3\cdot3+3\cdot2+1\cdot2$\\
            & $=25$ Points
        \end{tabular}   
    \caption{\small The positional scoring rule results of \Cref{tab:examplePreferenceProfile}.}
    \label{fig:daBordaAndPlurality}
\end{subfigure}
\hfill
\begin{subfigure}[b]{0.475\textwidth}   
        \centering 
    \begin{tabular}{ll}
            \textbf{Plurality}  & Bob with\\
            & $0\cdot4+1\cdot3+1\cdot2+0\cdot4$\\
            & $=5$ Points\\
            \textbf{Borda}      & Bob and Alice with\\
            & $2\cdot4+3\cdot3+3\cdot2+1\cdot4$\\
            & $3\cdot4+1\cdot3+2\cdot2+2\cdot4$\\
            & $=27$ Points
        \end{tabular}   
    \caption{\small The positional scoring rule results of \Cref{tab:examplePreferenceProfileMod}.}
    \label{fig:daBordaAndPluralityMod}
\end{subfigure}
\caption[ Examples of Elections ]
        {\small Two examples of elections. One with an Condorcet winner (left) and one to illustrate the Condorcet paradox (right) } 
        \label{fig:introExamplesCondorcet}
\end{figure}
Positional scoring rules use this order to award points to an alternative based on its placement in the preference for every individual voter. In an election with $m$ alternatives for each voter, the \textit{Borda rule} gives $m$ points to the alternative ranked first, $m-1$ points to the alternative ranked second, and so on. The alternative with the most points summed over all voters wins the election. 
Plurality voting can be seen as a positional scoring rule: In Plurality voting, only the favorite alternative of each voter gets $1$ point, and the alternative with the most points wins.

{Condorcet critiqued the Borda rule and Plurality voting. One example of why is the election in \Cref{tab:examplePreferenceProfile}. Bob is the winner of this election under the Borda rule and with Plurality voting in \Cref{fig:daBordaAndPlurality}. However, a majority of 6 voters (more than half of the voters) prefer Alice to Bob.}
This is why Condorcet proposed to extract how an election between only two of the alternatives would have gone from the ranked ballots. This comparison is illustrated by the margin graph in \Cref{fig:exampleMarginGraph}. In a \textit{margin graph}, the vertices represent the alternatives and the directed edges represent one alternative beating another alternative in the pairwise majority comparison with a certain margin. By calculating these pairwise majorities for every pair of alternatives the winner using this method is sometimes very hard to dispute:
In \Cref{tab:examplePreferenceProfile} 6 voters also prefer Alice to Charlie and thus Alice wins all pairwise majority comparisons. Alice is called the \textit{Condorcet winner} of this election. In \Cref{fig:exampleMarginGraph} we can see that Alice is the Condorcet winner because they only have outgoing arrows.
The \Cref{fig:nytTrumpExamplePlurality} also illustrates this concept for the previous real-world example.

However, sometimes the collective preference of the voters in an election might be ''irrational''. With two additional voters preferring Charlie over Alice over Bob added to the preference profile in \Cref{tab:examplePreferenceProfile}, there no longer is a Condorcet winner. This modified election can be seen in \Cref{tab:examplePreferenceProfileMod}.
In the margin graph \Cref{fig:exampleMarginGraphMod} there is this contradictory cycle, where the voters collectively prefer Alice over Bob and Bob over Charlie, but also Charlie over Alice. This is called the \textit{Condorcet paradox}, where there is no Condorcet winner. There are many different voting methods proposed to resolve this. Voting methods, which result in the Condorcet winner if there is one, are called \textit{Condorcet consistent}. 

\subsection*{The Axiomatic Approach}
Well after these early ideas came the classical period of Social Choice Theory in the 1950s. This era is characterized by a rigorous mathematical approach. Most predominantly put forward by Kenneth Arrow who developed a formal method to ''speak about and analyz[e] all possible [voting] rules'' \cite[Ch. 1]{brandtHandbookComputationalSocial2016}. 
This axiomatic approach in Social Choice Theory uses mathematical formulations of rules or properties that a voting method should satisfy (referred to as axioms). The axioms are the result of philosophical 
discussions and can be used as mathematical statements. From these the logical conclusions can be explored, using formal mathematical reasoning \cite[Ch. 1]{brandtHandbookComputationalSocial2016}.
This approach allows for better comparison of different voting rules, by examining which properties or axioms they satisfy.
One axiom we already discussed is Condorcet consistency, where a social choice function always selects the Condorcet winner if there is one.
The ''Spoiler Effect'' and the ''Center-Squeeze Effect'', which Plurality voting has, are also modeled by properties to prevent them in Social Choice Theory: They are intuitively comparable to the Independence of Irrelevant Alternatives (IIA) and the Independence of Clones (IoC) respectively. For some voting methods, it can be mathematically proven, that they are resistant to these negative effects.
Further basic but essential properties are \textit{anonymity} and \textit{neutrality}, where all voters and alternatives respectively are treated equally.
Another example of a desirable property of social choice functions is \textit{monotonicity}, where if some voter's opinion of a winning alternative improves and it is put in a higher position on that voter's ballot, then this alternative is still a winner \cite{doringFlowingFairnessRiver2024}.

However, Arrows research also shows that finding the ultimate voting method is not possible. 
In his famous theorem, he shows that no ranked voting method can satisfy \textit{positive association} (which has a similar intuition to monotonicity), IIA and \textit{non-dictatorship} or \textit{non-imposition} at the same time \cite[Possibility Theorem]{arrowDifficultyConceptSocial1950}. This means that choosing a voting method is often a trade-off between different axioms. Thus, finding methods with better trade-offs is still an active research field. 



\subsection*{Computational Social Choice}
With the rise of computers came the ''Computational Turn''  \cite[Ch. 1]{brandtHandbookComputationalSocial2016}. Most previous research ''neglected the computational effort ''  \cite[Ch. 1]{brandtHandbookComputationalSocial2016} needed to compute the results of elections with a proposed voting method. However, this is essential to implementing a voting method for real-world applications. A method taking years to compute a winner cannot find practical application in democratic elections.
 \cite[Ch. 1]{brandtHandbookComputationalSocial2016}

This work can be attributed to the field of Computational Social Choice Theory, as we examine the computational complexity of a novel voting method.
For a more extensive history of the field, refer to \cite{brandtHandbookComputationalSocial2016} or \cite{listSocialChoiceTheory2022a}.
\section*{Computational Complexity}
Here we will give a brief introduction to the Complexity Theory relevant for this work. For a more extensive explanation, refer to \cite[Ch. 7]{sipserIntroductionTheoryComputation2013}.
A solution for a \textit{decision problem} in the computational sense is, whether or not some element is part of a set with certain properties. How hard a decision problem is, depends on the defined properties of this set. An example of a decision problem in this thesis is, whether an alternative in an election, given the preferences of the voter, is in the set of winning alternatives of a social choice function.
For our purposes, a \textit{complexity class} is a set of decision problems, where deciding the answer for a given input has at most a certain computational complexity.

The measure of computational complexity relevant for this work is the \textit{worst-case runtime complexity}. This means that the number of operations an algorithm does to compute the solution for a problem with any input is bounded by some function. This is usually expressed by saying that an algorithm's runtime is in $\bigO{f}$ for some function $f$ of the input. If this function grows quickly, executing the algorithm may take a long time, making it unusable for practical purposes.

The most famous complexity classes are \textit{P} and \textit{NP}. The problems in P are solvable with a runtime polynomial in the size of the input, using only deterministic computation. This means for a decisions problem in P, there exists an algorithm for a ''normal'' computer, which can decide if an input is a member of the set of the problem or not, with a worst-case runtime that is a polynomial function of the input size. The problems in NP are solvable with a polynomial runtime using non-deterministic computation. Since no way is known to do non-deterministic computation efficiently on ''normal'' computers, the only deterministic algorithms known to decide these problems have a worst-case runtime which is exponential in the input size. 

P is a subset of NP. It is not known if $\text{P}=\text{NP}$. Problems that are known to be in P seem easier to compute than problems that are so far only known to be in NP. However, there is no problem which is known to be in NP and known not to be in P. Also, no way has been found to solve all problems in NP fast enough so that they are also in P.

A problem is called \textit{hard} for a complexity class, if it is at least as hard to compute a result of that problem, as for all problems in this complexity class.
This means if a problem is shown to be NP-hard, it is at least as hard as all problems in NP. 
Take as an example a voting method, where determining the set of winners is NP-hard. For that voting method, only algorithms have been found, which may take a very long time to compute the result on a sufficiently large input.
 
In this work, we show the membership in P of deciding the winners of a certain voting method. We do this by giving an algorithm that can be used to decide the winners of this voting method and show it has a runtime bounded by a polynomial of the algorithm's input size.
\section*{New Voting Methods}
Modern Social Choice scientists also propose new voting methods, which better satisfy different axioms. With the new perspective of computer science and contributions from economists, mathematicians, political scientists, and philosophers, the field is gaining popularity.  \cite[Ch. 1]{brandtHandbookComputationalSocial2016}



However, one key issue in coming up with new voting methods is the already mentioned Condorcet paradox, where there may not be a Condorcet winner in every election. In those cases, it is not obvious what a voting method should do. \textit{Condorcet consistent} voting methods pick the Condorcet winner if it exists and otherwise try to find the best candidate by some other metric, which may be inferred from the votes cast or other additional information \cite[Ch. 1]{brandtHandbookComputationalSocial2016}.
One way to find a winner in cases without a Condorcet winner uses the information provided by the margins of the edges in the margin graph: Methods which do this are called \textit{margin based methods}. This thesis examines such a method.

Before we introduce this method, we need the concept of \textit{immunity}. Immunity can be seen as a generalization of Condorcet winners. An alternative with immunity against majority complaints has, for each majority defeat by another alternative, a path in the margin graph to this alternative, which is at least as strong as the margin of its defeat.
In \Cref{fig:exampleMarginGraphMod} Alice has immunity, because, while they are beaten by Charlie with a margin of 1, there is a path from Alice to Charlie, which is stronger. Note that Condorcet winners also have immunity, since they have no incoming edges. The advantage of immunity is that, unlike the Condorcet winner, at least one immune alternative always exits \cite{doringFlowingFairnessRiver2024}. 

\subsection*{Split Cycle and Refinements}
A voting method that relies on immunity is \textit{Split Cycle} by \textcite{hollidaySplitCycleNew2023}. The set of winners of the Split Cycle voting method for an election is the set of immune alternatives in the margin graph. Split Cycle works by removing the edges with the lowest margin from every cycle in the margin graph. An alternative with no incoming edges must then have immunity and is a winner.
There is always at least one immune alternative, unlike with the Condorcet winner. \cite{hollidaySplitCycleNew2023}

\textit{Refinements} of Split Cycle are voting methods, where the winners are always some subset of the winners of Split Cycle. These may be advantageous in cases where the refinements are more resolute than Split Cycle. That means they select only a single winner instead of multiple winners. They may also satisfy desirable axioms that Split Cycle does not satisfy. 

One of these is \textit{Ranked Pairs}. It works by inserting the edges of the margin graph ordered by decreasing margin into an initially empty graph and omitting those edges, which would create a cycle in the new graph. Any alternative without an incoming edge in that graph is a Ranked Pairs winner. One motivation for Ranked Pairs is that, unlike Split Cycle, it satisfies the Independence of Clones (IoC) criterion, where nearly identical alternatives should not hinder each other in the election. \cite{doringFlowingFairnessRiver2024}

A further criterion is the Independence of Pareto Dominated Alternatives (IPDA). An alternative is Pareto dominated by another if all voters place it lower than the other. The IPDA criterion states: If such an alternative is removed from the election, that should not change the outcome. 
Surprisingly Split Cycle and Ranked Pairs as well as other Split Cycle refinements violate IPDA. \cite{doringFlowingFairnessRiver2024}  

\subsection*{River and the Goal of this Thesis}
This thesis is mainly concerned with \textit{River}: a novel Split Cycle refinement. River is very similar to Ranked Pairs.
It satisfies Condorcet consistency and monotonicity. River is resolute for elections without ties and satisfies IoC. It also satisfies the other axioms satisfied by Split Cycle. 
An essential improvement is that it further satisfies IPDA.
A common issue with sophisticated voting methods is that they may be unsuited for democratic elections because it is hard for people to understand why a winner is selected. River provides a subtree of the margin graph as a certificate, which proves the immunity of the selected winner, making it easy to justify why the selected winner is the best alternative. \cite{doringFlowingFairnessRiver2024}

However, both anonymity and neutrality of River depend on how ties are resolved, in the same way as for Ranked Pairs \cite{doringFlowingFairnessRiver2024}.
In real-world elections ties in the margin graph can occur (a harmless example of this is \Cref{fig:exampleMarginGraph}). When there are edges with the same margin a tiebreak is needed, since Ranked Pairs and River process the edges in order of decreasing margin.
So, while River and Ranked Pairs are anonymous and neutral for elections without ties, for general preference profiles, each fixed tiebreak which selects a single alternative violates anonymity or neutrality \cite{doringFlowingFairnessRiver2024,brillPriceNeutralityRanked2021}. Other properties of River also depend on a suitable tiebreak \cite{doringFlowingFairnessRiver2024}.
One generalized scheme for tiebreaking is Parallel Universe Tiebreaking (PUT). In PUT all possible ways to break the ties are examined, and if an alternative wins with some tiebreak, it is a winner under PUT \cite{brillPriceNeutralityRanked2021}. A more generalized definition of PUT can be found in \cite{freemanGeneralTiebreakingSchemes2015}, however, the definition used by \textcite{brillPriceNeutralityRanked2021} will suffice for this work.
Parallel Universe Tiebreaking preserves neutrality for Ranked Pairs \cite{brillPriceNeutralityRanked2021}.
Ranked Pairs is NP-hard with Parallel Universe Tiebreaking, thus there is a tradeoff between computational tractability and neutrality for Ranked Pairs \cite{brillPriceNeutralityRanked2021}.

The main result of this thesis is, that River with Parallel Universe Tiebreaking is in P. The algorithm we give in this thesis can be used to compute the set of PUT winners of River and has polynomial runtime.

    \chapter{Preliminaries}

    \section{Social Choice Elections Basic Definitions}
To model a democratic election, we consider $m\geq1$ voters with preferences over a set \alts of $n\geq2$ alternatives. For $n\in\mathbb{N}$ we write $[n]$ instead of $\sset{1, \dots, n}$. The preference of a voter $i \in [m]$ is represented by a linear order $\succ_i$ over \alts, where the voter ranks their favorite alternative first, second favorite next, and so on.
A preference profile $\prefs=(\succ_1, \dots, \succ_m)$ is the collection of all the preferences of $m$ voters.
When referring to the set of all possible preference profiles, we use \allPrefs. 
We define the \textit{majority margin} for $\prefs \in \allPrefs$ of alternative $x \in \alts$ over alternative $y \in \alts$ as $$\mpref{x,y}=\abs{\msset{1\leq i\leq m \mid x \succ_i y}} - \abs{\msset{1\leq i \leq m \mid y \succ_i x}}$$ to compare the collective preference between different candidates. This majority margin represents, how often one alternative is ranked above another in all preferences in \prefs.
The weighted \textit{margin graph} of a preference profile \prefs is defined as $$\mG(\prefs) = (\alts, \sset{(x,y) \mid \mpref{x,y} \geq 0}).$$
Note that for ties between alternatives, there are bidirectional edges with a margin of zero.
When the preference profile \prefs is clear from context, we write \mG as shorthand for $\mG(\prefs)$ and $m$ instead of $m_{\prefs}$.
We denote the edges of the margin graph by $E(\mG)$. We use this notation analogously for other graphs. 

Given a preference profile \prefs over alternatives \alts, a \textit{social choice function} selects a non-empty set of winners $\alts' \subseteq \alts$.
\subsection*{Paths in the Margin Graph}
To define the social choice function River, we need the notion of a \textit{majority path} from $x_1$ to $x_l$ in a margin graph $\mG(\prefs)$ with $\prefs \in \allPrefs$, which is a sequence $\Path{x_1}{x_l}=\langle x_1, \dots, x_l\rangle$ of distinct alternatives such that for $i\in \sset{1, \dots, l-1}$, $x_i$ directly defeats $x_{i+1}$ (that is $(x_i, x_{i+1}) \in E(\mG)$ or equivalently $\m{x_i, x_{i+1}} \geq 0$).
The \textit{strength} of such a path $\Path{x_1}{x_l}$ is the smallest margin of its edges: 
$$\str(\Path{x_1}{x_l}) = \min_{1 \leq i \leq l-1}(\m{x_i, x_{i+1}}).$$
We call an edge $e$ on a path $\PathP$ with $\m{e} = \str(\PathP)$ a minimum edge of that path. 
We denote a subpath of \Path{x_1}{x_l} from $x_i$ to $x_j$ by $\Path{x_1}{x_l}[i:j]$. 
We write $\Path{1}{x_i}\circ \Path{x_i}{x_l}= \Path{1}{x_l}$ for the concatenation of the paths $\Path{1}{x_i}$ and $\Path{x_i}{x_l}$.

\subsection*{Preliminary Graph Definitions}
For a graph $G=(V,E)$ given an edge $e=(x,y)\in E$, a \textit{cycle} goes through $e$ if there is a path from $y$ to $x$ in $G$. An \textit{isolated} vertex has no incoming and no outgoing edges.

We define a rooted \textit{tree} as a graph $T=(V,E)$, where there is a unique path from the root $s \in V$ to any $ v\in V$.
Note that a tree contains no cycles and the in-degree of every $v\in V$ is at most one. The root only has outgoing edges.
A \textit{forest} is a graph $F=(V,E)$ containing one or more trees. Note that an isolated vertex is also a tree.
A \textit{directed acyclic graph} (DAG) is a directed graph $G=(V,E)$ where there are no cycles. Formally, for all $(u,v) \in E$ there is no path from $v$ to $u$ in $G$. A \textit{source} of a DAG is any vertex with no incoming edges.

A \textit{breadth first search} from $u \in V$ denoted by $\text{BFS}(u)$ on the directed graph $G=(V,E)$, yields the set of all vertices which are reachable from $u$ via some path in $G$. BFS has a runtime in $\bigO{\abs{V} + \abs{E}}$ \cite[Ch. 22.2]{cormenIntroductionAlgorithms2022}.
The \textit{transposed graph} $G^{-1}=(V,E')$ for a directed graph $G=(V,E)$, is the graph where for all $(u,v) \in E$ the edge $(v,u)\in E'$.
The \textit{ancestors} of $u\in V$ are the vertices in $\text{BFS}(u)$ on $G^{-1}$ for $G=(V,E)$. This means they have some path to $u$ in $G$.
The ancestors of $u\in V$ up to $v \in V$ are the vertices in a $\text{BFS}(u)$ on $G^{-1}$ for $G=(V,E)$ that does not take any outgoing edges of $v$ in $G^{-1}$. So, this is the set of vertices that have paths to $u$ in $G$ that do not use an incoming edge of $v$.

\subsection*{Tiebreaking in Social Choice Functions}
A \textit{descending linear ordering} is an ordering $\lino = (e_1, \dots, e_{\abs{E}})$, such that for all $1 \leq i < j \leq \abs{E}: \m{e_i} \geq \m{e_j}$.
For $o_i$ in $o=(o_1, \dots, o_l)$ we write $o[i]$. We also write $\text{inv}_o(o_i)$ for $i$. 
The set of all descending linear orderings of $E(\mG)$ is $\allDescOrders$. If there is more than one descending linear ordering possible ($\abs{\allDescOrders}>1$) due to ties in the preference profile, one such ordering $\lino \in \allDescOrders$ is a tiebreak. 

The \textit{Parallel Universe Tiebreaking} (PUT) result of a social choice function is the set of alternatives, where an alternative is a winner under some tiebreak.
So the Parallel Universe Tiebreaking result in our case is the union over the winners of our social choice function under all descending linear orderings $\lino \in \allDescOrders$. 



\section{Definition of River}
\textcite{doringFlowingFairnessRiver2024} introduce the \textit{River} social choice function $\RV\colon \allDescOrders \times \allPrefs \to \mathcal{P}(\alts)$.
River takes the margin graph \mG of a preference profile \prefs and an $\lino \in \allDescOrders$ over the edges $E(\mG)$ as an input. It outputs the candidate with no incoming edges in the River diagram $\mRV(\lino)$.
We denote by $\mRVi[i]$ the River diagram after processing the edge $\lino[i]$ and write $\mRV(\lino)$ for the River diagram after this procedure instead of $\mRVi[\abs{E(\mG)}](\lino)$.
In this work, we examine the River diagram.
\begin{definition} \label{def:river}
    \textit{River} creates the River diagram as follows:
    \begin{enumerate}
        \item Initialize the River diagram $\mRVi[0]$ as a directed graph $(\alts, \emptyset)$.
        \item For $i$ from $1$ to $\abs{E(\mG)}$, process the edge $e =(x,y) := \lino[i]$ and add $e$ to \mRVi[i], unless the \textbf{River branching condition \iCond } or the \textbf{River cycle condition \iiCond } is satisfied:
        \item[]
            \begin{tabular}{rl}
            \textbf{\iCond}     & There already is an incoming edge for $y$ in \mRVi[i-1]. \\ 
            \textbf{\iiCond}    & There is a path from $y$ to $x$ in \mRVi[i-1]. 
            \end{tabular}
    \end{enumerate}
\end{definition} 
From the processing of edges in order of decreasing margin by $\lino$ in River follows:
\begin{corollary}\label{obs:atLeastStrengthEdgesToBreakEdge}
    When the edge $\lino[i]=(x,y)$ in $E(\mG)$ is processed by River, for \iCond to be satisfied, the other incoming edge to $y$ must have a higher or equal margin than $\lino[i]$, and for \iiCond to be satisfied, there must be a path from $y$ to $x$ of higher or equal strength in \mRV.
\end{corollary}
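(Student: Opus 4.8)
The plan is to derive both halves of the statement directly from two elementary facts about the River construction in \Cref{def:river}: edges are processed in the order given by the descending linear ordering $\lino$, and an edge is never removed once it has been added.

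First I would record the monotonicity of the diagram: for all $j \le i$ we have $\mRVi[j] \subseteq \mRVi[i]$, and in particular $\mRVi[i-1] \subseteq \mRV$. This is immediate, since every step of \Cref{def:river} only ever adds an edge. Next I would observe that every edge present when $\lino[i]=(x,y)$ is processed has margin at least $\m{\lino[i]}$: any $e \in \mRVi[i-1]$ was added at some step $j \le i-1$, so $e = \lino[j]$, and since $\lino \in \allDescOrders$ is (weakly) descending, $\m{e} = \m{\lino[j]} \ge \m{\lino[i]}$.

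From here the two cases are short. If \iCond holds, the incoming edge of $y$ lies in $\mRVi[i-1]$ and therefore has margin $\ge \m{\lino[i]}$, which is the first claim. If \iiCond holds, there is a path $\PathP$ from $y$ to $x$ in $\mRVi[i-1]$; every edge of $\PathP$ lies in $\mRVi[i-1]$, so each has margin $\ge \m{\lino[i]}$, and hence $\str(\PathP) \ge \m{\lino[i]}$ since the strength is the minimum margin along the path. By the monotonicity fact the same path $\PathP$ survives in $\mRV$, giving the second claim.

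I do not expect a genuine obstacle here; the only points that need care are making explicit that no edge is ever deleted (so that a path present in $\mRVi[i-1]$ persists to $\mRV$) and that "descending" means weakly decreasing margins, so that a tie between $\m{\lino[i]}$ and an earlier margin is covered by the $\ge$ rather than a strict inequality.
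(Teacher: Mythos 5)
Your proposal is correct and matches the paper's (implicit) reasoning: the paper states this as a corollary following immediately from the fact that River processes edges in descending order of margin, and your argument simply makes explicit the two facts it relies on — the diagram only grows, and every edge already present was processed earlier and hence has margin at least $\m{\lino[i]}$. No gap.
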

Further, the following property of \mRV was proven by \textcite{doringFlowingFairnessRiver2024} for uniquely weighted preference profiles and for general profiles with a tiebreaker. 
\begin{corollary}\label{obs:propertiesOfRiverDiagram} 
    For $\prefs \in \allPrefs$ with margin graph \mG, and any $\lino \in \allDescOrders$, the River diagram $\mRV(\lino)$ is a rooted tree and the unique root of $\mRV(\lino)$ is the River winner.
\end{corollary}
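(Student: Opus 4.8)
The plan is to establish three properties of $\mRV(\lino)$ and then read the statement off them: (i) it is acyclic; (ii) every vertex has in-degree at most one; and (iii) it is weakly connected. Properties (i) and (ii) already force $\mRV(\lino)$ to be a forest whose components are rooted trees oriented away from their roots, and property (iii) collapses this forest to a single rooted tree. Since $\RV$ outputs the vertex with no incoming edge, and in a rooted tree that vertex is exactly the root, the second assertion of the corollary follows immediately once the first is in hand.

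First I would prove (i) and (ii) together by induction on the processing step $i$, using \Cref{def:river}. The base case $\mRVi[0] = (\alts, \emptyset)$ is clear. For the step, suppose $\mRVi[i-1]$ is acyclic with all in-degrees at most one and consider $e = (x,y) = \lino[i]$. If $e$ is not added then $\mRVi[i] = \mRVi[i-1]$ and there is nothing to do; if $e$ is added then \iCond failed, so $y$ had no incoming edge in $\mRVi[i-1]$ and all in-degrees stay at most one, and \iiCond failed, so there was no path from $y$ to $x$ and no cycle is created. Hence $\mRV(\lino)$ is a DAG with maximum in-degree one. A standard argument then shows that its underlying undirected graph is a forest --- any acyclic orientation of an undirected cycle has a vertex into which both of its cycle-edges point, contradicting the in-degree bound --- so each component is a tree with, say, $k$ vertices and $k-1$ edges; since its in-degrees lie in $\{0,1\}$ and sum to $k-1$, exactly one vertex has in-degree $0$, and the component is a rooted tree in the sense of the preliminaries, rooted at that vertex.

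The crux is property (iii), which I would prove by contradiction. Suppose $\mRV(\lino)$ has two distinct components $C_1$ and $C_2$ with roots (in-degree-$0$ vertices) $r_1 \in C_1$ and $r_2 \in C_2$. By the definition of $\mG(\prefs)$ together with $\mpref{r_1,r_2} = -\mpref{r_2,r_1}$, at least one of $(r_1,r_2)$ and $(r_2,r_1)$ lies in $E(\mG)$; after relabeling, assume $e = (r_1,r_2) \in E(\mG)$. When $e$ is processed at some step $i$ it is not added --- otherwise $r_2$ would have an incoming edge in $\mRV(\lino)$ --- so \iCond or \iiCond held in $\mRVi[i-1]$. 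If \iCond held, $r_2$ already had an incoming edge in $\mRVi[i-1]$, and since edges are never removed this contradicts $r_2$ being a root of $\mRV(\lino)$. If \iiCond held, there was a path from $r_2$ to $r_1$ in $\mRVi[i-1]$, hence in $\mRV(\lino)$, so $r_1$ and $r_2$ lie in the same component, contradicting $C_1 \neq C_2$. Either way we reach a contradiction, so $\mRV(\lino)$ is connected.

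Combining (i)--(iii), $\mRV(\lino)$ is a single rooted tree, and its root --- the unique vertex without an incoming edge --- is by \Cref{def:river} the River winner. I expect the only genuinely delicate point to be the argument for (iii): one must check that an edge of $\mG$ between the two roots exists (which is where completeness of the margin graph, and hence $n \ge 2$, enters), that the rejection of that edge is necessarily caused by one of the two named conditions, and that each of those cases really does contradict the two-component assumption. Everything else is a short induction together with the standard description of DAGs of bounded in-degree.
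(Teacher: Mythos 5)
Your proof is correct. Note, however, that the paper does not actually prove this statement: it is stated as a corollary and attributed to \textcite{doringFlowingFairnessRiver2024}, so there is no in-paper argument to compare against. Your self-contained proof is a perfectly good substitute. The decomposition into (i) acyclicity, (ii) in-degree at most one, and (iii) weak connectivity is the natural one; (i) and (ii) are immediate from \iiCond and \iCond by induction, and the counting argument (a weakly connected, underlying-acyclic component on $k$ vertices has $k-1$ edges, so with in-degrees in $\{0,1\}$ exactly one vertex is a root, and following unique parents upward shows the root reaches everything) correctly upgrades the forest to a union of rooted trees in the sense of the preliminaries. You also correctly identified the one genuinely nontrivial point, namely (iii): it relies on the fact that $\mpref{x,y}=-\mpref{y,x}$, so for every pair of alternatives at least one of the two directed edges lies in $E(\mG)$; the edge between the two hypothetical roots must then have been rejected, and each of \iCond and \iiCond leads to a contradiction (the former because edges are never removed, the latter because a path between the roots would merge the components). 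This is essentially the argument one would expect in the cited source, specialized to the tie-broken setting via a fixed $\lino\in\allDescOrders$.
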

In later proofs, we argue about the intermediate River diagram during its construction. Note that an edge is only added to the initially empty River diagram if it does not create a cycle and every alternative has at most one incoming edge and thus:
\begin{corollary}\label{river:intermediateForest}
    For $\prefs \in \allPrefs$ with margin graph \mG and $\lino \in \allDescOrders$ for all $0 \leq i \leq \abs{E(\mG)}$ the intermediate River diagram $\mRVi[i](\lino)$ is a forest. This means it does not contain cycles and every alternative has at most one incoming edge.
\end{corollary}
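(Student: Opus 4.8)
The plan is a short induction on the step index $i$, reading the required invariant directly off \Cref{def:river}. The invariant is exactly the pair of properties claimed: the diagram stays acyclic and every alternative keeps in-degree at most one. The observation that makes this work is that the River branching condition \iCond is precisely the guard preventing a second incoming edge at a target, while the River cycle condition \iiCond is precisely the guard preventing a newly inserted edge from closing a cycle; so as long as an edge is inserted only when both conditions fail, the invariant survives each insertion.

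First I would dispatch the base case $i=0$: by construction $\mRVi[0](\lino)=(\alts,\emptyset)$ has no edges, so every alternative is an isolated vertex, which the preliminary definitions count as a (trivial) tree, making the whole diagram a forest. For the inductive step I would assume $\mRVi[i-1](\lino)$ is a forest and examine the processing of $e=(x,y):=\lino[i]$. If River does not insert $e$, then $\mRVi[i](\lino)=\mRVi[i-1](\lino)$ and there is nothing to prove. If River does insert $e$, then by \Cref{def:river} neither \iCond nor \iiCond held in $\mRVi[i-1](\lino)$: since \iCond failed, $y$ had no incoming edge before, so afterwards $y$ has in-degree exactly one and no other in-degree changes, keeping all in-degrees at most one; since \iiCond failed, there was no path from $y$ to $x$ in $\mRVi[i-1](\lino)$, so the edge $(x,y)$ cannot lie on a cycle, and $\mRVi[i](\lino)$ remains acyclic.

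I do not anticipate a genuine obstacle here; this is essentially a bookkeeping corollary whose proof is already implicit in the remark preceding it. The only mild point worth spelling out is why "acyclic with in-degree at most one" coincides with the rooted-tree notion of forest from the preliminaries: in such a graph, following the unique incoming edge of each vertex backwards must terminate at a source by acyclicity, and within a single weakly connected component this source is unique, since two distinct sources would force some vertex on a connecting undirected path to receive two incoming edges. Hence each weakly connected component is a tree rooted at that source, and the diagram as a whole is a forest.
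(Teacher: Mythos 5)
Your proposal is correct and matches the paper's reasoning: the paper states this corollary as an immediate consequence of the observation that \iCond and \iiCond are exactly the guards against a second incoming edge and against closing a cycle, and your induction on $i$ is just the explicit formalization of that remark. The closing paragraph relating ``acyclic with in-degree at most one'' to the rooted-tree notion of forest is a reasonable extra detail the paper leaves implicit.
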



In this thesis, we examine River with Parallel Universe Tiebreaking.
\begin{definition}
We define River with Parallel Universe Tiebreaking (PUT) as $\RVPUT\colon \allPrefs \to \mathcal{P}(\alts)$ with $\RVPUT(\prefs) = \bigcup_{\lino \in \allDescOrders } \RV(\lino, \prefs)$.
\end{definition}
How to compute this efficiently is not obvious, since the size of \allDescOrders is the number of permutations of sets of edges with the same margin. We show this is possible.



An important property of River winners is \textit{immunity}. For every other alternative that beats them in a pairwise majority comparison, there exists a majority path as least as strong as this pairwise majority strength \cite{doringFlowingFairnessRiver2024}. From this follows:
\begin{corollary}
 \label{lma:PUTwinnersBreakAllIncomingEdges}
 Any $\RVPUT(\prefs)$ winner \wcand with incoming edge $e_{\text{defeat}}=(x,\wcand)$ in $E(\mG)$ has a path $\Path{\wcand}{x}$ from $\wcand$ to $x$ with $\str(\Path{\wcand}{x}) \geq \m{e_{\text{defeat}}}$ in every River diagram where they are the winner, for $\prefs \in \allPrefs$.
\end{corollary}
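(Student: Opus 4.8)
The plan is to argue directly from the construction of the River diagram rather than through immunity. Recall that $\wcand$ being an $\RVPUT(\prefs)$ winner means $\wcand \in \RV(\lino, \prefs)$ for some $\lino \in \allDescOrders$; fix any such $\lino$ and let $i$ be the step at which River processes the edge $e_{\text{defeat}} = (x, \wcand)$. By \Cref{obs:propertiesOfRiverDiagram}, $\wcand$ is the unique root of the tree $\mRV(\lino)$, so it has no incoming edge in $\mRV(\lino)$. Since River only ever adds edges and never removes them, $\wcand$ has no incoming edge in any intermediate diagram $\mRVi[j](\lino)$ either. Consequently $e_{\text{defeat}}$ is not added (otherwise $\wcand$ would have an incoming edge at the end), and the branching condition \iCond cannot have been the reason it was skipped (it would require an incoming edge of $\wcand$ in $\mRVi[i-1](\lino)$). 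Hence the cycle condition \iiCond must have applied when $e_{\text{defeat}}$ was processed: there is a path from $\wcand$ to $x$ in $\mRVi[i-1](\lino)$.

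Next I would invoke \Cref{obs:atLeastStrengthEdgesToBreakEdge}: since \iiCond is satisfied when $e_{\text{defeat}}$ is processed, this path from $\wcand$ to $x$ has strength at least $\m{e_{\text{defeat}}}$. Because edges are never removed, the very same path is present in the final diagram $\mRV(\lino)$, and since $\mRV(\lino)$ is a tree rooted at $\wcand$ (again \Cref{obs:propertiesOfRiverDiagram}), the path from $\wcand$ to $x$ in it is unique. Therefore this path is exactly $\Path{\wcand}{x}$ and $\str(\Path{\wcand}{x}) \geq \m{e_{\text{defeat}}}$. As $\lino$ was an arbitrary descending linear ordering under which $\wcand$ wins, the claim holds in every River diagram in which $\wcand$ is the winner.

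The argument is short, so there is no single hard step; the points that need care are (i) ruling out \iCond, which relies on the River diagram growing monotonically (edges are only added, cf.\ \Cref{river:intermediateForest}) together with the winner being precisely the incoming-edge-free root, and (ii) identifying the path obtained at the intermediate stage $\mRVi[i-1](\lino)$ with the unique root-to-$x$ path in the final tree. One could instead try to route the proof through the immunity of River winners established by \textcite{doringFlowingFairnessRiver2024}, but immunity only yields a witnessing majority path in the margin graph $\mG$, not one inside $\mRV(\lino)$, so the construction-based argument above is the more direct way to obtain the stated form.
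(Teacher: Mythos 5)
Your proof is correct, and it takes a genuinely different route from the paper. The paper states this corollary as an immediate consequence of the \emph{immunity} of River winners, which it imports from \textcite{doringFlowingFairnessRiver2024} without further argument. You instead argue directly from the River construction: the winner is the incoming-edge-free root of the final tree (\Cref{obs:propertiesOfRiverDiagram}), so $e_{\text{defeat}}$ must have been skipped, \iCond{} cannot be the reason, hence \iiCond{} fired and \Cref{obs:atLeastStrengthEdgesToBreakEdge} supplies the strength bound; monotone growth of the diagram and uniqueness of root-to-vertex paths in the final tree then place the witnessing path where the statement wants it. Your closing remark identifies exactly what your approach buys: immunity as quoted only guarantees a sufficiently strong path somewhere in the margin graph $\mG$, whereas the corollary asserts such a path \emph{inside every River diagram in which $\wcand$ wins}. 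To bridge that gap one needs essentially the construction-based reasoning you give, so your version is the more self-contained and, strictly speaking, the more complete justification of the statement as written. The only cosmetic point is that ruling out \iCond{} rests on edges never being removed (immediate from \Cref{def:river}) rather than on the forest property of \Cref{river:intermediateForest} per se, but this does not affect the validity of the argument.
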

Thus, observe that for a $\RVPUT$ winner, there is a River diagram, where all incoming edges they have in the margin graph are excluded because \iiCond is satisfied for these edges.

\subsection*{Fast River}
This thesis is concerned with the computational efficiency of the River voting method for any preference profile, focusing on those resulting in ties in their margin graph. 
To compute River more efficiently in all cases, we also present an algorithm to compute the River diagram for a margin graph given a tiebreak. This has optimizations, which improve the worst-case runtime over that of a naive implementation of River. 
We give the pseudocode for this implementation in \Cref{algo:fastRiver}.

\begin{algorithm}[h]
    \SetAlgoLined
    \DontPrintSemicolon
    \SetKwInOut{Input}{input}\SetKwInOut{Output}{output}
    \SetKwData{hasInEdge}{hasInEdge}
    \SetKwData{riverTrees}{riverTrees}
    \SetKwFunction{add}{add} \SetKwFunction{connected}{sameSet} \SetKwFunction{merge}{union}
    \Input{A margin graph $\mG=(\alts, E)$, a descending linear ordering $\lino \in \allDescOrders$ }
    \Output{The River diagram $\mRV(\lino)$}
    \Begin{
        $\mRV \leftarrow (\alts, \emptyset)$\;
        \hasInEdge $\leftarrow \emptyset$\;
        \riverTrees $\leftarrow \texttt{DisjointSetForest(\alts)}$\;
        \For{$i \leftarrow 1$ \KwTo $\abs{E}$}{
            $(x,y) \leftarrow \lino[i]$\;
            \uIf(\tcp*[f]{Check \iCond}){$y \not \in \hasInEdge$}{
                \uIf(\tcp*[f]{Check \iiCond}){\texttt{not} \riverTrees.\connected($x$,$y$)}{
                    \mRV.\add($(x,y)$)\;

                    \hasInEdge.\add($(x,y)$)\;
                    \riverTrees.\merge($x$,$y$)\;
                }
            }
        }
        \Return{ \mRV}\; 
    }
    \caption{Pseudocode for Fast River \label{algo:fastRiver}}
\end{algorithm}

\begin{lemma}[Fast River correctness]
    \Cref{algo:fastRiver} computes the River diagram from \Cref{def:river} correctly.
\end{lemma}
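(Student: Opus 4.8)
The plan is to show that Algorithm~\ref{algo:fastRiver} maintains, as a loop invariant after processing edge $\lino[i]$, that its graph $\mRV$ equals the River diagram $\mRVi[i](\lino)$ from \Cref{def:river}, that \hasInEdge is exactly the set of edges of $\mRV$, and that the disjoint-set forest \riverTrees represents precisely the connected components (ignoring edge direction) of $\mRV$. Since the algorithm and the definition both iterate over the same ordering $\lino$ in the same order, it suffices to check that on each iteration the algorithm adds edge $\lino[i] = (x,y)$ if and only if \Cref{def:river} does, i.e.\ iff neither \iCond nor \iiCond holds.

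First I would establish the base case: before the loop, $\mRV = (\alts, \emptyset) = \mRVi[0](\lino)$, \hasInEdge${} = \emptyset$, and \texttt{DisjointSetForest}$(\alts)$ puts every alternative in its own singleton set, matching the components of the edgeless graph. For the inductive step, assume the invariant holds after iteration $i-1$. The algorithm adds $(x,y)$ exactly when $y \notin \hasInEdge$ and \riverTrees$.$\connected$(x,y)$ returns false. By the induction hypothesis, $y \notin \hasInEdge$ iff $y$ has no incoming edge in $\mRVi[i-1](\lino)$, which is precisely the negation of \iCond. The second obstacle — and the one requiring the most care — is showing that \texttt{not} \riverTrees$.$\connected$(x,y)$ is equivalent to the negation of \iiCond, namely that there is no path from $y$ to $x$ in $\mRVi[i-1](\lino)$. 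The union–find structure only tracks undirected connectivity, so a priori it could return ``connected'' when there is a path from $x$ to $y$ but not from $y$ to $x$; I must rule this out. Here I would invoke \Cref{river:intermediateForest}: $\mRVi[i-1](\lino)$ is a forest, so within any one of its trees there is exactly one undirected path between any two vertices, and since every vertex has in-degree at most one, that path, if it runs from $y$ to $x$, must be the directed path from $y$ to $x$ — and it exists in the directed sense iff $x$ is an ancestor of $y$. But we are in the case $y \notin \hasInEdge$, so $y$ has no incoming edge and is therefore the root of its tree (or isolated); hence if $x$ and $y$ lie in the same tree, the unique undirected $y$--$x$ path must leave $y$ by an outgoing edge, i.e.\ it is a directed path from $y$ to $x$. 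Thus, \emph{given} that \iCond fails, ``$x,y$ in the same component'' is equivalent to ``there is a directed path from $y$ to $x$'', which is \iiCond.

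Finally I would check that the invariant is restored at the end of the iteration: if the edge is added, $\mRV.$\add$((x,y))$ makes $\mRV = \mRVi[i](\lino)$, \hasInEdge$.$\add records the new incoming edge of $y$, and \riverTrees$.$\merge$(x,y)$ joins exactly the two components that the new edge now connects; if the edge is not added, all three structures are unchanged, matching $\mRVi[i](\lino) = \mRVi[i-1](\lino)$. After the final iteration the invariant gives $\mRV = \mRV(\lino)$, which is what the algorithm returns, so by \Cref{obs:propertiesOfRiverDiagram} it is the correct River diagram. The main obstacle, as noted, is the directed-versus-undirected-connectivity subtlety in the cycle check; everything else is bookkeeping against the forest structure guaranteed by \Cref{river:intermediateForest}.
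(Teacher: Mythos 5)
Your proof is correct and follows essentially the same route as the paper: loop invariants for \texttt{hasInEdge} and for the disjoint-set forest, plus the key observation that since \iCond has already failed, $y$ is the root of its subtree, so undirected connectivity of $x$ and $y$ is equivalent to the existence of a directed path from $y$ to $x$ (the paper justifies this via \Cref{river:intermediateForest} exactly as you do). The only cosmetic slip is describing \texttt{hasInEdge} as the set of \emph{edges} of $\mRV$ while testing vertex membership $y \notin \texttt{hasInEdge}$ — an ambiguity already present in the pseudocode — but you then use the invariant in its correct form (``$y$ has no incoming edge''), so nothing is lost.
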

\begin{proof}
    Let $\mG=(\alts,E)$ be a margin graph of $\prefs \in \allPrefs$ and $\lino \in \allDescOrders$ an ordering of $E$. 
    Note that the pseudocode in \Cref{algo:fastRiver} uses a disjoint-set forest as described in \citeauthor[Ch. 19]{cormenIntroductionAlgorithms2022}. The initialization in line 4 is equivalent to using the $\texttt{MAKE-SET}(a)$ operation for every alternative $a \in \alts$. The other operations are only renamed from \texttt{SAME-COMPONENT} and \texttt{UNION}. 
    To show correctness of \Cref{algo:fastRiver}, we need to show that, for every edge $e \in E$ the River conditions are computed correctly.
    We show \iCond and \iiCond separately.

    For \iCond we show the loop invariant that the set \texttt{hasInEdge} holds the set of alternatives with incoming edges. We initialize the set in line 3 correctly with the empty set, as the initial River diagram has no edges. We assume \texttt{hasInEdge} is correct at the start of an iteration of the loop in line 5. In the loop, once we add an edge to an alternative in the River diagram in line 9, we also add this alternative to the set in line 10. So \texttt{hasInEdge} is also correct after the loop iteration.

    To check whether \iCond is satisfied for the edge $e=(x,y)$ we check membership of $y$ in \texttt{hasInEdge} in line 7 which is equivalent to \iCond from \Cref{def:river} by the invariant we just showed.
    If this is not the case, we proceed to check \iiCond. 
    
    To show that this check is correct, we prove a second loop invariant: 
    The \texttt{riverTrees} disjoint-set forest stores in a set the alternatives that are in the same subtree of the current River diagram.
    We initialize \texttt{riverTrees} with a single set for every alternative in line 4. Since the initial River diagram has no edges, this is correct.
    We assume \texttt{riverTrees} is correct at the start of an iteration of the loop in line 5. 
    If we add an edge $e=(x,y)$ to the River diagram in line 9, we merge the sets containing $x$ and $y$ in line 11. These alternatives are now connected via an edge in the River diagram and thus are part of the same subtree. Thus, \texttt{riverTrees} is correct after the loop iteration.

    To check if \iiCond is satisfied for $e=(x,y)$, we only need to check whether $x$ and $y$ are in the same River subtree in line 8. This is because we already know $y$ has no incoming edge from line 7. Thus, $y$ must be the root of any subtree it is in. 
    So if $x$ and this $y$ are in the same subtree, there is a path from $y$ to $x$ in the River diagram. Thus, we check \iiCond from \Cref{def:river} correctly in line 8.

    Since we iterate over the edges in $E$ by the linear descending order of \lino in the loop in line 5, check \iCond and \iiCond correctly in line 7 and line 8 and only add an edge in line 9 if neither is satisfied, we correctly compute the River diagram $\mRV(\lino)$ and return it in line 13.

\end{proof}
We now prove the runtime of this algorithm.
\begin{lemma}[Fast River runtime]
    \label{lma:rvInP}
    Given a margin graph $\mG =(\alts, E)$ for a preference profile $\prefs \in \allPrefs$ and an ordering $\lino \in \allDescOrders$,
    the runtime of computing the River diagram is in $\bigO{n^2 \cdot \alpha(n)}$ using \Cref{algo:fastRiver}, where $\alpha$ is the inverse ackermann function and $n$ is the number of alternatives.
\end{lemma}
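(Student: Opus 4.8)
The plan is to read off the running time directly from \Cref{algo:fastRiver}: show that the main loop executes $\bigTheta{n^2}$ times, that each iteration costs $\bigO{\alpha(n)}$ amortized once the right data structures are chosen, and that everything outside the loop (the initializations in lines 2--4) costs only $\bigO{n}$; summing these gives $\bigO{n^2 \cdot \alpha(n)}$.

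First I would bound the number of edges of the margin graph. For every unordered pair $\{x,y\}$ of alternatives, $\mG$ contains at most the two directed edges $(x,y)$ and $(y,x)$ (both present exactly when the margin between $x$ and $y$ is zero), so $\abs{E} \leq n(n-1) = \bigO{n^2}$. Since the descending linear ordering $\lino \in \allDescOrders$ is supplied to the algorithm as part of its input, no sorting is required: the loop in line 5 simply walks through the $\bigO{n^2}$ entries of $\lino$, doing a constant amount of bookkeeping per entry on top of the data-structure operations discussed next.

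Next I would fix the data-structure choices and invoke the standard amortized analysis. I would store $\mRV$ as a list of edges (or an adjacency list), so that the insertion in line 9 costs $\bigO{1}$; I would represent \texttt{hasInEdge} as a Boolean array indexed by the $n$ alternatives, so that the membership test in line 7 and the insertion in line 10 each cost $\bigO{1}$; and I would implement \texttt{riverTrees} as a disjoint-set forest with union by rank and path compression. Initializing the empty diagram $(\alts,\emptyset)$, the array \texttt{hasInEdge}, and the $n$ singleton sets (the \texttt{MAKE-SET} calls of line 4) together costs $\bigO{n}$. Over the whole execution the forest sees $n$ \texttt{MAKE-SET} operations, at most $\abs{E} = \bigO{n^2}$ \texttt{sameSet} queries (at most one per loop iteration), and at most $n-1$ \texttt{union} operations, because $\mRV$ remains a forest throughout by \Cref{river:intermediateForest} and hence acquires fewer than $n$ edges. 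This is a sequence of $\bigO{n^2}$ disjoint-set operations on a universe of size $n$, which runs in total time $\bigO{n^2 \cdot \alpha(n)}$ by the analysis of the disjoint-set forest \cite[Ch. 19]{cormenIntroductionAlgorithms2022}.

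Finally I would add up the pieces: $\bigO{n}$ for the initializations, $\bigO{n^2}$ for the loop bookkeeping and the $\bigO{1}$ operations on \texttt{hasInEdge} and $\mRV$, and $\bigO{n^2 \cdot \alpha(n)}$ for all operations on \texttt{riverTrees}, for a total of $\bigO{n^2 \cdot \alpha(n)}$. There is no genuinely difficult step; the two points that require care are (i) observing that $\lino$ arrives pre-sorted so the $\bigOmega{\abs{E}\log\abs{E}}$ cost of sorting never appears, and (ii) applying the amortized bound to the \emph{entire} sequence of disjoint-set operations at once rather than charging $\alpha(n)$ separately to each of the $\bigO{n^2}$ queries and then stating a worse combined bound --- the union-by-rank-with-path-compression result already yields the claimed total directly.
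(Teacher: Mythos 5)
Your proposal is correct and follows essentially the same route as the paper's proof: bound $\abs{E}$ by $\bigO{n^2}$, charge $\bigO{1}$ to the \texttt{hasInEdge} and $\mRV$ operations, and apply the amortized bound for a disjoint-set forest with union by rank and path compression to the whole sequence of $\bigO{n^2}$ operations. Your additional observations (that at most $n-1$ unions occur because $\mRV$ stays a forest, and that $\lino$ arrives pre-sorted) are correct refinements that the paper does not need but that do no harm.
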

\begin{proof}
    Let $\mG=(\alts,E)$ be a margin graph of $\prefs \in \allPrefs$ and $\lino \in \allDescOrders$ an ordering of $E$. Let $n=\abs{\alts}$.
    The initialization of the empty River diagram in adjacency list representation takes $\bigO{n}$, the initialization of the \texttt{hasInEdge} set is in $\bigO{m}$ and the initialization of the disjoint-set forest is done with $n$ $\texttt{MAKE-SET}$ operations.
    The loop from line 5 to line 12 is iterated $\abs{E}$ times.
    In the loop, checking set membership of an edge in line 7 and adding an edge to a set in lines 9 and 10, are in $\bigO{1}$.
    We also perform at most one \texttt{SAME-COMPONENT} and one \texttt{UNION} operation. The \texttt{SAME-COMPONENT} operation consists of two \texttt{FIND-SET} operations.

    The runtime of these $\abs{E}$ \texttt{UNION}, $2 \cdot \abs{E}$ \texttt{FIND-SET} and $n$ $\texttt{MAKE-SET}$ operations is in $\bigO{(3 \cdot \abs{E} + n) \cdot \alpha(n)}$ using a disjoint-set with union by rank and path compression \cite[Theorem 19.14]{cormenIntroductionAlgorithms2022}. 
    This dominates the total runtime.
    Observe that \mG is a margin graph with at most $2 \cdot n \cdot (n-1)$ edges. Thus, the total runtime is in $\bigO{n^2 \cdot \alpha(n)}$. 
\end{proof}
If no ordering is given, the edges need to be sorted by margin to create an ordering. In that case, River has a worst-case runtime in $\bigO{n^2 \cdot \log(n)}$.

    \chapter{Results}

    \section*{Overview}
We show that deciding whether an alternative \wcand is an $\RVPUT$ winner is computable in polynomial time. From this follows that computing the set of winning alternatives is also computable in polynomial time.
We can first check for Condorcet winners and terminate if there is one. 
Otherwise, we give a process that can be used to find a tiebreak for an alternative \wcand such that they win River with that tiebreak, if and only if they are an $\RVPUT$ winner:
\begin{enumerate}
    \item Create a semi-River diagram for the margin graph (\Cref{sct:1}).
    \item Find the \rsptf for an alternative on this semi-River diagram (\Cref{sct:2}).
    \item Generate a tiebreak ordering based on this \rsptf (\Cref{sct:3}).
    \item Run River with this tiebreak (\Cref{sct:mainProof}). 
\end{enumerate}
If $w$ is an $\RVPUT$ winner, the process also results in a River diagram where $w$ wins.
We define this process, namely the \rvPutCheck, in \Cref{sct:mainProof} and prove its correctness.
By running the \textsc{Constructive $\RVPUT$ Check} for every alternative, the set of winning alternatives under $\RVPUT$ can be computed.

To give an intuition, this process works because the edges remaining in the semi-River diagram have specific properties. The tree we compute for a selected alternative on this semi-River diagram finds the strongest paths. Through this combination the tree is a possible River diagram, if the alternative is an $\RVPUT$ winner. To prove this, we construct a tiebreak with which River generates this tree and execute River with this tiebreak to check whether it results in this tree. For an $\RVPUT$ looser River cannot generate a tree where the looser wins. Thus, we can check if our selected alternative wins River with the constructed tiebreak to decide whether it is a winner of River with PUT.

    \section{The Semi-River Diagram} \label{sct:1}
We create the semi-River diagram, to only keep edges with guaranteed properties. For a preference profile \prefs, this diagram is a subgraph of $\mG(\prefs)$.
Essentially, the semi-River diagram contains all edges that are in a River diagram under some tiebreak. 

\subsection*{The Semi-River Process}
    We create this diagram with the semi-River process.
    It iterates over the edges of the margin graph $\mG(\prefs)$ in the order of any $\lino \in \allDescOrders$. For an edge $e=(x,y)$ the semi-River process checks if it should include $e$, by having two conditions similar to the River branching condition and the River cycle condition. These are the semi-River branching condition \sRVbranchingCond and the semi-River cycle condition \sRVcycleCond. 
    The semi-River conditions are more general than the River conditions, in the sense that \sRVbranchingCond and \sRVcycleCond respectively test whether \iCond or \iiCond are satisfied for edge $e$ in every universe.
    The \sRVbranchingCond does this, by looking if there is a strictly higher incoming edge of $y$ in any River diagram before $e$ is processed.
    The \sRVcycleCond does this, by looking if there is a path from $y$ to $x$ in any River diagram before $e$ is processed.

    Like with the River diagram, we call a semi-River diagram after processing the $i$-th edge $\msRVi$. We call a semi-River diagram after processing all edges with margin $k$ or higher $\msRVm[k]$. We denote the semi-River diagram after processing only edges with strictly higher margin by $\msRVgm[k]$.
    For the semi-River diagram after processing all edges $\msRVi[\abs{E(\mG)}]$ or $\msRVm[\min{\msset{\m{e} \mid e \in E(\mG)}}]$, we write \msRV.
\begin{definition}[The semi-River process]\label{def:strongSemiRiverProcess}
    \begin{enumerate}
    \item Create an arbitrary descending ordering $\lino \in \allDescOrders$ of $E(\mG)$. \label{step:srv:chooseOrdering}
    \item Initialize the semi-River diagram $\msRVi[0]$ as a directed graph $(\alts, \emptyset)$.
    \item For $i$ from $1$ to $\abs{E(\mG)}$, process the edge $e =(x,y) := \lino[i]$ and add $e$ to \msRVi[i], unless the \textbf{semi-River branching condition \sRVbranchingCond} or the \textbf{semi-River cycle condition \sRVcycleCond} is satisfied:
        \item[]
            \begin{tabular}{rp{0.75\linewidth}}
            \textbf{\sRVbranchingCond}  & There is an other incoming edge to $y$ namely $e'=(z,y)\in E(\msRVgm[\m{e}])$, 
                                         with no path from $y$ to $z$ in $\msRVm[\m{e'}]$.\\
            \textbf{\sRVcycleCond}      & The ancestors of $x$ up to $y$ in $\msRVgm[\m{e}]$ form a DAG in $\msRVgm[\m{e}]$ where $y$ is the only source.
            \end{tabular}
    \end{enumerate}
\end{definition}

To better understand how the semi-River process behaves, we now give some examples.
For uniquely weighted margin graphs, the semi-River diagram is equal to the River diagram on that margin graph. This is because with no tied edges, the \sRVbranchingCond does not allow any alternative to have more than one incoming edge and so the \sRVcycleCond just checks for a path from $y$ to $x$.
If in the margin graph all edges are tied edges then all edges of the margin graph are in the semi-River diagram. This is because the conditions can only be satisfied through strictly higher edges.
If there are two tied incoming edges for one alternative in the margin graph, both of which might be in a River diagram for some tiebreak then they are both in the semi-River diagram.

To be able to omit the linear order chosen in Step 1 in future arguments we prove the following lemma:
\begin{lemma}
    For any descending linear order $\lino \in \allDescOrders$ chosen in Step 1 of \Cref{def:strongSemiRiverProcess}, the semi-River diagram resulting from the algorithm is the same.
\end{lemma}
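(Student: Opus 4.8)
The plan is to show that the semi-River process is order-independent by proving, via induction on the margin values processed in decreasing order, that the set of edges accepted at each margin level depends only on the semi-River diagram built so far (which by induction does not depend on the particular $\lino$) and not on the order in which the tied edges of a given margin are examined. The crucial observation is that both $\sRVbranchingCond$ and $\sRVcycleCond$ for an edge $e$ with $\m{e}=k$ are phrased entirely in terms of $\msRVgm[k]$ and $\msRVm[k']$ for $k' > k$ — that is, in terms of diagrams containing only strictly heavier edges, or edges of weight $\geq k'>k$ — so when we process the edges of weight exactly $k$, the two conditions never consult any other weight-$k$ edge that may or may not already have been inserted. Hence all weight-$k$ edges are tested against the same fixed graph, and the order among them is irrelevant.

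Concretely, I would first fix two descending orderings $\lino, \lino' \in \allDescOrders$ and let $k_1 > k_2 > \dots > k_t$ enumerate the distinct margin values occurring in $E(\mG)$. I claim by induction on $j$ that $\msRVgm[k_j]$ is the same graph whether built from $\lino$ or from $\lino'$; equivalently, the edge sets accepted among the weight-$k_1, \dots, k_{j-1}$ edges coincide. The base case $j=1$ is trivial since $\msRVgm[k_1]$ has no edges (there are no edges of strictly larger margin). For the inductive step, assume $\msRVgm[k_j]$ agrees for both orderings. Now both processes consider exactly the edges of margin $k_j$; for each such edge $e=(x,y)$, checking $\sRVbranchingCond$ requires inspecting the incoming edges of $y$ in $\msRVgm[k_j]$ and, for each candidate $e'=(z,y)$, the existence of a $y\to z$ path in $\msRVm[\m{e'}]$ with $\m{e'} > k_j$, i.e. in a strictly-heavier-weight diagram; checking $\sRVcycleCond$ requires inspecting ancestors of $x$ up to $y$ inside $\msRVgm[k_j]$. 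All of these objects are determined by $\msRVgm[k_j]$ (and the heavier diagrams, which are "slices" of it), which by the inductive hypothesis is identical for $\lino$ and $\lino'$. Therefore each weight-$k_j$ edge receives the same accept/reject verdict under both orderings, so the accepted weight-$k_j$ edges coincide and $\msRVgm[k_{j+1}] = \msRVm[k_j]$ agrees for both. Taking $j = t$ (and noting $\msRV = \msRVm[k_t]$) yields the claim.

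The main obstacle — and the point that needs to be stated carefully rather than waved at — is precisely the verification that neither semi-River condition for a weight-$k$ edge secretly depends on whether a sibling weight-$k$ edge has already been inserted. One has to check this against the exact wording of \Cref{def:strongSemiRiverProcess}: $\sRVbranchingCond$ mentions $E(\msRVgm[\m{e}])$ and $\msRVm[\m{e'}]$ where the quantified $e'$ must itself be an incoming edge of $y$ in $\msRVgm[\m{e}]$, hence $\m{e'} > \m{e}$, so $\msRVm[\m{e'}] = \msRVgm[\m{e'}+\epsilon]$ contains only edges of weight $> \m{e}$; and $\sRVcycleCond$ mentions only $\msRVgm[\m{e}]$. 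So no weight-$k$ edge is ever consulted while deciding another weight-$k$ edge, and the induction goes through. I would also remark that this is exactly the design reason the semi-River conditions were stated with strict inequalities ($>$) in the diagram subscripts rather than the non-strict ones used in the ordinary River conditions, tying the lemma back to the earlier discussion of why the semi-River conditions are "more general."
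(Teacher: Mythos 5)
Your proposal is correct and follows essentially the same route as the paper's proof: the key observation in both is that \sRVbranchingCond and \sRVcycleCond for an edge $e$ consult only diagrams built from edges of strictly higher margin than $\m{e}$, so reordering tied edges cannot change any accept/reject decision. Your version merely makes explicit the induction over margin levels that the paper's shorter argument leaves implicit.
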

\begin{proof}
    Let $\lino, {\lino}{'} \in \allDescOrders$ be different descending linear orderings.
    For two edges  $e_>,e_< \in E(\mG)$ with different margins $\m{e_>} > \m{e_<}$, the edge $e_>$ is before $e_<$ in both $\lino$ and ${\lino}{'}$.
    Therefore, $\lino$ and $\lino'$ can only differ in the order within groups of edges with the same margin.
    
    When considering $e \in E(\mG)$, the two semi-River conditions only perform checks using the edges with strictly higher margin in $\msRVgm[\m{e}]$.
    So, different orders of edges with the same margin as $e$ in $\lino$ and ${\lino}{'}$ have no impact on whether or not $e$ is added to the semi-River diagram.
    Thus, this decision is the same for both $\lino$ and ${\lino}{'}$, and we conclude that the semi-River diagrams using $\lino$ and ${\lino}{'}$ are identical.
\end{proof}

We prove the runtime of the semi-River process is polynomial.
\begin{theorem}\label{thm:srvInP}
    The semi-River diagram can be computed with a worst-case runtime polynomial in $\abs{\alts}$.
\end{theorem}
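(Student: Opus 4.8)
The plan is to bound the running time of the semi-River process of \Cref{def:strongSemiRiverProcess} iteration by iteration. Write $n = \abs{\alts}$ and recall that the margin graph $\mG$ of $\prefs \in \allPrefs$ has $n$ vertices and $\bigO{n^2}$ edges; hence creating an ordering $\lino \in \allDescOrders$ in Step~1 costs $\bigO{n^2 \log n}$, and the main loop of the process runs $\abs{E(\mG)} \in \bigO{n^2}$ times. It therefore suffices to show that a single iteration --- evaluating \sRVbranchingCond and \sRVcycleCond for an edge $e = (x,y)$, and inserting $e$ if neither fires --- can be carried out in time polynomial in $n$.

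The key observation I would use is that both semi-River conditions only consult semi-River diagrams that are already completely built by the time $e$ is processed: \sRVcycleCond consults $\msRVgm[\m{e}]$, while \sRVbranchingCond consults $\msRVgm[\m{e}]$ together with $\msRVm[\m{e'}]$ for incoming edges $e' = (z,y)$ of $y$ of strictly larger margin. I would maintain the semi-River diagram incrementally during the loop and keep a stored copy of it at every margin-level boundary; since the distinct edge-margins yield at most $\bigO{n^2}$ levels, this costs only $\bigO{n^4}$ time and space in total and makes every referenced diagram state directly available. Given these stored diagrams, \sRVbranchingCond is checked by iterating over the at most $n-1$ incoming edges $e' = (z,y)$ of $y$ in $\msRVgm[\m{e}]$ and, for each, running a breadth first search from $y$ in $\msRVm[\m{e'}]$ to test reachability of $z$ --- at most $n$ searches, each in $\bigO{n^2}$ time. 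For \sRVcycleCond I would compute the ancestors of $x$ up to $y$ in $\msRVgm[\m{e}]$ by a single breadth first search on the transposed diagram, and then test on the induced subgraph whether it is acyclic (a depth first search) and whether $y$ is its unique source (in-degree counting), all in $\bigO{n^2}$. Thus one edge is handled in $\bigO{n^3}$, and the whole process runs in $\bigO{n^5}$, which is polynomial in $\abs{\alts}$; I would not try to tighten this exponent.

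The one point that needs care --- and hence the main, if modest, obstacle --- is the bookkeeping that keeps the earlier diagram states $\msRVgm[\m{e}]$ and $\msRVm[\m{e'}]$ on hand: one must argue that a single checkpoint per margin level suffices. This follows because processing an edge of margin $k$ never inspects any edge of margin $k$ (both conditions look strictly above $\m{e}$), which is exactly the fact used in the preceding lemma to show the semi-River diagram does not depend on the choice of $\lino$; and one must note that the number and sizes of these checkpoints remain polynomial in $n$. Everything else reduces to reachability and acyclicity queries, for which the linear-time breadth- and depth-first searches recalled in the preliminaries are enough, so no sharper analysis is required to obtain the stated bound.
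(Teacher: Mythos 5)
Your proposal is correct and follows essentially the same route as the paper's proof: iterate over the $\bigO{n^2}$ edges, evaluate \sRVbranchingCond with at most $n-1$ breadth first searches and \sRVcycleCond with one reachability-plus-acyclicity check, giving $\bigO{n^3}$ per edge and $\bigO{n^5}$ overall. The only difference is cosmetic: where you store a checkpoint of the diagram at each margin level, the paper maintains just the current diagram (and its transpose) and restricts each BFS on the fly to edges of sufficiently high margin, which yields the same bound.
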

\begin{proof} 

    Let $\mG=(\alts, E) $be the margin graph of any preference profile \prefs.
    In this proof we write $n$ for $\abs{\alts}$. 

    First, we state how the conditions may be computed.
    Let $e =(x,y) \in E$ be the edge for which the conditions are checked.
    The \sRVbranchingCond is equivalent to checking for every incoming edge $(z,y)$ to $y$ in \msRV with a strictly higher margin than $\m{e}$ whether $z$ is in $\text{BFS}(y)$ on \msRVgm[\m{z,y}]. If there is such an edge, the \sRVbranchingCond is satisfied.
    To compute the \sRVcycleCond we first compute the transposed graph ${\msRVgm[\m{e}]}^{-1}$ where the transposed outgoing edges of $y$ are removed and call it $G'$. We then do a BFS from $x$ on $G'$ and save the result in the set of alternatives $A'$. We then compute the subgraph induced by $A'$ on $\msRVgm[\m{e}]$ and check if it is a DAG where $y$ is the only source. Since $A'$ matches the definition of the ancestors of $x$ up to $y$, checking whether the  subgraph of $\msRVgm[\m{e}]$ induced by them is a DAG is the same as checking \sRVcycleCond.
    Note that we can maintain the transposed graph of \msRV in the loop by also adding the reversed edge of every edge we add to the initially empty semi-River diagram to the initially empty transposed graph.
    Also, note that by allowing a BFS to only take edges with a margin higher than $\m{e}$ doing a BFS on $\msRVgm[\m{e}]$ given a semi-River diagram where edges with margin $\m{e}$ are also included has the same runtime as doing it on that diagram. 

    Now we argue that the runtime of the entire semi-River process is polynomial.
    Step 1 is in $\bigO{\abs{E} \cdot \log \abs{E}}$, by sorting the edges in $E$.
    In Step 2 the initialization of the graph and the transposed graph is in $\bigO{n}$ if use an adjacency list representation.
    The loop in Step 3 is where the process iterates over the $\abs{E}$ edges. We write \msRV and ${\msRV}^{-1}$ for the intermediate diagrams in that loop.

    For an edge $e=(x,y) \in E$, we check both conditions as described above. 
    To check \sRVbranchingCond, we iterate all at most $n-1$ edges $(y,z)$ outgoing from $y$ in ${\msRV}^{-1}$. For every such edge $(y,z)$ we compute a BFS from $z$ on ${\msRVgm[\m{y,z}]}^{-1}$ in $\bigO{n + \abs{E}}$ once and check if $y$ is in the resulting set in $\bigO{1}$. Thereby checking if \sRVbranchingCond is satisfied for $e$ takes $\bigO{n \cdot (n + \abs{E})}$ time.

    To check \sRVcycleCond we compute $A'$ by also ignoring the outgoing edges of $y$ in the BFS from $x$ on $\msRVgm[\m{e}]^{-1}$. Creating a set of the outgoing edges of $y$ is in $\bigO{n}$. Then doing the BFS that does not use these edges is in $\bigO{n + \abs{E}}$. Now we compute the subgraph induced by $A'$ on $\msRVgm[\m{e}]$ in $\bigO{n + \abs{E}}$. Finally, we check whether this subgraph is an acyclic graph in $\bigO{n + \abs{E}}$ \cite[Ch. 22.4]{cormenIntroductionAlgorithms2022}. 
    To check if $y$ is the only alternative with no incoming edges in this subgraph takes $\bigO{\abs{E}}$, by checking for all edges if they are incoming to $y$.
    Thereby, checking if \sRVcycleCond is satisfied for $e$ takes $\bigO{n + \abs{E}}$ time. 

    Inserting an edge into ${\msRV}$ and ${\msRV}^{-1}$ takes $\bigO{1}$.

    Observe that \mG is a margin graph with at most $2 \cdot n \cdot (n-1)$ edges.
    We execute the loop $\abs{E}$ times, which is in  $\bigO{n^2}$.
    Combining the operations in step 3 yields a runtime in $\bigO{n^2 \cdot (n^3 + n^2)}$, which is in $\bigO{n^5}$.
    So runtime of the semi-River process is in $\bigO{n^5}$, and thus polynomial in the input size.

\end{proof}

\subsection*{The Semi-River Diagram Includes All Edges of Any River Diagram}
For a preference profile $\prefs \in \allPrefs$, we use the semi-River diagram of the margin graph $\mG(\prefs)$ in further algorithms to decide the $\RVPUT$ winners of \prefs. We now analyze how the semi-River diagram relates to any River diagram of \prefs.

The \Cref{lma:semiRiverPathExistance} shows an essential property of the semi-River diagram, which we use in later sections. It is illustrated in \Cref{fig:semiRiverPathLemma}.
\begin{restatable}[]{lemma}{semiRiverPathExistance}
    \label{lma:semiRiverPathExistance}
    Let \msRV be the semi-River diagram on $\mG(\prefs)$ for $\prefs \in \allPrefs$. For an alternative $y \in \alts$ and any two incoming edges $e, e' \in E(\msRV)$ with $e=(x,y)$ and $e'=(z,y)$, if $\m{e}<\m{e'}$, then there is a path \Path{z}{y} in $\msRVgm[\m{e'}]$. 
\end{restatable}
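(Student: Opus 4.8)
The plan is to derive the path from the fact that both $e$ and $e'$ survive the semi-River process. I would start from $e=(x,y)$: since $e$ is retained, $\sRVbranchingCond$ must fail when $e$ is processed. Because $\m{e'}>\m{e}$, the edge $e'=(z,y)$ is processed before $e$ in any descending ordering, and since $e'\in E(\msRV)$ it is never removed, so $e'$ is an incoming edge of $y$ already present in $\msRVgm[\m{e}]$. Thus $e'$ is a candidate witness for $\sRVbranchingCond$, and the failure of $\sRVbranchingCond$ forces this candidate to violate its requirement, i.e.\ there is a path from $y$ to $z$ in $\msRVm[\m{e'}]$. Together with the edge $e'$ this forms a directed cycle through $y$ and $z$, all of whose edges have margin at least $\m{e'}$ and on which $e'$ is a minimum edge. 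This part is routine unwinding of \Cref{def:strongSemiRiverProcess}.

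To turn this into the claimed path from $z$ to $y$ lying inside $\msRVgm[\m{e'}]$ — that is, using only edges of margin strictly above $\m{e'}$ — I would bring in the second fact: $e'=(z,y)$ is also retained, so $\sRVcycleCond$ fails when $e'$ is processed, meaning the ancestors of $z$ up to $y$ in $\msRVgm[\m{e'}]$ do not induce a DAG whose unique source is $y$. Intuitively, had $\sRVcycleCond$ held, the edges strictly stronger than $e'$ would already connect $y$ to $z$ rigidly and $e'$ would have been dropped; its failure is exactly the slack needed to reroute the cycle above off the edges of margin exactly $\m{e'}$. Concretely I would argue by strong induction over the finitely many margin values of $\mG$, from the largest downward: for each edge $g=(a,b)$ of margin exactly $\m{e'}$ appearing on the current cycle, $g$ is a retained incoming edge of $b$, so either $b$ has in $\msRV$ a strictly stronger incoming edge — in which case the induction hypothesis supplies a detour lying entirely in $\msRVgm[\m{e'}]$ that can be spliced in — or the DAG-failure recorded by $\sRVcycleCond$ already provides the required strong connection; after all such edges are rerouted, shortcutting the resulting closed walk yields a simple path from $z$ to $y$ in $\msRVgm[\m{e'}]$.

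The step I expect to be the real obstacle is exactly this rerouting and orientation fix: what $\sRVbranchingCond$ hands us is a path $y\rightsquigarrow z$ of margin $\ge\m{e'}$, which has the \emph{wrong} orientation and is not strict enough for the conclusion, so converting it into a path $z\rightsquigarrow y$ of margin $>\m{e'}$ forces a delicate simultaneous use of the failure of $\sRVbranchingCond$ for $e$ and of $\sRVcycleCond$ for $e'$, together with the bookkeeping needed to guarantee that the detours exist, stay above the margin threshold, and assemble into a \emph{simple} directed path. Getting the induction to close — in particular pinning down the right induction measure and the right strengthened statement so that the hypothesis is available for the co-incoming edges encountered during rerouting — is where the work lies; extracting the initial $y$-to-$z$ path is the easy part.
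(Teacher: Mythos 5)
Your first paragraph is, in substance, the paper's entire proof: since $e=(x,y)$ is retained, \sRVbranchingCond must fail for $e$; the edge $e'=(z,y)\in E(\msRVgm[\m{e}])$ is a candidate witness for that condition, so its clause ``no path from $y$ to $z$ in $\msRVm[\m{e'}]$'' must be violated, i.e.\ there is a path from $y$ to $z$ of strength at least $\m{e'}$. The paper phrases this as a one-line proof by contradiction, but it is exactly your ``routine unwinding'' of \Cref{def:strongSemiRiverProcess}.

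The remainder of your proposal chases a target that is not the lemma's intended content and is in fact false. The printed statement contains typos: the conclusion should be a path \emph{from $y$ to $z$} in $\msRVm[\m{e'}]$, not a path from $z$ to $y$ in $\msRVgm[\m{e'}]$. This is confirmed by how the lemma is invoked in the proof of \Cref{thm:AtoC}, where it is cited to produce a path \Path{y}{z} with $\str(\Path{y}{z}) \geq \m{\eiin}$, and by the paper's own proof, which derives a contradiction with precisely the branching clause. The literal reading you are trying to rescue has a counterexample: take alternatives $x,y,z,a$ with $\m{z,y}=4$, $\m{x,y}=2$, $\m{y,a}=6$, $\m{a,z}=4$ (all other margins nonpositive or zero). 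All four of these edges enter \msRV, so $e=(x,y)$ and $e'=(z,y)$ satisfy the hypotheses, yet $\msRVgm[4]$ contains only the single edge $(y,a)$ and hence no path from $z$ to $y$. (With ``$\geq$'' in place of ``$>$'' the $z$-to-$y$ direction would be trivially witnessed by $e'$ itself, so that reading is vacuous.) Consequently the rerouting induction you sketch cannot close --- there is no true statement to reroute toward --- and the appeal to the failure of \sRVcycleCond for $e'$ buys nothing here. Stop after your first paragraph; that is the whole proof.
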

\begin{proof}
    Let $y\in \alts$ have incoming edges $e, e' \in E(\msRV)$ with $e=(x,y)$ and $e'=(z,y)$ and $\m{e}<\m{e'}$.
    Assume towards contradiction that there is no path \Path{z}{y} in \msRV. Then the \sRVbranchingCond would be true for $e$ through the edge $e'$, which contradicts our assumption that $e \in E(\msRV)$.
\end{proof}
\begin{figure}[t] \centering
    \includegraphics[width = 0.5\linewidth]{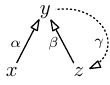}
    \caption{The edges $(x,y) \in \msRV$ with margin $\alpha$ and $(z,y) \in \msRV$ with margin $\beta$. Note that $\alpha < \beta$. There is the path from $y$ to $z$ in \msRV with strength $\gamma$ and $\beta \geq \gamma$. }
    \label{fig:semiRiverPathLemma}
\end{figure}

In order to relate the semi-River diagram back to River, we prove it is a superset of all River diagrams.
\begin{theorem} \label{pps:msRVsupersetMRV}
For any preference profile \prefs, an edge that is in a River diagram with some tiebreak is also in the semi-River diagram. Formally:
   $$E(\msRV(\prefs)) \supseteq \bigcup_{\lino \in \allDescOrders } E(\mRV(\lino, \prefs))$$
\end{theorem}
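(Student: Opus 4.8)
The plan is to prove, by induction on the position $i$ along a fixed ordering $\lino \in \allDescOrders$, that every edge River has inserted into $\mRVi[i](\lino)$ already belongs to \msRV; taking $i = \abs{E(\mG)}$ and quantifying over $\lino$ then yields the theorem. Because the semi-River diagram does not depend on the ordering used in its construction, I may run the semi-River process on this same $\lino$. The workhorse consequence of the inductive hypothesis is a finality observation: once semi-River keeps an edge, that decision is final and depends only on strictly-higher-margin edges, so any edge of $\mRVi[i-1](\lino)$ that lies in \msRV and has margin $\geq k$ also lies in $\msRVm[k]$, and in $\msRVgm[k]$ if its margin is strictly above $k$.

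Now let $e = (x,y) := \lino[i]$ and suppose River inserts $e$, so neither \iCond nor \iiCond holds for $e$ in $\mRVi[i-1](\lino)$; assume for contradiction that $e \notin E(\msRV)$, so \sRVbranchingCond or \sRVcycleCond holds for $e$. The branching case is direct. There is an incoming edge $e' = (z,y) \in E(\msRVgm[\m{e}])$ with $\m{e'} > \m{e}$ and no path from $y$ to $z$ in $\msRVm[\m{e'}]$. Since River processes $e'$ before $e$: if River keeps $e'$, or rejects it via \iCond, then $y$ then has an incoming edge of margin $\geq \m{e'}$ that survives to step $i$, so \iCond holds for $e$ --- a contradiction; if River rejects $e'$ via \iiCond, then by \Cref{obs:atLeastStrengthEdgesToBreakEdge} there was at that moment a path from $y$ to $z$ of strength $\geq \m{e'}$ in the current River diagram, and every edge of this path lies in $\mRVi[i-1](\lino)$ and has margin $\geq \m{e'}$, hence lies in $\msRVm[\m{e'}]$ by the finality observation, contradicting the choice of $e'$.

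The cycle case is where the real work lies, and I expect it to be the main obstacle. Write $S := \msRVgm[\m{e}]$ and let $B$ be the ancestors of $x$ up to $y$ in $S$; by assumption the induced subgraph $S[B]$ is a DAG whose only source is $y$. I claim that every $v \in B$ is reachable from $y$ inside $\mRVi[i-1](\lino)$; applied to $v = x \in B$ this makes \iiCond hold for $e$, contradicting the insertion of $e$. I prove the claim by induction along a topological order of $S[B]$: the unique source $y$ comes first and is trivially reachable from itself; for $v \neq y$ choose an incoming $S[B]$-edge $(u,v)$ of $v$ (one exists since $y$ is the only source), so $u$ precedes $v$ and the hypothesis gives a path from $y$ to $u$ in $\mRVi[i-1](\lino)$. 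Then I inspect River's treatment of $(u,v)$, which it processes before $e$: if River inserts it, concatenate to get $y \to u \to v$; if \iCond rejects it, then $v$ has an incoming edge $(u',v)$ present in $\mRVi[i-1](\lino)$, with $(u',v) \in S$ and $u' \in B$ (since $v \neq y$, prepending $(u',v)$ to $v$'s path to $x$ keeps it clear of incoming edges of $y$), so $u'$ precedes $v$ in the order and the hypothesis again gives $y \to u' \to v$; and if \iiCond rejects it, there is a path from $v$ to $u$ in $\mRVi[i-1](\lino)$, making $v$ and $y$ two ancestors of $u$ in this forest, hence comparable --- and here is the crucial point that makes the forest argument close, $y$ has no incoming edge in $\mRVi[i-1](\lino)$ because \iCond fails for $e$, so $y$ cannot be a proper descendant of $v$, forcing $y$ to be an ancestor of $v$, i.e.\ a path from $y$ to $v$. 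This completes both inductions, so $e \in E(\msRV)$ --- contradiction.
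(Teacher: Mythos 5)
Your proof is correct. It proves the statement in the direct rather than the contrapositive direction: you induct on the position in $\lino$ and show that every edge River keeps lands in \msRV, whereas the paper inducts over descending margins and shows that every edge the semi-River process rejects is also rejected by River under every $\lino$. The branching cases are essentially mirror images of one another (your three subcases for River's treatment of $e'$ correspond to the paper's case split on whether \iCond holds for $e'$, plus the same use of \Cref{obs:atLeastStrengthEdgesToBreakEdge} and the lifting of intermediate River edges into $\msRVm[\m{e'}]$). The genuine divergence is in the cycle case: the paper constructs one explicit path from $y$ to $x$ through the ancestor DAG, by walking backward from $x$ along the first-in-$\lino$ incoming edge at each vertex, and then verifies that River inserts every edge of that particular path; you instead show that $y$ reaches \emph{every} vertex of the ancestor DAG inside the intermediate River forest, by an inner induction along a topological order, closing the \iiCond subcase with the observation that two ancestors of a vertex in a forest are comparable and that $y$ cannot be the descendant because \iCond fails for $e$. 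Your version never has to identify which concrete incoming edge River retains for each DAG vertex --- it works with whichever one River happens to have kept --- so it avoids having to argue that a canonical backward walk stays inside the DAG and terminates at $y$; the price is the additional inner induction and the forest-comparability argument.
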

\setcounter{claim}{0}
\begin{proof}
    Let $\prefs$ be a preference profile with the margin graph \mG and the semi-River diagram \msRV. We denote the margin graph edges $E(\mG)$ as $E$, and the semi-River edges $E(\msRVm)$ as $\EsRVm$.
    We prove the contraposition of \Cref{pps:msRVsupersetMRV}: For an edge $e\in E$ and  
    for all $\lino \in \allDescOrders$, if $e\notin \EsRV$, then $e\notin E(\mRV(\lino))$.
    We show this via induction over the edges in $E$ in order of decreasing margin.
    Let $\lino \in \allDescOrders$. Denote the River edges $E(\mRV(\lino))$ as $\ERV$.

    For the induction base, let $k_{\max}$ be the maximum margin of all edges in $E$. For an edge with this margin in the margin graph \sRVbranchingCond and \sRVcycleCond cannot be satisfied, because there exists no edge with higher margin in the semi-River diagram. So all edges with maximum margin are included in $\EsRVm[k_{\max}]$.
    Therefore, the implication is true that all such edges that are not in $\EsRVm[k_{\max}]$ are also not in $\ERV$.
 
    As our induction hypothesis for all edges $e_>\in E$ with $\m{e_>}> k$, we assume that $e_> \notin \EsRVgm[k]$ implies $e_> \notin \ERV$.
    From this induction hypothesis follows \Cref{315:claim:pathExcludedByIH}, which will be useful in the induction step.
    \begin{claim}\label{315:claim:pathExcludedByIH}
        For any path \PathP in the margin graph \mG, holds that if \PathP is not in $\msRVgm[k]$, then it is also not in $\mRV(\lino)$. 
    \end{claim}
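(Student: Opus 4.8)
The plan is to argue by contraposition, but at the level of a single edge rather than the whole path. If \PathP is not contained in $\msRVgm[k]$, then some edge $e$ of \PathP witnesses this, i.e.\ $e \notin E(\msRVgm[k])$. I would then show that this very edge $e$ is absent from $\mRV(\lino)$; since a path that omits an edge of a graph cannot be a path of that graph, this gives $\PathP \not\subseteq \mRV(\lino)$, which is exactly the assertion of \Cref{315:claim:pathExcludedByIH}.

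Carrying this out: first, fix the witness edge $e$ on \PathP with $e \notin E(\msRVgm[k])$. Second, recall that $\msRVgm[k]$ is produced by the semi-River process run only on the edges of margin strictly greater than $k$, so $E(\msRVgm[k]) = \EsRVgm[k]$ contains only such edges; consequently, for the paths this claim is invoked on in the induction step — those built from edges of margin exceeding $k$ — the witness satisfies $\m{e} > k$. Third, this is precisely the situation covered by the induction hypothesis: $\m{e} > k$ together with $e \notin \EsRVgm[k]$ yields $e \notin \ERV = E(\mRV(\lino))$. Fourth, conclude: \PathP contains the edge $e \notin E(\mRV(\lino))$, hence \PathP is not a path of $\mRV(\lino)$.

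I do not expect a real obstacle here, and indeed the paper already signals this by saying the claim "follows from" the induction hypothesis — the argument is essentially a single application of it. The one point that genuinely needs care is the second step: one must make sure the chosen witness edge has margin above $k$, so that the induction hypothesis (which constrains only edges of margin greater than $k$) actually applies. This is immediate from the construction of $\msRVgm[k]$, provided the paths at issue are the high-strength ones arising in the induction; it may be worth stating that scope explicitly. If one instead wants the statement for an arbitrary path \PathP in \mG, the same two lines still give what the induction step needs, namely that \PathP is not contained in the River diagram after all edges of margin greater than $k$ have been processed: a witness edge of margin $> k$ is excluded by the hypothesis, while a witness edge of margin $\leq k$ has simply not yet been inserted at that stage (River only ever adds edges).
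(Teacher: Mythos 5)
Your proposal matches the paper's proof essentially verbatim: pick a witness edge of \PathP absent from $\EsRVgm[k]$, apply the induction hypothesis to conclude it is absent from $\ERV$, and hence the path is not in $\mRV(\lino)$. Your added caveat about ensuring the witness edge has margin greater than $k$ is a genuine (if minor) point of care that the paper's one-line proof glosses over, and your observation that the weaker ``not yet inserted'' conclusion suffices for the induction step correctly covers the case of an arbitrary path.
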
 
    \begin{proof}
        Since \PathP is not in $\msRVgm[k]$, there is at least one edge on \PathP which is not in $\EsRVgm[k]$. By the induction hypothesis, this edge is also not in $\ERV$. Thus \PathP is not in $\mRV(\lino)$.
    \end{proof}

    For the induction step, let $e=(x,y) \in E$ with $\m{e}=k$ and $e \notin \EsRVm[k]$. This means either \sRVbranchingCond or \sRVcycleCond are satisfied for $e$ on $\EsRVgm[k]$, by \Cref{def:strongSemiRiverProcess}.
    We distinguish both cases and show $e \notin \ERV$.
    \begin{casesproof}
        \item
    \textbf{Let \sRVbranchingCond be satisfied for $e=(x,y)$.} \quad 
    In this case, we show that the River branching condition \iCond is satisfied for $e$ via another incoming edge which is added to the River diagram before $e$.
    From our assumption that \sRVbranchingCond is satisfied for $e$ follows that there exists an incoming edge $e' = (z,y)$ of $y$ in $\msRVgm[\m{e}]$ with no path \Path{y}{z} in $\msRVm[\m{e'}]$.
    We show that such an $e'$ also exists in \ERV.
    We distinguish the cases whether or not \iCond is satisfied for $e'$,
    as in either case there is an incoming edge to $y$ when $e$ is processed by River.
    \begin{casesproof}
    \item
        If the \iCond is satisfied for $e'$, then there is some edge $e''$ incoming to $y$ already in $\ERV$.
        Because it has higher margin than $e$, the edge $e'$ is before $e$ in $\lino$ and thus it is processed by River first.
        So in this case, $e''$ also satisfies \iCond for $e$.
    \item
        Otherwise, \iCond is false for $e'$.
        Since the path \Path{z}{y} is not in $\msRVgm[\m{e}]$, it follows from \Cref{315:claim:pathExcludedByIH}, that \Path{z}{y} is not in $\mRV(\lino)$, and so \iiCond is also false for $e'$.
        Thus $e'$ is added to $\ERV$ and satisfies \iCond for $e$.
    \end{casesproof}

    So $e \notin \ERV$, because \iCond is satisfied for $e$ when it is processed by River.
        \item
    \textbf{Let \sRVcycleCond be satisfied for $e=(x,y)$.} \quad
    In this case, we show that the River cycle condition \iiCond is satisfied for $e$. 
    Using our assumption that \sRVcycleCond is satisfied for $e$ we find a path \Path{y}{x} in \msRVgm[\m{e}] that is in $\mRV(\lino)$. 
    Recall by the \sRVcycleCond we know that all ancestors of $x$ up to $y$ from a DAG in \msRV where $y$ is the only source. 

    To find \Path{y}{x} we start in $x$ and select the first incoming edge $(z,x)$ in $\lino$ in $\msRVgm[\m{e}]$ to the predecessor $z$. We recursively repeat this from $z$ to its predecessor $z'$, until we reach $y$.
    For every edge $(u,v)$ on this \Path{y}{x}, 
    $(u,v)$ is the first incoming edge to $v$ in \lino, the \iCond is false for $(u,v)$. 
    Also, for every path \Path{v}{u} from $v$ to $u$ in the margin graph, \Path{v}{u} is not in $\msRVm[\m{u,v}]$, as that would create a cycle and contradict that all ancestors of $x$ up to $y$ form a DAG. This \Path{v}{u} is not in $\ERV$ by \Cref{315:claim:pathExcludedByIH} and thus \iiCond is false for $(u,v)$.
    So \iCond and \iiCond are false for $(u,v)$, therefore we conclude $(u,v) \in \ERV$. 
    Since all edges on \Path{y}{x} are in $\ERV$, the path \Path{y}{x} is in $\mRV(\lino)$. Thus \Path{y}{x} satisfies \iiCond for $e$.

    So $e \notin \ERV$, because \iiCond is satisfied for $e$ when it is processed.
\end{casesproof}

    With this case distinction we show that if \sRVbranchingCond is satisfied for $e$, this implies that \iCond is satisfied for $e$, and if \sRVcycleCond is satisfied for $e$, this implies that \iiCond is satisfied for $e$. So if $e$ is excluded from the semi-River diagram, it is also excluded from the River diagram, for the arbitrary ordering $\lino$.

    This shows our induction step and we conclude that for any preference profile, for any edge $e\in E$, and any $\lino \in \allDescOrders$, if $e\notin \EsRV$, then $e\notin E(\mRV(\lino))$.
    This is the contraposition of \Cref{pps:msRVsupersetMRV} and thus concludes the proof.
\end{proof}

Note that this proof shows the intuition we gave at the beginning of this section, namely, \sRVbranchingCond implies \iCond and \sRVcycleCond implies \iiCond for any edge in the margin graph.

    \section{The Recursive Strongest Path Tree} \label{sct:2}
In this section, we define the \rsptf and give the \ourPrim algorithm to compute it. Its essential property is, that if a vertex precedes another vertex via a path in the tree, then this path is a \textit{recursive strongest path} between them in the graph the tree was computed on. A \textit{recursive strongest path} is a path, where all subpaths are also strongest paths.
We use this property to show the correctness of our process.

This section is independent of social choice theory, so we use definitions in terms of graph theory. These definitions are analogous if graph $G$ is a margin graph.
\subsection*{Defining the Recursive Strongest Path Tree}

In a weighted directed graph $G=(V,E)$ with $w \colon E \to \mathbb{N}$, the \textit{strength} of a path $\Path{v_1}{v_l}=\langle v_1, \dots, v_l \rangle$ is the weight of its smallest edge. 
We define the \textit{strongest path weight} $\minweight(v_1,v_l)$ from $v_1$ to $v_l$ in $G$ as follows: $$ \minweight(v_1,v_l) = \begin{cases*}
    \max\sset{\str(\PathP) \mid \PathP \text{ any path from $v_1$ to $v_l$ in $G$ }} & \text{if there is a path from $v_1$ to $v_l$,}\\
    -\infty & \text{otherwise.}
\end{cases*}$$
A \textit{strongest path} from vertex $u$ to vertex $v$ is any path $\Path{u}{v}$ with $\str(\Path{u}{v}) = \minweight(u,v)$.

The property that all subpaths of a strongest path are also strongest paths is essential for the concept of the \rsptf. We formally define this for paths first.
\begin{definition}[recursive strongest path]\label{def:recursiveStrongestPath}
A \textit{recursive strongest path} from vertex $v_1$ to vertex $v_l$ is any path $\Path{v_1}{v_l}$ where all subpaths are also strongest paths. That is, for all $1\leq i < j \leq \abs{\Path{v_1}{v_l}}$ the subpath $\Path{v_1}{v}[i:j] = \langle v_i, v_{i+1}, \dots, v_j\rangle$ is a strongest path from $v_i$ to $v_j$ in $G$. Formally: $\str(\Path{v_1}{v_l}[i:j]) = \minweight(v_i,v_j)$.
\end{definition}
It is easy to see that all subpaths of a recursive strongest path are also recursive strongest paths.
The intuition behind this definition is, that we want optimal substructure for strongest paths, which they usually lack.
From this follows the definition of the \rsptf.
\begin{definition}[\rsptf] \label{def:rspt}
    A \textit{\rsptf} $T$ of $G=(V,E)$ is a tree rooted in $s \in V$.
    For vertices $u,v\in V$ with $u \prec v$ in $T$ the unique simple path from $u$ to $v$ in $T$ is a recurse strongest path.
    Note that since $s$ is the root it has recursive strongest paths to all vertices in $V$.
\end{definition}
Now we give an algorithm to compute a \rsptf with a designated root for a given graph.

\subsection*{Applying Prim's Algorithm to a Directed Graph}
Prim's algorithm is usually used to find a minimum spanning tree (MST) in an undirected weighted graph.
It operates by selecting a start vertex arbitrarily and then growing the set of vertices in the MST from there, by always selecting the minimum weight outgoing edge, until the MST spans the entire graph
\cite[Ch. 21]{cormenIntroductionAlgorithms2022}, \cite{primShortestConnectionNetworks1957}.
We apply Prim's algorithm on a directed graph. Note that this does not result in an MST. Here the selected start vertex becomes the root of the tree. 
Instead of the minimum outgoing edge, we use the maximum outgoing edge. 
So, we run Prim's Algorithm, rooted in vertex $s$ in a directed graph.
We call this the \ourPrim algorithm and it computes a \rsptf in $G$ rooted in $s$.
To give an intuition of why this computes such a tree:
The \ourPrim algorithm always takes the maximum edge, which connects the current tree to a vertex that is not already in the tree. With this greedy approach, it always picks edges that are part of some strongest path. Regard the vertices $u,v\in V$, with $u$ already in the tree and $v$ not. If $(u,v)$ is the strongest edge, which connects the current tree to any vertex not yet in the tree, then there can be no stronger path from $u$ to $v$ than the edge $(u,v)$.

We now give a more formal description. Note that this is a more abstract description of Prim's algorithm than that given by \textcite{cormenIntroductionAlgorithms2022} which is more suited to our purpose.
\begin{definition}[The \ourPrim algorithm]
    \label{rspt:algorithm}
Given a weighted directed graph $G=(V,E)$ with weight function $w \colon E \to \mathbb{N}$ and a start vertex $s\in V$, which has a path in $G$ to every other vertex in $V$.
\begin{enumerate}
    \item Initialize $S$ with $\sset{s}$ and $T$ with $(V, \emptyset)$.
    \item while the set of crossing edges $\msset{(u,v) \in E \mid u\in S \land v \not \in S}$ is not empty:
    \begin{enumerate}
        \item[2.1.]  extract any maximum weighted edge $e_{\text{max}}=(u,v)$ from the set of crossing edges.
        \item[2.2.]  add $v$ to $S$ and add $e_{\text{max}}$ to $E(T)$.
    \end{enumerate}
    \item return $T$.
\end{enumerate}
In the remainder of the thesis, we write $\rspt(s, G)$ for the result of the \ourPrim algorithm with input $G=(V,E)$ and $s\in V$.

In reference to this algorithm for the remainder of the thesis, we call the vertices in $S$ \textit{explored} and the remaining vertices in \Vrem the set of \textit{unexplored} vertices. We call an edge $(u,v) \in E$ with $u \in S$ and $v \in \Vrem$ a \textit{crossing} edge.
The \ourPrim algorithm consists of repeatedly adding a \textit{maximum crossing edge}. This is any crossing edge with maximum weight among all crossing edges. 
\end{definition}
Note that we restrict the starting vertex $s$ to vertices with a path to every other vertex in $V$, as in the remainder of the thesis, the resulting \rspt is only relevant for our decision process if $s$ has paths to every other vertex $v\in V$ in $G$. If we execute \ourPrim with other starting vertices the only properties we need are that it has polynomial runtime and results in any subgraph of $G$. The latter property is evident as it only adds vertices and edges from $E(G)$.

We prove correctness of the \ourPrim algorithm next.
\begin{theorem} \label{pps:AlgoRSPT}
   Let $G=(V,E)$ be a directed graph with weight function $w \colon E \to \mathbb{N}$ and a start vertex $s\in V$ with a path to every other vertex in $V$. \ourPrim started on $s$ results in a \rsptf of $G$ rooted in $s$.
\end{theorem}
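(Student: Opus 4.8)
The plan is to establish two things about the tree $T$ returned by \ourPrim started at $s$: first, that $T$ really is a tree rooted at $s$ spanning all of $V$; and second, that for every pair $u \prec v$ in $T$ the unique $u$--$v$ path in $T$ is a recursive strongest path in $G$. Together these are exactly \Cref{def:rspt}.

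For the first part, I would track the loop. Every iteration adds one previously unexplored vertex $v$ together with a single incoming edge $(u,v)$ whose tail $u$ is already explored, so $T$ never gains a cycle and every vertex except $s$ ends up with in-degree exactly one. The loop halts only when no crossing edge remains; but whenever $S \neq V$, picking any $v \in \Vrem$ and using the hypothesis that $s$ has a path to $v$ in $G$, the first edge of that path leaving $S$ is a crossing edge, so the loop cannot stop early. Hence $S = V$ at termination, $T$ has $\abs{V}-1$ edges, is acyclic, and, by an easy induction on the order of exploration, every vertex is reachable from $s$ along tree edges. So $T$ is a rooted tree with a unique simple path from $s$ to every vertex.

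For the second part, I would index the iterations $1,\dots,\abs{V}-1$, write $e_t = (u_t,v_t)$ for the edge added at step $t$, let $S_t$ be the explored set after step $t$ (so $S_0 = \sset{s}$ and $S_t = S_{t-1}\cup\sset{v_t}$), and for a vertex $x$ let $\tau(x)$ denote the step at which $x$ becomes explored, with $\tau(s)=0$. The key observation is a directed ``cut property'': at step $t$ the edge $e_t$ has maximum weight among all crossing edges of $(S_{t-1}, V\setminus S_{t-1})$, hence for every $a\in S_{t-1}$ and every $b\notin S_{t-1}$ any path from $a$ to $b$ contains such a crossing edge, so $\minweight(a,b)\leq\w{e_t}$. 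Now take any $u\prec v$ in $T$ with tree path $P^{T}_{u,v}=\langle u = x_0,x_1,\dots,x_\ell = v\rangle$; since the tail of each tree edge is explored before its head, $\tau(u)=\tau(x_0)<\tau(x_1)<\dots<\tau(x_\ell)=\tau(v)$, and the edge $(x_{i-1},x_i)$ is exactly $e_{\tau(x_i)}$. Let $e_{t^*}=(x_{k-1},x_k)$ be a minimum-weight edge on $P^{T}_{u,v}$, so $t^*=\tau(x_k)$ and $\str(P^{T}_{u,v})=\w{e_{t^*}}$; then $\tau(u)<\tau(x_1)\leq t^*$ gives $u\in S_{t^*-1}$ and $\tau(v)\geq t^*$ gives $v\notin S_{t^*-1}$, so the cut property yields $\minweight(u,v)\leq\w{e_{t^*}}=\str(P^{T}_{u,v})$. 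As the reverse inequality is immediate, $P^{T}_{u,v}$ is a strongest path. Every subpath $P^{T}_{x_i,x_j}$ of $P^{T}_{u,v}$ is again a root-directed tree path, so the identical argument shows it is a strongest path too; hence all subpaths of $P^{T}_{u,v}$ are strongest paths, i.e.\ $P^{T}_{u,v}$ is a recursive strongest path. This, with the first part, gives the theorem.

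The step needing the most care is this directed cut property together with the exploration-time bookkeeping: one must be sure the bottleneck edge $e_{t^*}$ is crossed by \emph{every} competing $u$--$v$ path, which hinges on $u$ being explored strictly before step $t^*$ while $v$ is not yet explored at that point --- and this rests on the small but essential structural fact that exploration times strictly increase down any path of $T$. The remaining ingredients (termination, acyclicity, and the trivial direction $\str(\PathP)\leq\minweight$) are routine.
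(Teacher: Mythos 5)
Your proof is correct and follows essentially the same route as the paper's: establish that the output is a spanning rooted tree via loop invariants, then argue that the bottleneck edge of any tree path from $u$ to $v$ is a maximum crossing edge of a cut separating $u$ from $v$ at the moment it is selected, so no competing path can be stronger. The only difference is presentational — you phrase this as a direct cut-property bound with explicit exploration times, whereas the paper argues by contradiction against a hypothetical stronger path; your bookkeeping of $\tau(\cdot)$ actually makes the step ``$v\notin S$ at the critical moment'' more explicit than the paper does.
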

\setcounter{claim}{0}
\begin{proof}
    Let $G=(V,E)$ be the input graph with start vertex $s\in V$ which has a path to every other vertex in $V$. In this proof, we write \rspt for $\rspt(s, G)$.

    We first prove the loop invariants that (\RN{1}) $T$ is a forest and (\RN{2}) $S$ is the set vertices that have a path from the root $s$ and (\RN{3}) the vertices in $V \setminus S$ are isolated in $T$.
    At the initialization in step 1 the invariants hold as all vertices of $V$ are isolated in $T$ and $S=\sset{s}$.
    We assume the invariants hold for $T$ and $S$ at the start of the loop in step 2.
    The edge $(u,v)$ extracted in step 2.1 is a crossing edge. This means $u \in S$ and $v \notin S$.
    In step 2.2 the edge $(u,v)$ is added to $E(T)$.
    After this, $T$ is still a forest, because the previously isolated vertex $v$ now has exactly one incoming edge in $E(T)$ and cannot be part of a cycle in $T$. Thus, (\RN{1}) holds after the iteration of the loop.
    In step 2.2. the vertex $v$ is also added to $S$. The path from $s$ to $u$ in $T$ combined with $(u,v)\in E(T)$ is a path from $s$ vertex $v$ in $T$. Thus, (\RN{2}) holds after the iteration of the loop. The vertex $v$ is no longer in $V \setminus S$ and thus (\RN{3}) also still holds.
    The loop terminates because the loop condition in step 2 can only be true if there is a vertex in $V \setminus S$, but at every iteration, a vertex from $\Vrem$ is added to $S$ in step 2.2 and $V$ is finite.
    Since the algorithm only terminates once there is no more crossing edge and by our restriction there is a path from $s$ to every vertex in $G$ at the end $S=V$.

    From these invariants follows that \rspt is a tree.

    For any two vertices with a path from one to the other in \rspt, we show that there exists no stronger path in $G$ by contradiction.
    Let $u,v \in V$ with $u \prec v$ via the path \Path{u}{v} in \rspt.
    From  $u \prec v$ follows that \ourPrim adds $u$ to $S$ before $v$, because all edges on \Path{u}{v} must be crossing edges at some point in order to be selected.
    We assume towards contradiction that there is a stronger path $\Path{u}{v}'$ with $\str(\Path{u}{v}') > \str(\Path{u}{v})$.
    On the path $\Path{u}{v}$ in $\rspt$ there is a first minimum edge $e_{\min}$ with $w(e_{\min})=\str(\Path{u}{v})$ which is first selected by \ourPrim, by the definition of the path $\str$. 
    From our assumption that $\str(\Path{u}{v}') > \str(\Path{u}{v})$, follows for every edge $e'$ on $\Path{u}{v}'$ that $w(e') > w(e_{\min})$.
    We now regard the point during the algorithm's execution where $e_{\min}$ is a crossing edge that is selected by \ourPrim.
    For an illustration, regard \Cref{fig:rsptProofIntuition}.
    \begin{figure}[t]
        \centering
        \includegraphics[width = 0.5 \linewidth]{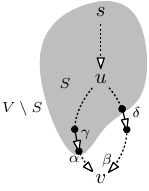}
        \caption{An illustration of \msRV for the proof of \Cref{pps:AlgoRSPT}. We see that, since $u\in S$ and $v \in \Vrem$, there are crossing edges on the path \Path{u}{v} from $u$ to $v$ with strength $\alpha$ which is \rspt and the supposedly stronger path $\Path{u}{v}'$ in $G$ with strength $b$. The edge $e_{\min}$ on \Path{u}{v} has weight  $\gamma$ with $\gamma = \alpha$ and the crossing edge on $\Path{u}{v}'$ has weight $\delta$ with $\delta \geq \beta$. The assumption in the proof is that $\alpha < \beta$ and thus $\gamma<\delta$.}
        \label{fig:rsptProofIntuition}
    \end{figure}
    Recall that the \ourPrim algorithm always selects the maximum crossing edge. 
    Since $\Path{u}{v}'$ starts in $u \in S$, there is an edge $e'$ on $\Path{u}{v}'$ which is a crossing edge.
    That $e_{\min}$ on \Path{u}{v} in \rspt is selected instead of $e'$ contradicts that $w(e')> w(e_{\min})$.
    Thus the assumption that a stronger path $\Path{u}{v}'$ exists is false.

    So for all $u,v \in V$ with $u \prec v$, the path \Path{u}{v} in \rspt is a strongest path.
    As we picked $u$ and $v$ arbitrarily, the path \Path{u}{v} is also a recursive strongest path.
\end{proof}

Since \ourPrim is simply the application of the Prim algorithm to a directed graph, 
we can use Prim as described in \textcite[Ch 21.2]{cormenIntroductionAlgorithms2022} only with directed edges, a fixed root, and a maximum priority queue and it outputs a tree with the same properties. That algorithm can be implemented to have a runtime in $\bigO{\abs{E} + \abs{V} \log \abs{V}}$ \cite{cormenIntroductionAlgorithms2022}. 
It follows that:
\begin{corollary}
    \label{obs:rsptInP}
    For input $G=(V,E)$ and $s \in V$ computing \ourPrim is polynomial in the input size.
\end{corollary}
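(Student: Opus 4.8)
The plan is to observe that \ourPrim is nothing but Prim's algorithm executed on a directed graph, and therefore to reuse, essentially verbatim, the textbook priority-queue implementation and its running-time analysis. First I would fix the concrete implementation: maintain a max-priority queue over the unexplored vertices $\Vrem$, where the key of an unexplored vertex $v$ is the largest weight of a crossing edge entering $v$ (and $-\infty$ if there is none), together with a pointer to an edge achieving that weight. Initialize $S=\{s\}$, insert every $v \neq s$ with the key induced by the outgoing edges of $s$, and then loop: extract the vertex $v$ with maximum key, add $v$ to $S$ and its stored edge to $E(T)$, and for every outgoing edge $(v,b)$ with $b \in \Vrem$ replace $b$'s key and stored edge by $w(v,b)$ and $(v,b)$ whenever $w(v,b)$ is larger than $b$'s current key. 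Since the key of every unexplored vertex always equals the weight of the best crossing edge into it, the vertex/edge pair extracted in each iteration is exactly a maximum crossing edge, so this implementation carries out \Cref{rspt:algorithm} step for step; correctness of the output is already given by \Cref{pps:AlgoRSPT}.

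Next I would count the priority-queue operations. The main loop runs at most $\abs{V}-1$ times and performs one \texttt{EXTRACT-MAX} per iteration, so there are $\bigO{\abs{V}}$ extractions plus $\bigO{\abs{V}}$ initial insertions. Each directed edge $(v,b)$ is inspected exactly once -- when $v$ is added to $S$ -- and causes at most one key update, so there are $\bigO{\abs{E}}$ of those. With a binary heap this totals $\bigO{(\abs{V}+\abs{E})\log\abs{V}}$, and with a Fibonacci heap $\bigO{\abs{E}+\abs{V}\log\abs{V}}$, as cited from \textcite[Ch.\ 21.2]{cormenIntroductionAlgorithms2022}; both are polynomial in the input size $\abs{V}+\abs{E}$ (and in particular in $\abs{V}$, since $\abs{E} = \bigO{\abs{V}^2}$).

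The only genuine thing to verify -- and the closest this gets to an obstacle -- is that replacing undirected edges by directed ones, fixing the root, and using a max-queue in place of a min-queue does not perturb the analysis. Fixing the root $s$ merely skips the arbitrary choice of start vertex; the min/max swap is purely cosmetic; and where undirected Prim relaxes all edges incident to the newly added vertex, \ourPrim relaxes only its outgoing edges, which is precisely the set of edges that newly become crossing edges. Hence each edge is still relaxed at most once and the amortized bounds on the queue operations are unchanged. Combining this count with \Cref{pps:AlgoRSPT} for correctness yields the claim that \ourPrim runs in time polynomial in the input size.
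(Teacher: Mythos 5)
Your proposal is correct and follows the same route as the paper: the paper likewise observes that \ourPrim is Prim's algorithm with directed edges, a fixed root, and a max-priority queue, and cites the $\bigO{\abs{E} + \abs{V}\log\abs{V}}$ bound from \textcite[Ch.~21.2]{cormenIntroductionAlgorithms2022}. You simply spell out the priority-queue implementation and the operation count that the paper leaves to the citation.
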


\section{Introducing A Tiebreak Based On Selected Edges} \label{sct:3} 

To propagate the information we gained by creating the \rsptf for an alternative on the semi-River diagram to the River process, we devise a linear descending ordering over the edges of the margin graph i.e. a tiebreak. We execute River with this ordering to check if the alternative is a winner.
To define this special ordering in the next section, we now give a way to compute an ordering intuitively favoring a set of selected edges.
This is done by ordering the selected edges before the edges that are not selected while maintaining a descending order by margin.
Note that for an arbitrary edge in the margin graph, being in this selected set does not mean that it is the River diagram with the resulting ordering.

\begin{definition} \label{def:ordering}
For a preference profile \prefs with margin graph \mG, with edges $E(\mG)$ and a subset $E' \subseteq E(\mG)$, we define an ordering $o^{E'} \in \allDescOrders$, such that the edges are ordered by decreasing margin and sets of edges with the same margin are ordered based on their inclusion in $E'$: 
\begin{enumerate}
    \item[(1)] $\m{e} > \m{e'} \rightarrow \text{inv}_{o^{E'}}(e) <\text{inv}_{o^{E'}}(e')$ and 
    \item[(2)] $ \m{e} = \m{e'} \land 
    e\in E' \land e' \not \in E' \rightarrow\text{inv}_{o^{E'}}(e) <\text{inv}_{o^{E'}}(e')$.
\end{enumerate}
We define the function $\setOrdering$ that maps from $E(\mG)$ and $E'$ to this $o^{E'}$, with $\setOrdering(E(\mG),E') = o^{E'}$.
\end{definition}
To give some examples of such an ordering: If $E'=E(\mG)$ or $E'=\emptyset$, then the resulting ordering can be any ordering from $\allDescOrders$, because the property (2) would trivially be true. If $E'$ contains only a single edge, this edge is first among all edges with the same margin. If $E'$ contains multiple edges of the same margin, then these are before the other edges with the same margin in the resulting descending linear order but have arbitrary order among themselves.

To compute $\setOrdering$ this, we first sort the edges in $E(\mG)$ by their inclusion in $E'$ to satisfy (2) and then sort them stably by their margin in descending order to satisfy (1). 
As we use the $\setOrdering$ in our process, we need to show that it has polynomial runtime. 
\begin{lemma}
    \label{pps:oInP}
    Given the margin graph \mG of preference profile $\prefs \in \allPrefs$ for $E'\subseteq E(\mG)$, computing the $\setOrdering$ has polynomial runtime.
\end{lemma}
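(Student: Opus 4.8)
The plan is to analyze the two-phase sorting procedure sketched right after \Cref{def:ordering}: first sort the edges by their membership in $E'$, then stably sort the result by decreasing margin. I would show that each phase runs in time polynomial in $n = \abs{\alts}$, and then check that the output genuinely satisfies conditions (1) and (2) of \Cref{def:ordering}.

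First I would make the membership test explicit. We build a lookup structure for $E'$ (e.g.\ a hash set, or a boolean table indexed by the $\bigO{n^2}$ ordered pairs of alternatives) in $\bigO{\abs{E'}}$ time, so that ``$e \in E'$'' can be answered in $\bigO{1}$. Phase one then performs a stable sort of the list of all edges in $E(\mG)$ with key $0$ for edges in $E'$ and key $1$ for edges not in $E'$; phase two performs a stable sort of that list with key $-\m{e}$. Using any stable $\bigO{N \log N}$ comparison sort (e.g.\ merge sort) on $N = \abs{E(\mG)}$ elements, each phase costs $\bigO{\abs{E(\mG)} \log \abs{E(\mG)}}$, and the preprocessing is dominated by this. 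Since $\mG$ is a margin graph we have $\abs{E(\mG)} \le 2 n (n-1) \in \bigO{n^2}$, so the total runtime is in $\bigO{n^2 \log n}$, which is polynomial in the input size.

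Next I would verify correctness so that the algorithm really computes a valid $\setOrdering(E(\mG), E')$. Condition (1) is immediate: the final sort is by strictly decreasing margin, so $\m{e} > \m{e'}$ forces $e$ to precede $e'$. For condition (2), fix a margin value $k$; after phase one every edge of $E'$ is placed before every edge not in $E'$, so in particular this holds within the group of edges of margin $k$. Because phase two is \emph{stable}, it never swaps the relative order of two edges sharing the same margin; hence within the margin-$k$ group the $E'$-edges still come first. Thus $\m{e} = \m{e'}$ with $e \in E'$ and $e' \notin E'$ implies $e$ precedes $e'$, which is exactly condition (2).

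There is no real difficulty here; the only point requiring care is the correctness argument for (2), namely noting that the \emph{stability} of the second sort is precisely what preserves the $E'$-before-$\overline{E'}$ ordering established in the first phase. Everything else is a routine runtime count and an appeal to the $\bigO{n^2}$ bound on the number of edges of a margin graph.
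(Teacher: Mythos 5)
Your proposal is correct and follows essentially the same route as the paper: two sorts (first by membership in $E'$, then a stable sort by decreasing margin), each in $\bigO{\abs{E(\mG)}\log\abs{E(\mG)}}$ time with $\bigO{1}$ comparisons, giving a polynomial bound. You simply spell out the constant-time membership test, the $\bigO{n^2}$ edge bound, and the role of stability in preserving condition (2), all of which the paper leaves implicit.
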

\begin{proof}
    The $\setOrdering$ can be computed by sorting the edges in $E(\mG)$ twice. The comparison operation for either sorting can be done in $\bigO{1}$.
    Since sorting is in $\bigO{\abs{E(\mG)} \cdot \log (\abs{E(\mG)})}$ this is also the runtime of computing $\setOrdering$. 

\end{proof}

\section{The Algorithm to Compute River PUT Winners} \label{sct:mainProof}

We combine the algorithms from \Cref{sct:1}, \Cref{sct:2}, and \Cref{sct:3} with River into the \rvPutCheck Algorithm:
For a margin graph of a preference profile, we first compute the semi-River diagram. On this diagram, we compute the \rsptf for the alternative we want to check. Based on the edges of this tree we construct an ordering with which River outputs a diagram, where our alternative wins, if and only if it is a $\RVPUT$ winner. 
We recall the notation from the previous sections and then combine them to formally define this \rvPutCheck.

\subsection*{Building the \rvPutCheck}
\subsubsection*{The semi-River Diagram from \Cref{sct:1}}
The semi-River diagram $\msRV(\prefs)$ is a subgraph of the margin graph \mG of the preference profile \prefs. We define it in \Cref{def:strongSemiRiverProcess}. It contains all edges that are part of a River diagram under some tiebreak as shown in \Cref{pps:msRVsupersetMRV}.
We restate \Cref{lma:semiRiverPathExistance} here as we use it extensively in the later proof:
\semiRiverPathExistance*


\subsubsection*{The \rsptf from \Cref{sct:2}}
Taking this semi-River diagram \msRV, we construct a \rsptf for the alternative $\wcand$ in \msRV.
The $\rspt(\wcand, \msRV)$ is such a tree. It is the result of the \ourPrim algorithm. This algorithm always picks a maximum edge crossing from the set of explored nodes to the set of unexplored nodes. 
Note that alternatives that are $\RVPUT$ winners have immunity by \Cref{lma:PUTwinnersBreakAllIncomingEdges}. So, these alternatives meet our precondition from \Cref{rspt:algorithm}. For other alternatives, the output graph is not relevant. 
We recall the following properties of $\rspt(\wcand, \msRV)$ from \Cref{def:rspt} on the semi-River diagram:
\begin{itemize}
    \item \rspt is a tree. This implies that every alternative has at most one incoming edge and there are no cycles. 
    \item The alternative $\wcand$ is the root of \rspt.
    \item For all alternatives $u,v \in \alts$, if $u \prec v$ this implies that the unique path from $u$ to $v$ in \rspt is a recursive strongest path in the semi-River diagram \msRV. 
\end{itemize}
In this section, we show $\rspt(\wcand, \msRV)$ is a River diagram if and only if \wcand is a $\RVPUT$ winner.

\subsubsection*{The $\setOrdering$ from \Cref{sct:3}}
We use $\setOrdering$ to create a tiebreak where \wcand may be a winner of $\RVPUT$ for a preference profile \prefs.
We name this ordering $\ospecial$ and define it as 
\begin{definition}
\[\ospecial = \setOrdering(E(\mG),E(\rspt(\wcand, \msRV(\prefs)))).\]
\end{definition}

Thereby, we combine the semi-River diagram and the \ourPrim algorithm on that diagram, into an ordering that can be used with River.
The ordering \ospecial is a descending linear ordering over the edges in \mG, where for any set of edges with equal margin, the edges in $E(\rspt(\wcand,\msRV(\prefs)))$ are ordered before edges are not in $E(\rspt(\wcand,\msRV(\prefs)))$ by \Cref{def:ordering}.


\subsubsection*{The \rvPutCheck}
Using \ospecial we can now define our decision process for the winning set of River with PUT tiebreaking.
\begin{definition} \label{def:contructiveRVPUTcheck}
For a preference profile \prefs with the margin graph $\mG(\prefs)$, for any alternative $\wcand \in \alts$ we decide whether \wcand is a $\RVPUT$ winner of $\prefs$, by computing $\mRV(\ospecial, \prefs)$ and checking whether \wcand is the root of that River diagram. 
So we define \rvPutCheck(\wcand, \prefs) as the result of the following process:
\begin{enumerate}
    \item Compute \msRV(\prefs).
    \item Compute \rspt(\wcand, \msRV).
    \item Compute $\ospecial = \setOrdering(E(\mG), E(\rspt))$.
    \item Compute the winner $a$ of $\RV(\ospecial, \prefs)$.
    \item Return \texttt{true} if $\wcand = a$, return \texttt{false} otherwise.
\end{enumerate}
Formally, the \rvPutCheck(\wcand, \prefs) returns \texttt{true}, if and only if \[
    \wcand \in \RV(\setOrdering(E(\mG(\prefs)), E(\rspt(\wcand, \msRV(\prefs)))), \prefs).
\]
\end{definition}

\subsection*{Proof of Correctness and Computational Tractability}
Using this definition, we now show that this algorithm can be used to determine whether an alternative is a $\RVPUT$ winner for a preference profile. 
\begin{theorem} \label{thm:contructiveRVPUTcheckCorrect}
    The \rvPutCheck algorithm is correct.
\end{theorem}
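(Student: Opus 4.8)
The claim to prove (unfolding \Cref{def:contructiveRVPUTcheck}) is the biconditional: $\wcand \in \RV(\ospecial, \prefs)$ if and only if $\wcand \in \RVPUT(\prefs)$, where $\ospecial = \setOrdering(E(\mG), E(\rspt(\wcand, \msRV(\prefs))))$. The ``only if'' direction is immediate: $\ospecial$ is by construction a descending linear ordering, so $\ospecial \in \allDescOrders$ and hence $\RV(\ospecial, \prefs) \subseteq \bigcup_{\lino \in \allDescOrders}\RV(\lino, \prefs) = \RVPUT(\prefs)$. In particular, if $\wcand$ is not a $\RVPUT$ winner it wins River under no ordering, so \rvPutCheck correctly returns \texttt{false}; this also covers the degenerate case where $\wcand$ does not reach every vertex in $\msRV$, so that $\rspt(\wcand,\msRV)$ need not be a spanning tree. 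The real work is the ``if'' direction, which I would prove by showing that when $\wcand\in\RVPUT(\prefs)$ the River diagram under $\ospecial$ equals the tree $T := \rspt(\wcand, \msRV(\prefs))$, whose root is $\wcand$.

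\emph{Setup.} Assume $\wcand\in\RVPUT(\prefs)$, so there is $\lino^\ast\in\allDescOrders$ with $\wcand$ the root of $\mRV(\lino^\ast)$. By \Cref{obs:propertiesOfRiverDiagram} that diagram is a spanning rooted tree, so $\wcand$ has a path to every alternative in it, and by \Cref{pps:msRVsupersetMRV} all these paths lie in $\msRV$. Thus $\wcand$ meets the precondition of \ourPrim, and by \Cref{pps:AlgoRSPT} the tree $T$ is a spanning tree of $\alts$ rooted at $\wcand$ such that whenever $u\prec v$ in $T$, the $T$-path from $u$ to $v$ is a recursive strongest path of $\msRV$. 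I would also record the immunity fact (a consequence of \Cref{lma:PUTwinnersBreakAllIncomingEdges} via \Cref{obs:atLeastStrengthEdgesToBreakEdge} and \Cref{pps:msRVsupersetMRV}): every margin-graph edge $(x,\wcand)$ into $\wcand$ is excluded from $\mRV(\lino^\ast)$ by \iiCond, so there is a $\wcand$-to-$x$ path of strength at least $\m{x,\wcand}$ inside $\msRV$; hence the $T$-path from $\wcand$ to $x$, being a strongest path, also has strength at least $\m{x,\wcand}$, i.e.\ all of its edges have margin at least $\m{x,\wcand}$.

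\emph{Main lemma and induction.} The heart of the argument is $E(T)\subseteq E(\mRV(\ospecial))$; since $T$ is spanning and $\mRV(\ospecial)$ is an acyclic spanning rooted tree (\Cref{obs:propertiesOfRiverDiagram}), both have $n-1$ edges, so this forces $\mRV(\ospecial)=T$ and in particular $\wcand$ is its root, which is what we need. I would prove the inclusion by induction over the prefix of $\ospecial$ already processed, with the invariant that $\mRVi[i](\ospecial)$ is a subforest of $T$ containing every $T$-edge among $\ospecial[1],\dots,\ospecial[i]$ (the forest part is also guaranteed by \Cref{river:intermediateForest}). Maintenance: (a) if the processed edge $(p,y)$ is in $T$, then by the invariant the current diagram lies in $T$, so its only possible incoming edge to $y$ is $(p,y)$ itself — not yet processed — hence \iCond fails; and a $y$-to-$p$ path in the current diagram would close a cycle in $T$, so \iiCond fails; thus $(p,y)$ is inserted and the invariant survives. (b) If the processed edge $e=(x,y)$ is not in $T$, I must show it is \emph{not} inserted: if $e\notin\msRV$, \Cref{pps:msRVsupersetMRV} already gives $e\notin\mRV(\ospecial)$; if $y=\wcand$, the immunity fact gives a $T$-path from $\wcand$ to $x$ all of whose edges have margin at least $\m e$ and lie in $T$, hence precede $e$ in $\ospecial$ by \Cref{def:ordering} and are present by the invariant, so \iiCond blocks $e$; if $y\neq\wcand$ and the $T$-parent edge $(p,y)$ of $y$ has $\m{p,y}\geq\m e$, then $(p,y)$ precedes $e$ in $\ospecial$ and is present by the invariant, so \iCond blocks $e$. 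The remaining case is $y\neq\wcand$, $e=(x,y)\in\msRV\setminus E(T)$, and $\m{p,y}<\m e$.

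\emph{The main obstacle.} For that last case I would prove a \textbf{Structure Claim}: such an edge $e$ must satisfy $y\prec x$ in $T$. Excluding $x\prec y$ is quick — then the $T$-path from $x$ to $y$ would be a strongest path of strength at least $\m e$ (the direct edge $e$ witnesses $\minweightsrv(x,y)\geq\m e$), forcing its last edge $(p,y)$ to have margin at least $\m e$, contradicting $\m{p,y}<\m e$. Excluding that $x$ and $y$ are incomparable in $T$ is the delicate point: I would apply \Cref{lma:semiRiverPathExistance} to the two incoming edges $(p,y)$ and $e$ of $y$ to obtain a $y$-to-$x$ path in $\msRV$ of strength at least $\m e$, and combine it with the recursive-strongest-path property of $T$ and with the way \ourPrim explores $\msRV$ (each vertex enters after its $T$-parent via a maximum crossing edge, so along every $T$-path vertices are explored root-to-leaf) to derive a contradiction. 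Once the Structure Claim holds, the case closes exactly as the $y=\wcand$ case did: $y\prec x$ yields a $T$-path from $y$ to $x$ of strength at least $\m e$ (\Cref{lma:semiRiverPathExistance}), all of whose edges precede $e$ in $\ospecial$ and are present by the invariant, so \iiCond blocks $e$. I expect this Structure Claim — essentially, that no semi-River edge can point ``across'' $T$ into a vertex whose $T$-parent edge is strictly lighter — to be the longest and most technical step; everything else is bookkeeping with the tiebreak order and the forest invariant.
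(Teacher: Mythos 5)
Your overall architecture is sound and matches the paper's at the top level: the easy direction is the paper's \Cref{thm:CtoA}, and the hard direction is established, as in \Cref{thm:AtoC}, by showing $\mRV(\ospecial) = \rspt(\wcand,\msRV)$ via the inclusion $\Erspt \subseteq \Ervt$ plus a cardinality argument. Where you genuinely diverge is in the induction invariant. The paper maintains only the one\-/sided statement ``every $T$-edge already processed is in the River diagram'' and, for each $T$-edge, refutes \iCond and \iiCond by contradiction (its Cases \RN{1} and \RN{2}); it never needs to argue during the induction that non-$T$ edges are rejected, since that falls out of the edge count at the end. You instead maintain the two\-/sided invariant that $\mRVi[i](\ospecial)$ is a subforest of $T$. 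This buys you a trivial maintenance step for $T$-edges (their head's only possible parent in a subforest of $T$ is the $T$-parent itself, and a back\-/path would close a cycle in $T$), but it shifts all the work onto showing that every non-$T$ edge is blocked. Your case analysis there is exhaustive and the first three cases ($e\notin\msRV$; $y=\wcand$ via immunity; $\m{p,y}\geq\m{e}$ via \iCond) are correct, including the tiebreak bookkeeping with \Cref{def:ordering}.

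The issue is that your \textbf{Structure Claim} is not proved, and it is precisely the mathematical core of the theorem. Ruling out $x \prec y$ is fine as you wrote it. But ruling out incomparability of $x$ and $y$ in $T$ --- equivalently, showing $y\prec x$ --- is exactly the paper's Case \RN{1} argument about \ourPrim's exploration order (with its variables $z,y,\eiin$ playing the roles of your $x,y,e$): one must show (i) that \ourPrim cannot explore $x$ before $y$, because at the moment the weaker $T$-edge $(p,y)$ is selected as a maximum crossing edge, the strictly heavier crossing edge $e=(x,y)$ would have been preferred; and (ii) that once $y$ is explored, every edge on the path $\Path{y}{x}$ of strength $\geq\m{e}$ guaranteed by \Cref{lma:semiRiverPathExistance} outweighs every pre\-/existing crossing edge, so \ourPrim must exhaust that path (or an even stronger detour necessarily rooted in $y$'s subtree) before touching any old crossing edge, forcing $y\prec x$. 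Your sketch names the right ingredients, but as written it defers the one step that carries the weight of the whole proof; until it is carried out, the remaining case of part (b) --- and hence the invariant, and hence the theorem --- is not established. The claim itself is true and the argument does go through, so this is a gap of execution rather than a wrong approach; but be aware that filling it in will reproduce, nearly verbatim, the sub\-/case analysis the paper performs in Case \RN{1}.1 and Case \RN{1}.2 of the proof of \Cref{thm:AtoC}.
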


To show correctness of the biimplication $\wcand \in \RVPUT(\prefs) \leftrightarrow \wcand \in \RV(\ospecial)$, we first assume that \wcand is a $\RVPUT$ winner and show that then \wcand is a winner of $\RV(\ospecial)$ in \Cref{thm:AtoC}. We do this by using the aforementioned properties to prove that the edges chosen by River given $\ospecial$ are equal to \Erspt. If \wcand is not a $\RVPUT$ winner, it cannot be the root of any River diagram, and the reverse direction in \Cref{thm:CtoA} is straightforward.
We prove that the \rvPutCheck does indeed only take polynomial time to compute in \Cref{thm:RvPutInP} because all steps of the process only take polynomial time.

The following theorem states the first direction. Its proof is of central importance to this thesis. 
\begin{theorem}
    \label{thm:AtoC}
    If $\wcand \in \alts$ is a winner of $\RVPUT(\prefs)$, then $\wcand$ is a winner of $\RV(\ospecial, \prefs)$ for any $\prefs \in \allPrefs$. 
\end{theorem}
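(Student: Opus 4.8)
The plan is to establish the stronger fact that River run with the tiebreak $\ospecial$ builds exactly the recursive strongest path tree, i.e.\ $E(\mRV(\ospecial, \prefs)) = E(\rspt(\wcand, \msRV(\prefs)))$; write $\rspt$ for this tree. Granting this, since $\wcand$ is the root of $\rspt$ and $\mRV(\ospecial, \prefs)$ is a rooted tree whose root is the River winner (\Cref{obs:propertiesOfRiverDiagram}), $\wcand$ has no incoming edge in $\mRV(\ospecial, \prefs)$ and hence is a winner of $\RV(\ospecial, \prefs)$. Note first that $\rspt$ really is a recursive strongest path tree: by the hypothesis $\wcand \in \RVPUT(\prefs)$ it is the root of some $\mRV(\lino, \prefs)$, which spans $\alts$, and $E(\mRV(\lino, \prefs)) \subseteq E(\msRV(\prefs))$ by \Cref{pps:msRVsupersetMRV}, so $\wcand$ has a path in $\msRV(\prefs)$ to every alternative and \Cref{pps:AlgoRSPT} applies.

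I would prove the equality by induction over the edges in the order given by $\ospecial$, carrying the invariant that after the first $i$ edges have been processed, (a) $\mRVi[i](\ospecial) \subseteq E(\rspt)$, and (b) every edge of $\rspt$ occurring among the first $i$ edges of $\ospecial$ has already been added; at $i = \abs{E(\mG)}$ this yields $E(\mRV(\ospecial)) = E(\rspt)$. For the step, let $e = (x,y)$ be the next edge. If $e \in E(\rspt)$ it cannot be blocked: $y$ has no incoming $\rspt$-edge besides $e$ itself, so \iCond fails, and $\mRVi[i] \subseteq \rspt$ is acyclic with $x \prec y$, so there is no $y$-to-$x$ path and \iiCond fails; hence $e$ is added. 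If $e \notin E(\rspt)$, I must show $e$ is rejected. A useful observation: as $e \notin E(\rspt)$, every $\rspt$-edge of margin $\geq \m{e}$ precedes $e$ in $\ospecial$ (those of larger margin because $\ospecial$ is descending, those of equal margin because $\ospecial$ ranks the edges of $E(\rspt)$ ahead of the others of equal margin, cf.\ \Cref{def:ordering}), so by the invariant $\mRVi[i]$ equals exactly the sub-forest $F := \set{e' \in E(\rspt)}{\m{e'} \geq \m{e}}$ of $\rspt$. If $e \notin E(\msRV)$ then rejection is immediate from \Cref{pps:msRVsupersetMRV}.

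So assume $e = (x,y) \in E(\msRV) \setminus E(\rspt)$. If $y = \wcand$: since $(x, \wcand) \in E(\mG)$, \Cref{lma:PUTwinnersBreakAllIncomingEdges} and \Cref{pps:msRVsupersetMRV} give a path from $\wcand$ to $x$ in $\msRV$ of strength at least $\m{e}$, so $\minweight(\wcand, x) \geq \m{e}$; the $\rspt$-path from $\wcand$ to $x$ is a recursive strongest path (\Cref{def:rspt}), hence has strength $\minweight(\wcand, x) \geq \m{e}$, so all its edges have margin at least $\m{e}$ and it lies in $F = \mRVi[i]$, which makes \iiCond hold. If $y \neq \wcand$ and the parent edge $e_y$ of $y$ in $\rspt$ has $\m{e_y} \geq \m{e}$, then $e_y \in F = \mRVi[i]$, so $y$ already has an incoming edge and \iCond holds.

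The remaining case --- $e = (x,y) \in E(\msRV) \setminus E(\rspt)$ with $y \neq \wcand$ and $\m{e_y} < \m{e}$ --- is the crux and the main obstacle. Applying \Cref{lma:semiRiverPathExistance} to the two incoming edges $e_y$, $e$ of $y$ (with $\m{e_y} < \m{e}$) gives a path from $y$ to $x$ in $\msRV$ of strength at least $\m{e}$, so $\minweight(y,x) \geq \m{e}$. I would then show $y \prec x$ in $\rspt$ by analysing the run of \ourPrim on $\msRV$: when $y$ enters the explored set via the maximum crossing edge $e_y$, the vertex $x$ is not yet explored (else $e$ would be a crossing edge of strictly larger weight, contradicting maximality of $e_y$), and while $x$ stays unexplored the $y$-to-$x$ path just found always contains a crossing edge, which has weight $> \m{e_y}$, so every vertex entering the explored set after $y$ and up to and including $x$ is added via an edge of weight $> \m{e_y}$. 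Therefore, following parent pointers upward from $x$ in $\rspt$ can never reach a vertex explored before $y$ --- its defining edge would then already have been a crossing edge of weight $> \m{e_y}$ at the moment $y$ was added, contradicting maximality of $e_y$ --- so this chain of parents passes through $y$, i.e.\ $y \prec x$. Finally, $y \prec x$ together with $\minweight(y,x) \geq \m{e}$ and the recursive strongest path property force the $\rspt$-path from $y$ to $x$ to use only edges of margin at least $\m{e}$, so it lies in $F = \mRVi[i]$; thus \iiCond holds and $e$ is rejected, completing the induction. The bulk of the difficulty is exactly this last case, particularly the argument about the \ourPrim run that yields $y \prec x$.
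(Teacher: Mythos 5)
Your proof is correct and reaches the same target as the paper --- the exact equality $E(\mRV(\ospecial,\prefs)) = E(\rspt(\wcand,\msRV(\prefs)))$, established by induction over the edges in the order of \ospecial, using the same three ingredients: \Cref{lma:semiRiverPathExistance}, the recursive-strongest-path property of \rspt, and an analysis of the \ourPrim exploration order to conclude $y \prec x$. The organization of the induction, however, is genuinely different. The paper proves only the inclusion $\Erspt \subseteq \Ervt$, splitting the step by which River condition hypothetically blocks an \rspt-edge and deriving a contradiction in each case (its Case \RN{2} has to hunt for an edge $\ebrm$ on the blocking path that lies outside \rspt and rule it out of the River diagram), and then obtains equality from a cardinality count on two spanning trees. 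You instead carry the two-sided invariant that the intermediate diagram $\mRVi[i](\ospecial)$ equals exactly the sub-forest of \rspt-edges processed so far; this makes the acceptance of \rspt-edges immediate (the intermediate diagram is a subforest of the acyclic \rspt, so neither \iCond nor \iiCond can fire), collapses the paper's Case \RN{2} entirely, and shifts all the work into directly exhibiting the blocking structure for each rejected edge --- the parent edge $e_y$ for \iCond, or the strong $y$-to-$x$ path for \iiCond. The crux case you identify ($e \in E(\msRV)\setminus E(\rspt)$ with $\m{e_y} < \m{e}$) is the mirror image of the paper's Case \RN{1}: the paper looks at the \rspt-edge and rules out its potential blocker, you look at the non-\rspt-edge and construct its blocker; the underlying \ourPrim argument is the same. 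Your stronger invariant buys a cleaner and somewhat shorter argument at the cost of having to handle rejection exhaustively (including the easy subcases $e \notin E(\msRV)$, $y = \wcand$, and $\m{e_y} \geq \m{e}$, all of which you cover correctly).
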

In the following proof with preference profile \prefs, we write \rspt for $\rspt(\wcand,\msRV(\prefs))$ and \rvt for $\mRV(\ospecial, \prefs)$. We write \Ervti[i] for the edges of \rvti[i]. 

\setcounter{claim}{0}
\begin{proof}
 Let $\prefs \in \allPrefs$ be a preference profile with the margin graph \mG and let $\wcand\in \RVPUT(\prefs)$ be a River PUT winner i.e there exists an ordering $\lino \in \allDescOrders$ so that $\wcand\in \RV(\lino, \prefs)$. We prove that $\wcand \in \RV(\ospecial, \prefs)$. This is equivalent to \wcand being the root of the River diagram $\mRV(\ospecial, \prefs)$, which we show via the even stronger property that this River diagram does not only have the same root but is equal to the tree output by \ourPrim for \wcand on \msRV :
 \[ \rspt(\wcand, \msRV(\prefs)) = \mRV(\ospecial, \prefs).\]
To that end, we show that $\rspt(\wcand, \msRV)$ is a subset of $\mRV(\ospecial, \prefs)$ by induction. Having this property, the proof for the equality is straightforward.
Because $\rspt$ and $\rvt$ are rooted trees with the same amount of edges $\abs{\alts}-1$ it follows:
\begin{claim} \label{main:from_subset_follows_equality}
    From $\Erspt \subseteq \Ervt$ we can conclude $\Erspt = \Ervt $. 
\end{claim}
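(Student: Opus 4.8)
The plan is to observe that \Cref{main:from_subset_follows_equality} is nothing but a cardinality argument: a finite set contained in another finite set of the same size must equal it. So all that is needed is to check that $\abs{\Erspt}$ and $\abs{\Ervt}$ are both $\abs{\alts} - 1$.

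For $\rvt = \mRV(\ospecial, \prefs)$ this is immediate from \Cref{obs:propertiesOfRiverDiagram}: the River diagram is a rooted tree, and a rooted tree on the vertex set $\alts$ has exactly one incoming edge at every non-root vertex and none at the root, so $\abs{\Ervt} = \abs{\alts} - 1$. For $\rspt = \rspt(\wcand, \msRV)$ I would first invoke that $\wcand$ is an $\RVPUT$ winner, hence has immunity by \Cref{lma:PUTwinnersBreakAllIncomingEdges}, and therefore meets the precondition of \ourPrim in \Cref{rspt:algorithm} that the start vertex reaches every other vertex within $\msRV$. Then \Cref{pps:AlgoRSPT}, together with the loop invariants in its proof (which force $S = V$ on termination), gives that $\rspt$ is a tree rooted at $\wcand$ spanning all of $\alts$, so $\abs{\Erspt} = \abs{\alts} - 1$ as well. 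Combining, $\Erspt \subseteq \Ervt$ and $\abs{\Erspt} = \abs{\Ervt}$ force $\Erspt = \Ervt$.

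I expect no genuine obstacle in this claim. The only point that deserves care is the assertion that $\rspt$ is a \emph{spanning} tree of $\alts$ --- equivalently, that $\wcand$ can reach every alternative while staying inside $\msRV$, not merely inside the full margin graph $\mG$ --- and this is exactly the place where the hypothesis $\wcand \in \RVPUT(\prefs)$ of the surrounding \Cref{thm:AtoC} is used.
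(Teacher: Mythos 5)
Your proof is correct and matches the paper's own reasoning: the paper likewise justifies the claim by noting that both $\rspt$ and $\rvt$ are rooted trees with $\abs{\alts}-1$ edges, so the inclusion forces equality. Your extra care about why $\rspt$ is spanning (via the immunity of $\wcand$ and the precondition of \ourPrim) is exactly the justification the paper supplies a few paragraphs earlier.
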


We now show that $\Erspt \subseteq \Ervt$ 
by proving the implication $e\in \Erspt \rightarrow e \in \Ervt$. For this, we use induction over the edges in $E(\mG)$ as ordered in \ospecial. Recall that this is the order in which River processes the edges to generate \Ervt.

The following claim is relevant for the entire proof.
\begin{claim} \label{main:edgesInSemiRiver}
    For any edge $e \in E(\mG)$, if $e\in\Ervt$ or $e\in \Erspt$, this implies $e \in \EsRV$.
\end{claim}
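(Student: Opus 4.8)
The plan is to prove \Cref{main:edgesInSemiRiver} by a short case distinction on whether $e$ lies in \Ervt or in \Erspt, invoking in each case a result already established in the excerpt; there is no real obstacle here, the claim is essentially a bookkeeping step that lets the rest of the induction refer to the semi-River diagram freely.

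First I would handle the case $e \in \Erspt$. Recall that throughout this proof \rspt abbreviates $\rspt(\wcand, \msRV(\prefs))$, i.e.\ the output of the \ourPrim algorithm run \emph{on the semi-River diagram} \msRV. By the description of \ourPrim in \Cref{rspt:algorithm}, every edge added to $E(T)$ in step~2.2 is a maximum crossing edge of the input graph, hence in particular an edge of that input graph. Therefore $E(\rspt(\wcand, \msRV)) \subseteq E(\msRV) = \EsRV$, and so $e \in \EsRV$.

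Second, for the case $e \in \Ervt$, note that by \Cref{def:ordering} the ordering \ospecial is a descending linear ordering of $E(\mG)$, so $\ospecial \in \allDescOrders$. Hence $\mRV(\ospecial, \prefs)$ occurs as one of the River diagrams appearing in the union on the right-hand side of \Cref{pps:msRVsupersetMRV}, which asserts $E(\msRV(\prefs)) \supseteq \bigcup_{\lino \in \allDescOrders} E(\mRV(\lino, \prefs))$. Consequently $\Ervt = E(\mRV(\ospecial, \prefs)) \subseteq E(\msRV(\prefs)) = \EsRV$, so again $e \in \EsRV$. Combining the two cases proves the claim. The only points worth stating explicitly are that \ospecial genuinely lies in \allDescOrders (immediate from the construction of \setOrdering in \Cref{sct:3}) and that \Cref{pps:msRVsupersetMRV} applies to \emph{every} tiebreak and not merely to a fixed one — which is exactly what it says.
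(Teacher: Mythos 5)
Your proof is correct and matches the paper's argument exactly: the paper likewise cites \Cref{pps:msRVsupersetMRV} for the $\Ervt$ case and the fact that \rspt is computed on (hence is a subgraph of) \msRV for the $\Erspt$ case. Your version just spells out the two details the paper leaves implicit, namely that $\ospecial \in \allDescOrders$ and that \ourPrim only selects edges of its input graph.
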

\begin{proof}
    This is because the edges of the semi-River diagram are a superset of all possible River diagrams by \Cref{pps:msRVsupersetMRV} and because \rspt is a subgraph of the semi-River diagram.  
\end{proof}

For our induction base, regard $\ospecial [1] \in E(\mG)$. Let $\ospecial [1] \in \Erspt$. Since $\ospecial [1]$ is the first edge processed by River, it is added to \Ervt.

For our induction hypothesis, we assume that $\ospecial [i'] \in \Erspt \rightarrow \ospecial [i'] \in \Ervt$ for $1\leq i'< i$ .
Note that from this follows: All edges with strictly higher margin than $\m{\ospecial [i]}$ that are in \Erspt are also in \Ervt.
From this also follows \Cref{main:greaterPathsAreIncluded}. 
\begin{claim} \label{main:greaterPathsAreIncluded}
    A path \PathP with a strength greater than $\m{\ospecial [i]}$ that is in \rspt is also in \rvt.
\end{claim}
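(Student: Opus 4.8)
The plan is to reduce the claim to an edge-by-edge application of the induction hypothesis. First I would note that, by definition, a path \PathP lies in \rspt exactly when every edge of \PathP belongs to \Erspt, and that $\str(\PathP)$ is the smallest margin occurring among the edges of \PathP. Hence the hypothesis $\str(\PathP) > \m{\ospecial [i]}$ forces every single edge $e$ on \PathP to satisfy $\m{e} > \m{\ospecial [i]}$.

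Next I would invoke that \ospecial is a descending linear ordering (property~(1) of \Cref{def:ordering}): any edge of strictly larger margin than $\m{\ospecial [i]}$ occurs strictly before position $i$ in \ospecial, i.e.\ it equals $\ospecial [i']$ for some $i' < i$. Consequently each edge $e$ on \PathP is of the form $\ospecial [i']$ with $i' < i$ and lies in \Erspt, so the induction hypothesis --- equivalently, the observation stated immediately after it --- yields $e \in \Ervt$. Since this holds for every edge of \PathP, and a path lies in \rvt precisely when all of its edges lie in \Ervt, we conclude that \PathP is in \rvt, which is exactly the claim.

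I do not expect a genuine obstacle here: the statement is a bookkeeping consequence of the induction hypothesis together with the fact that \ospecial sorts the edges by non-increasing margin. The one point that deserves care is that the argument really does require \emph{strictly} greater strength, not merely strength $\geq \m{\ospecial [i]}$: an edge of margin exactly $\m{\ospecial [i]}$ could be ordered after $\ospecial [i]$ by the tiebreak within that margin class, and then the induction hypothesis would not apply to it. This is precisely why the claim is phrased with ``strength greater than $\m{\ospecial [i]}$'', and the written proof should keep this dependence on strict inequality explicit.
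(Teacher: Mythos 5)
Your proposal is correct and follows essentially the same route as the paper: both reduce the claim to applying the induction hypothesis edge by edge, using that the strength of a path is the minimum margin of its edges, so every edge on \PathP has margin strictly greater than $\m{\ospecial[i]}$ and hence appears before position $i$ in the descending order \ospecial. Your explicit remark about why strict inequality is needed is a sound elaboration of a point the paper leaves implicit.
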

\begin{proof}
    Let \PathP be a path in \rspt with a strength greater than $\m{\ospecial [i]}$.
    Let $e_\PathP \in \Erspt$ be any edge on \PathP. Since \PathP has a strength greater than $\m{\ospecial [i]}$, any $e_\PathP$ has a margin greater than $\m{\ospecial [i]}$ per definition of the $\str$. By the induction hypothesis $e_\PathP$ is in \Ervt.
    Since all edges on path \PathP are in \Ervt, it is in \rvt.
\end{proof}

To show the induction step, 
let $e=(x,y)=\ospecial [i] \in E(\mG)$. 
We assume
\begin{align} 
    e\in\Erspt \label{main:e_in_rspt}
\end{align}
since otherwise the implication is trivially true.
From this and \Cref{main:edgesInSemiRiver} follows 
\begin{align}
    e \in \EsRV. \label{main:e_in_semiRiver}
\end{align}

Recall that the River branching condition \iCond for $e$ states that $y$ already has an incoming edge in $\Ervti[i-1]$. The River cycle condition \iiCond for $e$ states that there is a path from $y$ to $x$ in $\Ervti[i-1]$. Recall from \Cref{river:intermediateForest} that the intermediate River diagram $\Ervti[i-1]$ contains no cycles and every alternative has at most one incoming edge. Refer back to \Cref{def:river} for the River process.
We show that if either the River branching condition or the River cycle condition are satisfied for $e$, this leads to a contradiction.
From this we will conclude that they must both be false for $e$ and thus $e$ is in the River diagram. 

\paragraph{Case \RN{1}} Assume towards contradiction that \iCond is satisfied for $e$.
This means that there is another incoming edge $\eiin$ of $y$ with $\m{\eiin} \geq \m{e}$ in the River diagram:
\begin{align} \label{main:caseI:assumption_eein}
    \eiin=(z,y) \in \Ervti[i-1].
\end{align}

We now show there is a stronger path from $y$ to $z$ in the semi-River diagram. 
From \Cref{main:edgesInSemiRiver} follows that 
\begin{align} \label{main:caseI:eein_in_semiRiver}
    \eiin \in \EsRV. 
\end{align}
Because \rspt is a tree with at most one incoming edge per alternative by \Cref{def:rspt} and $e\in \Erspt$ is an incoming edge to $y$, we deduce that
\begin{align}  \label{main:caseI:eein_not_rspt}
    \eiin \notin \Erspt.
\end{align}
Now we show that $\eiin$ cannot have equal margin to $e$, as this leads to a direct contradiction.
If $\m{\eiin} = \m{e}$, then by \Cref{def:ordering} the edge $\eiin$ is after $e=\ospecial[i]$ in \ospecial, because $e\in \Erspt$ by \Cref{main:e_in_rspt} but $\eiin \notin \Erspt$ by \Cref{main:caseI:eein_not_rspt}. This contradicts that $\eiin \in \Ervti[i-1]$ from \Cref{main:caseI:assumption_eein}.
Therefore, we assume 
\begin{align} \label{main:caseI:eein_higherthan_e}
    \m{\eiin} > \m{e}. 
\end{align}
Note that then $\eiin$ is before $e$ in the ordering \ospecial. 
Since $e$ and $\eiin$ are incoming edges of $y$ in the semi-River diagram \msRV and $\m{\eiin} > \m{e}$ by \Cref{main:e_in_semiRiver,main:caseI:eein_in_semiRiver,main:caseI:eein_higherthan_e}, it follows from \Cref{lma:semiRiverPathExistance} that there exists a path \Path{y}{z} in \msRV with a strength of at least $\m{\eiin}$:
\begin{align} \label{main:caseI:pyz_atleast_eiin_higher_e}
    \Path{y}{z} \text{ in } \msRV, \qquad \str(\Path{y}{z}) \geq \m{\eiin} > \m{e}.
\end{align}

We use this path with the recursive strongest path property of \rspt, whose edges \ospecial is based on, to prove that $\eiin \notin \ERVo$. 
Recall this property from \Cref{def:rspt} says that for $a,b\in\alts$, if $a$ precedes $b$ in the tree, then it contains a strongest path from $a$ to $b$ from $\msRV$. 
To that end, we show that $y$ precedes $z$ in \rspt.
For an illustration refer to \Cref{fig:main:caseI:allg}.
\begin{figure}[t] \centering
    \includegraphics[width = 0.4\linewidth]{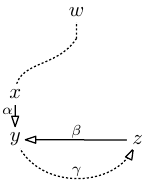}
    \caption{An illustration of \msRV for Case \RN{1}. The path from $\wcand$ to $y$ in \rspt goes through $e=(x,y)$. The edge $e$ has margin $\alpha$. There is the other incoming edge $\eiin =(z,y)$ with margin $\beta$. Note that $\alpha < \beta$. Because both edges are in the semi-River diagram, there exists a path $\Path{y}{z}$ form $y$ to $z$ with strength $\gamma \geq \beta$ in the semi-River diagram.}
    \label{fig:main:caseI:allg}
\end{figure}

We show this, by examining how the \ourPrim algorithm from \Cref{rspt:algorithm} creates \rspt. We distinguish the following cases: If it explores $z$ before $y$ this leads to a contradiction.
If it explores $y$ before $z$, we show that $y \prec z$ in \rspt follows.

Recall that \ourPrim explores the alternatives in the semi-River diagram \msRV by taking a maximum crossing edge, which is an edge with maximum margin among all edges connecting an already explored alternative in the set $S$ to an unexplored alternative in $\alts \setminus S$.


\begin{casesproof}
\item 
\emph{\ourPrim explores $z$ before $y$.} \quad
    This assumption contradicts that $e$ is in \Erspt. We show this by comparing $e=(x,y)$ to $\eiin=(z,y)$ in \EsRV.
    This is illustrated by \Cref{fig:main:caseI:noprec:equal:zbeforey}.
    \begin{figure}[t] \centering
        \includegraphics[width = 0.4\linewidth]{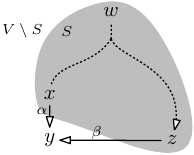}
        \caption{An illustration of \msRV for Case \RN{1}.1. We regard the point where $y\notin S$ and $z\in S$. The path from $\wcand$ to $y$ goes through the edge $e=(x,y)\in\Erspt$. The edge $e$ with margin $\alpha$ and the edge $\eiin=(z,y)$ with margin $\beta$ are crossing edges. Note that $\alpha<\beta$.}
        \label{fig:main:caseI:noprec:equal:zbeforey}
    \end{figure}

    If $e$ is added to \Erspt before $\eiin$ is a crossing edge, this would explore $y$ before $z$ and contradict our assumption.
    Also, $e$ cannot be added after $\eiin$ is a crossing edge as in that case $y$ is already explored, because $z$ is explored before $y$. 
    So the selection of $e$ into \Erspt can only happen when both $e$ and $\eiin$ are crossing edges, which contradicts that \ourPrim always selects the maximum crossing edge as $\m{e}<\m{\eiin}$ by \Cref{main:caseI:eein_higherthan_e}. 
    Thus if \ourPrim explores $z$ before $y$, the edge $e$ is not added to \Erspt which contradicts \Cref{main:e_in_rspt}. So the other case that \ourPrim explores $y$ first must be true.
\item
\emph{\ourPrim explores $y$ before $z$.} \quad
    In this case, we show that $y\prec z$ in \rspt follows from our assumptions.

    Note that $y$ is explored via $e$, because $e\in \Erspt$ by \Cref{main:e_in_rspt}.
    Let $E_{\text{cross}}$ be the set of crossing edges at the state of the execution of \ourPrim when $e$ is selected as a maximum crossing edge.
    Since $e$ has maximum margin in $E_{\text{cross}}$, it follows for any edge $e' \in E_{\text{cross}}$ that $\m{e'}\leq\m{e}$.
    Note that any edge $e_{yz}$ on \Path{y}{z} in \msRV has a higher margin than any $e'\in E_{\text{cross}}$: 
    \begin{align} \label{main:caseI:noprec:equalminweight:ybeforez:edges_on_path_yz_higher_any_pathwz}
    \m{e_{yz}} \geq \str(\Path{y}{z}) \geq \m{\eiin} > \m{e} \geq \m{e'},
    \end{align}
    by the definition of $\str$ and from \Cref{main:caseI:pyz_atleast_eiin_higher_e}.
    Upon the exploration of $y$ the first edge on this path is a crossing edge.
    This is illustrated by \Cref{fig:main:caseI:noprec:equal:ybeforez}.
    \begin{figure}[t] \centering
        \includegraphics[width = 0.4\linewidth]{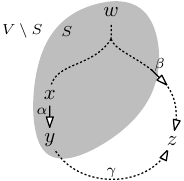}
        \caption{An illustration of \msRV for Case \RN{1}.2. We regard the point where $y\in S$ and $z\notin S$. The path from $\wcand$ to $y$ goes through the edge $e$ with margin $\alpha$ which is selected by \ourPrim to explore $y$. Thus its margin is at least as high as of any crossing edge $e'$ with margin $\beta$. The path \Path{y}{z} has the strength $\gamma$ with $\gamma>\beta$. }
        \label{fig:main:caseI:noprec:equal:ybeforez}
    \end{figure}
    
    After the exploration of $y$ via $e$ an edge $e'$ from $E_{\text{cross}}$ can only be selected once all edge of \Path{y}{z} are no longer crossing edges. This follows from \Cref{main:caseI:noprec:equalminweight:ybeforez:edges_on_path_yz_higher_any_pathwz}.
    We regard this point where all edge of \Path{y}{z} are no longer crossing edges. 
    \begin{casesproof}
    \item
        If all edges on \Path{y}{z} are no longer crossing, because all of them are selected into \Erspt, then $y \prec z$ in \rspt via this path.
    \item
        Otherwise, if not all edges on \Path{y}{z} are in \Erspt but none of them is still a crossing edge, it follows that there is some other path to $z$.
        Because the edges on this path to $z$ are selected instead of the edges on \Path{y}{z}, they have a margin higher than or equal to $\str(\Path{y}{z})$. This is not possible for any edge $e' \in E_{\text{cross}}$, so $y$ is the only starting point for such a path to $z$ with at least the strength of \Path{y}{z}.
        Thus $y \prec z$ in \rspt via this path.
    \end{casesproof}
     
    From this case distinction follows that $y \prec z$ in \rspt if $y$ is explored before $z$ by \ourPrim. 
\end{casesproof}
Because the other case leads to a contradiction, the \ourPrim explores $y$ before $z$ and thus $y \prec z$ in \rspt.
Then a path from $y$ to $z$ at least as strong as $\Path{y}{z}$ in \msRV exists in \rspt. This path is in $\rvti[i-1]$ by \Cref{main:greaterPathsAreIncluded}, and thus $\eiin \notin \Ervti[i-1]$ because otherwise it would close a cycle.
So we can conclude that \iCond is false for $e$.

\paragraph{Case \RN{2}} Assume towards contradiction that \iiCond is satisfied for $e=(x,y)$.
This means there is a path $\Path{y}{x}$ from $y$ to $x$ in \rvti[i-1]. 
 There must be some edge $\ebrm=(c,d) \in \msRV$ on $\Path{y}{x}$ with $\ebrm \notin \Erspt$, since by \Cref{main:e_in_rspt} edge $e\in\rspt$  and by \Cref{def:rspt} the \rspt is acyclic :
 \begin{align} \label{main:caseII:ebr_on_Pyx_not_in_Erspt}
    \ebrm=(c,d) \text{ on } \Path{y}{x} \qquad  \ebrm \notin \Erspt.
 \end{align}
 From the assumption that $\Path{y}{x}$ is in \rvti[i-1], follows that:
 \begin{align} \label{main:caseII:ebr_in_river}
    \ebrm \in \Ervti[i-1].
 \end{align}
 However, we show that this edge is not in \rvti[i-1], leading to a contradiction.
If $\m{\ebrm} = \m{e}$, then $e$ is before $\ebrm$ in \ospecial by \Cref{def:ordering,main:caseII:ebr_on_Pyx_not_in_Erspt,main:e_in_rspt}. Thus $\ebrm$ cannot be in \Ervti[i-1] and there is no $\Path{y}{x}$ in \rvti[i-1]. 
Therefore we can assume
\begin{align}  \label{main:caseII:ebr_higher_e}
    \m{\ebrm} > \m{e}.
\end{align}
Then $\ebrm$ is before $e$ in \ospecial.
For an illustration refer to \Cref{fig:main:caseII:allg}. 
\begin{figure}[t] \centering
    \includegraphics[width = 0.4\linewidth]{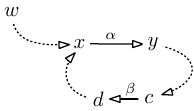}
    \caption{An illustration of \msRV for Case \RN{2}. The path from $\wcand$ to $y$ in \rspt goes through $e=(x,y) \in \Erspt$ which has a margin of $\alpha$. The path \Path{y}{x} is in the River-diagram $\rvti[i-1]$. There is an edge $\ebr=(c,d)$ on \Path{y}{x} with margin $\beta$. Note that $\beta >\alpha$.}
    \label{fig:main:caseII:allg}
\end{figure}

From $\ebrm=(c,d) \notin \Erspt$ also follows that either there is some other incoming edge $e'\in \msRV$ of $d$ that is in \Erspt instead of $\ebrm$ or there is no incoming edge of $d$ in \Erspt. 
In both cases $\ebrm$ is not in the River diagram.
\begin{casesproof}
    \item
    \emph{Let $e'$ be an incoming edge of $d$ in \Erspt.}
    If $\m{e'} > \m{e}$, then the edge $e'$ is in \Ervti[i-1] by our induction hypothesis. In that case, $\ebrm$ is not in \Ervti[i-1] because \rvt is a tree.
    So we assume
    \begin{align}\label{main:caseII:edashexists:edash_lower_ebr}
       e' \in \Erspt \qquad \m{e'} \leq \m{e}.
    \end{align}
    Since $\m{e'}<\m{\ebrm}$ it follows from \Cref{lma:semiRiverPathExistance} that there is a path \Path{d}{c} in \msRV, with $\m{\ebrm} \leq \str(\Path{d}{c})$.
    Note that from \Cref{main:caseII:ebr_higher_e} follows that $\m{e} < \str(\Path{d}{c})$.
    So we have:
    \begin{align}\label{main:caseII:edashexists:Pdc_atleast_ebr}
        \Path{d}{c} \text{ in } \msRV, \qquad \str(\Path{d}{c}) \geq \m{\ebrm} > \m{e} \geq \m{e'}.
    \end{align}
    For an illustration refer to \Cref{fig:main:caseII:edash}.
    \begin{figure}[t] \centering
        \includegraphics[width = 0.4\linewidth]{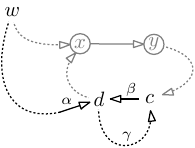}
        \caption{An illustration of \msRV for Case \RN{2}.1. The path from $\wcand$ to $d$ in \rspt goes through $e'$. There is the other incoming edge $\ebrm =(c,d)$. The margin $\alpha$ of $e'$ is smaller than the margin $\beta$ of $\ebrm$. Because both edges are in the semi-River diagram, there exists a path $\Path{d}{c}$ form $d$ to $c$ with strength $\gamma \geq \beta$ in the semi-River diagram.}
        \label{fig:main:caseII:edash}
    \end{figure}

    We now show that $d\prec c$ in \rspt.
    This is analogous to the argument in \textbf{Case \RN{1}}: For the alternatives $y,z \in \alts$ with edges $e \in \Erspt$ and $\eiin=(z,y) \in \msRV$ incoming to $y$ with $\m{e}<\m{\eiin}$ and with path \Path{y}{z} in \msRV with $\str(\Path{y}{z}) \geq \m{\eiin}$ from \Cref{main:e_in_rspt,main:caseI:eein_higherthan_e,main:caseI:pyz_atleast_eiin_higher_e} it follows that $y\prec z$ in \rspt.
    These are analogous to the alternatives $d,c \in \alts$ with edges $e' \in \Erspt$ and $\ebrm=(c,d) \in \msRV$ incoming to $d$ with $\m{e'}<\m{\ebrm}$ and with path \Path{d}{c} in \msRV with $\str(\Path{d}{c}) \geq \m{\ebrm}$ from \Cref{main:caseII:edashexists:edash_lower_ebr,main:caseII:ebr_on_Pyx_not_in_Erspt,main:caseII:edashexists:Pdc_atleast_ebr}. Also compare \Cref{fig:main:caseII:edash} to \Cref{fig:main:caseI:allg}.
    Thus, $d\prec c$ in \rspt and so there is a path from $d$ to $c$ in \rspt with a strength of at least the strength of \Path{d}{c} it follows that this path is in \rvti[i-1] by \Cref{main:greaterPathsAreIncluded}.
    So we can conclude that this case leads to a contradiction with $\ebrm$ being in $\Ervti[i-1]$ from \Cref{main:caseII:ebr_in_river} as this would create a cycle.
    \item
    \emph{Let $d$ have no incoming edges in \Erspt.}
    Then $d$ is the root $\wcand$ of \rspt. In this case, for all in edges $e_{\text{defeat}}=(l,d)$, there is a path from $d$ to $l$ with strength at least $\m{e_{\text{defeat}}}$ in the semi-River diagram \msRV, according to \Cref{lma:PUTwinnersBreakAllIncomingEdges}.
    Because $d$ is the root, these paths are in \rspt by \Cref{def:rspt}. 
    So the path from $d$ to $c$ in \rspt has a strength of at least $\ebrm=(c,d)$. By \Cref{main:caseII:ebr_higher_e} and \Cref{main:greaterPathsAreIncluded} this path is in \rvti[i-1]. This contradicts that $\ebrm$ is in $\Ervti[i-1]$ from \Cref{main:caseII:ebr_in_river} as this would create a cycle.
\end{casesproof}

So we can conclude that the assumption that $\ebrm$ is in the River diagram leads to a contradiction and thus $\ebrm \notin \Ervti[i-1]$. So there is no path $\Path{y}{x}$ in \rvti[i-1] and \iiCond is false for $e$.

\paragraph*{}
In \textbf{Case \RN{1}} and \textbf{Case \RN{2}} we show that \iCond and \iiCond cannot be satisfied for $e$ in \rvti[i-1] and thus $e \in \Ervti[i]$. This shows the induction step.

The induction proves that $\Erspt\subseteq \Ervt$. From \Cref{main:from_subset_follows_equality} now follows that $\Erspt=\Ervt$.
So if $\wcand \in \alts$ with $\wcand \in \RVPUT(\prefs)$ is the root of $\rspt(\wcand,\msRV(\prefs))$ it is also the root of $\mRV(\ospecial, \prefs)$. This means that $\wcand \in \RV(\ospecial, \prefs)$.
\end{proof}

We have shown that if the alternative \wcand is a $\RVPUT$ winner for a preference profile \prefs, then it is a $\RV(\ospecial, \prefs)$ winner.
Now we regard the case that \wcand is not a $\RVPUT$ winner.
Notice that for this direction, only the last stage of the process matters: Executing River. The output of the previous algorithms does not matter. 
\begin{theorem} \label{thm:CtoA} 
        For a preference profile \prefs, let the alternative $\wcand \in \alts$ not be a winner of $\RVPUT(\prefs)$. Then \wcand is not a winner of $\RV(\ospecial, \prefs)$, given our \ospecial.
\end{theorem}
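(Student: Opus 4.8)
The plan is to observe that this direction is essentially immediate from the definition of Parallel Universe Tiebreaking, provided we verify that $\ospecial$ is a legitimate descending linear ordering. Recall from \Cref{def:contructiveRVPUTcheck} that $\ospecial = \setOrdering(E(\mG), E(\rspt(\wcand, \msRV(\prefs))))$, and that by \Cref{def:ordering} the function $\setOrdering$ always returns an element of $\allDescOrders$ (property~(1) in that definition guarantees the ordering is descending by margin). Hence $\ospecial \in \allDescOrders$, regardless of what the intermediate algorithms produced; in particular we do not need any structural property of $\rspt(\wcand, \msRV(\prefs))$ here, only that $E(\rspt(\wcand, \msRV(\prefs)))$ is some subset of $E(\mG)$, which holds because \ourPrim only ever adds edges of its input graph and $\msRV(\prefs)$ is a subgraph of $\mG$.

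First I would recall that $\RVPUT(\prefs) = \bigcup_{\lino \in \allDescOrders} \RV(\lino, \prefs)$ by \Cref{def:contructiveRVPUTcheck}'s preceding definition. Then, since $\ospecial \in \allDescOrders$, we have $\RV(\ospecial, \prefs) \subseteq \RVPUT(\prefs)$. Taking the contrapositive: if $\wcand \notin \RVPUT(\prefs)$, then $\wcand \notin \RV(\ospecial, \prefs)$, which is exactly the claim. One can also phrase this via \Cref{obs:propertiesOfRiverDiagram}: $\wcand \in \RV(\ospecial, \prefs)$ would mean $\wcand$ is the root of the rooted tree $\mRV(\ospecial, \prefs)$, exhibiting a tiebreak under which $\wcand$ wins River, contradicting $\wcand \notin \RVPUT(\prefs)$.

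I expect no real obstacle here — the entire content is the bookkeeping fact that $\ospecial$ is a valid member of $\allDescOrders$, so that running River with it is one of the ``parallel universes'' accounted for in $\RVPUT$. The only thing worth stating carefully is that the output of Steps~1 and~2 of the \rvPutCheck is irrelevant for this direction, as already noted in the text preceding the theorem: whatever edge set $E'$ is fed into $\setOrdering$, the result lies in $\allDescOrders$, so the argument goes through unchanged.

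\begin{proof}
    Recall that $\RVPUT(\prefs) = \bigcup_{\lino \in \allDescOrders} \RV(\lino, \prefs)$. By \Cref{def:ordering}, the ordering $\ospecial = \setOrdering(E(\mG), E(\rspt(\wcand, \msRV(\prefs))))$ lies in $\allDescOrders$: property~(1) of \Cref{def:ordering} ensures it is descending by margin, and this holds regardless of the edge subset passed to $\setOrdering$. Consequently $\RV(\ospecial, \prefs) \subseteq \bigcup_{\lino \in \allDescOrders} \RV(\lino, \prefs) = \RVPUT(\prefs)$. Taking the contrapositive of this inclusion, if $\wcand \notin \RVPUT(\prefs)$ then $\wcand \notin \RV(\ospecial, \prefs)$, which proves the theorem.
\end{proof}
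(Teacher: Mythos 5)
Your proof is correct and takes essentially the same route as the paper: since $\ospecial \in \allDescOrders$, running River with it is one of the parallel universes, so $\RV(\ospecial, \prefs) \subseteq \RVPUT(\prefs)$ and the contrapositive gives the claim. Your version is slightly more careful than the paper's in explicitly verifying that $\setOrdering$ always outputs a member of $\allDescOrders$, which is a worthwhile bookkeeping step but not a different argument.
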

\begin{proof}
    Let \prefs be a preference profile. Let $\wcand \in \alts$ with $\wcand \notin \RVPUT(\prefs)$.
    This means there is no possible tiebreak with which \wcand is 
    determined a winner by River. Since we only run River with a specific tiebreak $\ospecial$, the alternative \wcand can not be a winner of $\RV(\ospecial, \prefs)$.
\end{proof}

The previous two theorems can now be combined to show the correctness of \rvPutCheck : 
\begin{proof}[Proof of \Cref{thm:contructiveRVPUTcheckCorrect}]
    Let \prefs be a preference profile.
    If \wcand is a winner of $\RVPUT(\prefs)$ then \wcand is a winner of $\RV(\ospecial, \prefs)$. We have shown this in \Cref{thm:AtoC}.
    Then \rvPutCheck(\prefs) returns \texttt{true} for \wcand by \Cref{def:contructiveRVPUTcheck}. 
    If \wcand is not a winner of $\RVPUT(\prefs)$ then \wcand is not a winner of $\RV(\ospecial, \prefs)$. We have shown this in \Cref{thm:CtoA}. Then \rvPutCheck(\prefs) returns \texttt{false} for \wcand by \Cref{def:contructiveRVPUTcheck}.
    Thus the \rvPutCheck algorithm is correct. 
\end{proof}

Finally:
\begin{theorem}\label{thm:RvPutInP}
    We can decide $\wcand\in \RVPUT(\prefs)$ in polynomial time 
    for any preference profile $\prefs \in \allPrefs$ and $\wcand \in \alts$ .
\end{theorem}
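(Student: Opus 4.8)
The plan is to observe that essentially all of the work has already been done, so that only a routine runtime accounting remains. By \Cref{thm:contructiveRVPUTcheckCorrect} the \rvPutCheck algorithm returns \texttt{true} exactly when $\wcand \in \RVPUT(\prefs)$, so it suffices to show that \rvPutCheck runs in time polynomial in the size of its input, namely the pair $(\prefs, \wcand)$. First I would record that from \prefs one can construct the margin graph $\mG(\prefs)$ -- which has $n = \abs{\alts}$ vertices and at most $2n(n-1)$ edges -- in polynomial time, since each majority margin $\mpref{x,y}$ is obtained by a single pass over the $m$ ballots and there are only $\bigO{n^2}$ ordered pairs to consider. Because $n$ and $m$ are both bounded by the input size, it then suffices to bound each of the five steps of \Cref{def:contructiveRVPUTcheck} by a polynomial in $n$ and $m$.

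Then I would walk through the steps in order, invoking the results already established. Computing $\msRV(\prefs)$ (Step~1) is polynomial by \Cref{thm:srvInP}; computing $\rspt(\wcand, \msRV)$ with \ourPrim (Step~2) is polynomial by \Cref{obs:rsptInP} -- the path precondition on the start vertex in \Cref{rspt:algorithm} does not affect the runtime bound, and when it fails \wcand is not immune, hence not an $\RVPUT$ winner, so the eventual River run will still reject it correctly; computing $\ospecial = \setOrdering(E(\mG), E(\rspt))$ (Step~3) is polynomial by \Cref{pps:oInP}; running River with the ordering $\ospecial$ to obtain its winner (Step~4) is polynomial by \Cref{lma:rvInP}; and comparing that winner with \wcand (Step~5) takes constant time. (If desired, one may additionally short-circuit the whole process when $\mG(\prefs)$ has a Condorcet winner, which is detectable in $\bigO{n^2}$ time, but this is not needed for correctness.)

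Since \rvPutCheck performs a constant number of steps, each running in time polynomial in the input size, and a finite sum of polynomials is again a polynomial, the total runtime of \rvPutCheck is polynomial. Together with its correctness this decides $\wcand \in \RVPUT(\prefs)$ in polynomial time, which is the claim; running \rvPutCheck once for each of the $n$ alternatives then computes the whole set $\RVPUT(\prefs)$ in polynomial time as well. I do not expect a genuine obstacle here: everything substantive lives in the cited lemmas, and the only care needed is to make explicit that ``polynomial in $n$ and $m$'' coincides with ``polynomial in the input size'' and that the margin graph itself is cheap to build from the ballots.
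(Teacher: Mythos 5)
Your proposal is correct and follows essentially the same route as the paper: cite \Cref{thm:contructiveRVPUTcheckCorrect} for correctness, then bound each sequential step of the \rvPutCheck via \Cref{thm:srvInP}, \Cref{obs:rsptInP}, \Cref{pps:oInP}, and \Cref{lma:rvInP}. Your added remarks on constructing the margin graph from the ballots and on the \ourPrim start-vertex precondition are sound elaborations of points the paper handles elsewhere in the surrounding text.
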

\begin{proof}
    By \Cref{thm:contructiveRVPUTcheckCorrect} the \rvPutCheck(\prefs) decides whether $\wcand\in \RVPUT(\prefs)$.
    All steps in this process have a runtime polynomial in $\abs{\alts}$ and $\abs{E(\mG)}$.
    This follows from \Cref{thm:srvInP}, \Cref{obs:rsptInP}, \Cref{pps:oInP}, and \Cref{lma:rvInP},
    as these steps in the process are sequential and have a worst-case runtime polynomial in $\abs{\alts}$. Thus the runtime is dominated by creating the semi-River diagram, which takes $\bigO{\abs{\alts}^5}$.
    So $\RVPUT(\prefs)$ can be decided in polynomial time.  

\end{proof}
Thus, we can also decide the set of River PUT winners for a preference profile \prefs in polynomial time because \[\RVPUT(\prefs) = \msset{\wcand\in \alts \mid \rvPutCheck(\wcand, \prefs) = \texttt{true}}.\]


    \ifthenelse{\boolean{finishedLook}}{ }{
    \section{Computing $\RVPUT$} \label{sct:winnerSet}
Using the constructive $\RVPUT$ check, we can determine weither an alternate $\wcand \in \alts$ is a $\RVPUT$ winner.

To compute the winning set we can
\begin{enumerate}[label=\alph*)]
    \item Use the constructive $\RVPUT$ check for every alternative in \alts.
    \item Compute additional information from the semi-River diagram, to directly determine the winning set.
\end{enumerate}

For b) we define the following edge types:
\subsection*{Edge Types}
These edge types classify edges in the margin graph, based on their behaviour over all Tie-Breaks.
\begin{definition}\label{def:impossibleEdge}
    We define an edge $e \in \mG$ as \impossible, if is in no River diagram for any Tie-Break. Formally: $e$ is \impossible iff $\forall \lino \in \allDescOrders: e\not \in \mRV(\lino)$.
\end{definition}
These are the edges we want to exclude from the semi-River diagram.

\begin{definition}\label{def:possibleEdge}
    We define an edge $e\in \mG$ as \possible, if there is in a River diagram for some Tie-Break. Formally: $e$ is \possible\ iff $\exists \lino \in \allDescOrders: e \in \mRV(\lino)$.
\end{definition}
These are the edges we want to include in the semi-River diagram.

\begin{corollary}
    The \Cref{def:strongSemiRiverProcess} is equivalent to: For every edge $e\in \mG$, we include $e$ in \msRV, iff $e$ is \possible (and exclude $e$ iff $e$ is \impossible).
\end{corollary}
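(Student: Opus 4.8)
The claim states that an edge $e$ belongs to $\msRV(\prefs)$ precisely when $e$ is \possible; equivalently, $E(\msRV(\prefs)) = \bigcup_{\lino \in \allDescOrders} E(\mRV(\lino,\prefs))$. I would prove the two inclusions separately, since one of them is already available.

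\textbf{Possible edges survive in the semi-River diagram.} If $e$ is \possible, then $e \in E(\mRV(\lino,\prefs))$ for some $\lino \in \allDescOrders$, hence $e$ lies in $\bigcup_{\lino \in \allDescOrders} E(\mRV(\lino,\prefs))$, which is a subset of $E(\msRV(\prefs))$ by \Cref{pps:msRVsupersetMRV}. Contrapositively, an \impossible\ edge --- one contained in no River diagram --- is excluded from $\msRV$. This direction is therefore immediate.

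\textbf{Every semi-River edge is possible.} This is the substantive direction and, I expect, the main obstacle. Given $e = (x,y) \in E(\msRV(\prefs))$, I must exhibit a single ordering $\lino \in \allDescOrders$ with $e \in E(\mRV(\lino))$. The plan is to build $\lino$ by fixing the internal order of each margin class, from the highest margin downwards, while maintaining the invariant that the partial River diagram built from the strictly higher margin edges is compatible with later adding $e$: namely, just before River processes the margin class of $\m{e}$, the alternative $y$ has no incoming edge and there is no path from $y$ to $x$ in the current diagram. Placing $e$ first among its own margin class then forces River to add it, since neither \iCond nor \iiCond fires. The negations of the two semi-River conditions supply exactly the room needed to keep this invariant. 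Because \sRVbranchingCond fails for $e$, every incoming edge $e' = (z,y)$ of $y$ in $\msRV$ with $\m{e'} > \m{e}$ carries a path from $y$ to $z$ of strength at least $\m{e'}$ inside $\msRVm[\m{e'}]$, by \Cref{lma:semiRiverPathExistance}; I would order the higher margin classes so that each such blocking path is realized in the River diagram before $e'$ is processed, whence $e'$ is rejected by \iiCond and $y$ stays without an incoming edge. Because \sRVcycleCond fails for $e$, the ancestors of $x$ up to $y$ in $\msRVgm[\m{e}]$ do not induce a DAG with $y$ as the unique source; I would exploit this slack to choose which higher margin edges enter the River diagram so that no path from $y$ to $x$ is forced --- either because those ancestors carry a cycle of $\msRV$, one edge of which a forest must drop, or because $x$ is reachable only through some alternative other than $y$, whose relevant incoming edge can be left out.

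The delicate part is showing that these simultaneous requirements on the higher margin classes --- realizing every blocking path and omitting enough edges to keep $x$ unreachable from $y$ --- can all be met by a single consistent ordering. I would establish this by induction on the margin classes, treating one class at a time, and I would try to run the argument of \Cref{pps:msRVsupersetMRV} in reverse: that theorem characterizes exactly when a path or an edge is unavoidable in every universe, and what is needed here is the complementary fact that whatever is individually avoidable can be avoided together in the vicinity of $e$. Once both inclusions hold, the equivalence of ``$e \in \msRV$'' and ``$e$ is \possible'' follows, and since \possible\ and \impossible\ are complementary by \Cref{def:possibleEdge} and \Cref{def:impossibleEdge}, this is precisely the reformulation of \Cref{def:strongSemiRiverProcess} asserted by the corollary. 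For the special case where $e$ happens to lie on a recursive strongest path from a $\RVPUT$ winner $\wcand$, \Cref{thm:AtoC} already yields the conclusion, since it identifies $\rspt(\wcand,\msRV(\prefs))$ with a genuine River diagram; but not every semi-River edge arises in this way, so the general construction is still required.
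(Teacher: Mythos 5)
Your first inclusion is fine and is exactly what the paper has available: a \possible\ edge lies in some River diagram, hence in $\bigcup_{\lino \in \allDescOrders} E(\mRV(\lino,\prefs))$, hence in $E(\msRV)$ by \Cref{pps:msRVsupersetMRV}. The problem is the converse, which you rightly identify as the substantive direction but then only sketch. That sketch cannot be completed, because the converse is not established anywhere in the paper, and the authors themselves say in the conclusion that it fails: they ``only showed that the semi-River diagram is a superset of all possible River diagrams,'' and equality ``is currently not so, because there are cases where at least one of the conditions is always satisfied for an edge but not the same conditions for every tiebreak.'' The paper's own one-line proof of this corollary cites a lemma \texttt{pps:msRVsubsetMRV} that does not exist in the document; the corollary lives in draft material that is compiled out.

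The concrete failure is exactly at the step you call delicate. The conditions \sRVbranchingCond and \sRVcycleCond each test whether \emph{one fixed} River condition fires for $e$ in \emph{every} universe. Their negations therefore only give you, for each potential obstruction, the existence of \emph{some} universe in which that particular obstruction is absent; they do not give you a \emph{single} ordering in which all of these escapes happen simultaneously. In particular, an edge $e$ can be rejected by \iCond under some tiebreaks and by \iiCond under all the others --- so $e$ is \impossible --- while neither semi-River condition is satisfied, and such an $e$ survives into $\msRV$. Your plan to ``run \Cref{pps:msRVsupersetMRV} in reverse'' founders here: that theorem shows the semi-River conditions are \emph{sufficient} for impossibility, not necessary, so their failure does not certify possibility. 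Only the one inclusion is provable with the paper's tools; the biconditional asserted by the corollary should be treated as a conjecture (or the semi-River process would have to be strengthened, as the authors suggest in their outlook).
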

\begin{proof}
    This follows directly from \Cref*{pps:msRVsubsetMRV} and \Cref*{pps:msRVsupersetMRV}.
\end{proof}

\subsection*{More Edge Types for \possible edges}
Among the \possible edges, we can draw further distinctions. First we differentiate \certain from \uncertain edges, both of which we have in the semi-River diagram.

\begin{definition}\label{def:certainEdge}
    We define an edge $e \in \mG$ as \certain, if it is in the River diagram for every Tie-Break. Formally: $e$ is \certain iff $\forall \lino \in \allDescOrders: e\in \mRV(\lino)$.
\end{definition}
\certain edges are single incoming edge in a the semi-River diagram and there is no path with higher or equal weight back
To give an intuition, one such edge could be the a unique highest edge of the margin graph, for which \iCond and \iiCond are trivally false, for any linear descending order.

\begin{definition} \label{def:uncertainEdge}
    We define an edge $e\in\mG$ as \uncertain, if it is in the River diagram for some Tie-Break and not in the River diagram for some other Tie-Break. Formally: $e$ is \uncertain iff $\exists \lino\in \allDescOrders: (e \in \mRV(\lino)) \land \exists \lino' \in \allDescOrders: (e \not \in \mRV(\lino'))$.
\end{definition}

To better understand the \uncertain edges, we divide them based on why they might be excluded.
\begin{definition}\label{def:breakableEdge}
    We define an edge $e\in\mG$ as \breakable, if it is \uncertain and there is a River diagram for a Tie-Break where it is excluded because of \iiCond, that is, it would close a cycle under that Tie-Break. Formally: $e=(x,y)$ is \breakable iff $\exists \lino\in \allDescOrders: e \not \in \mRV(\lino) \land \exists y \rightsquigarrow_p x \text{in} \mRVi[\lino^-1(e) +1](\lino)$ and $e$ is \uncertain.
\end{definition} 
We will later show that alternatives which only have \breakable incoming edges are $\RVPUT$ winners.

\begin{definition}\label{def:choosableEdge}
    We define an edge $e\in\mG$ as \choosable, if it is \uncertain and there is a River diagram for a Tie-Break where it is excluded because of \iCond, that is, it would we the second incoming edge for an alternative and it is not \breakable. Formally: $e= (x,y)$ is \choosable iff $\exists \lino\in \allDescOrders: e \not \in \mRV(\lino) \land \exists e'\in \ine(e') \text{in} \mRVi[\lino^-1(e) +1](\lino)$ and $e$ is \uncertain and $e$ is not \breakable.
\end{definition}
Observe that this definition is the same as the edge being \uncertain but not \breakable, because if \iiCond is always false, but there is some Tie-Break where the edge is included and some where it is not than \iCond must be true for some cases.

\subsection*{Determine the $\RVPUT$ winning set using \breakable edges}
For a semi-River diagram \msRV(\prefs), we can can compute the set of \breakable edges:
For each edge $e=(y,x)\in \EsRV$, $e$ is \breakable iff there exists a path $P_{y,x}$ with $\str(P_{y,x})\geq\m{e}$ in \msRV. This is correct, because all edges in \msRV are \possible and if such a path exists they are not \certain and not \choosable.
The set of $\RVPUT$ winners are the alternatives for which all incoming edges are \breakable.
This is true, because all \breakable edges incoming to one alternative can all be excluded in the same River diagram. They are excluded because of \iCond. They can not be part of each others excluding paths. These excluding paths cannot contradict each other.
If $P$ requires a different incoming edge than $P'$  for some alternive in both, $P$ up to this alternative would have a differnt strength form $P'$ up to this alternative. There is a universe where to path with higher strength is chosen for connecting the alternative with \breakable edges and the common alternative and both the \breakable edges of $P$ and $P'$ are broken. 
If some subpath of $P$ is a path that excludes some edge of $P'$, this must contradict the fact that both start in the same alternative and end in one of its neighbors.

\subsection*{state purgatory}
Here we define the states the edges of the semi-River diagram can have in terms of how we compute than. We will later show the equivalence of these definitions with those of the previous subsection.
\begin{itemize}
    \item For the semi-River algorithm we define \textit{certain} edges, as edges $e=(x,y) \in \msRV$ with $\sset{e} = \ine_{\msRV}(y)$ and $\not \exists y \rightsquigarrow_p x : \str(p) \geq \str(p)$ in $\msRV$.
    This is the same as edge $e=(x,y) \in \msRVm{\m{e}}$ with $\sset{e} = \ine_{\msRVm[\m{e}]}(y)$ and $\not \exists y \rightsquigarrow_p x$ in $\msRVm[\m{e}]$.
    \item We define \textit{breakable} edges, as edges $e=(x,y) \in \msRV$ with $\exists y \rightsquigarrow_p x : \str(p) \geq \str(p)$ in \msRV.
    This is the same as edges $e=(x,y) \in \msRVm[\m{e}]$ with $\exists y \rightsquigarrow_p x$ in $\msRVm[\m{e}]$.
    \item We define \textit{chooseable} edges, as edges $e=(x,y) \in msRV$ which are not breakable ($\not \exists y \rightsquigarrow_p x : \str(p) \geq \str(p)$ in \msRV).
    This is the same as edges $e=(x,y) \in msRVm[\m{e}]$ which are not breakable ($\not \exists y \rightsquigarrow_p x$ in $\msRVm[\m{e}]$). All incoming edges to $y$ in $\msRVm[\m{e}]$, are breakable or choosable.
    \item We define \textit{impossible} edges, as edges that are not possible. \texttt{This def and the negation of possible edges, is the same as the semi River algo}
\end{itemize}
We call an edge $e=(x,y)$ \textit{possible} (not as a edge type, but as a helpful def for edge types), if  (not impossible; no certain cycle and no certain or chooseable higher edge) there is no strictly higher in edge that is certain or choosable and there is no path from $y$ to $x$ with edges that are certain or chooseable.

    }

    \makeatletter
        \def\toclevel@chapter{-1}
        \def\toclevel@section{0}
    \makeatother

    \chapter{Conclusions \& Outlook}
    \section*{Conclusions}
In this thesis, we proved that River with PUT can be computed in polynomial time with the \rvPutCheck algorithm.
This algorithm first creates a semi-River diagram, encoding important information about the River process, then creates an \rsptf for every alternative. It uses that tree to create a tiebreak that can be used to decide if an alternative is a winner of River PUT. 
For every winning candidate, this creates a rebutting tree as a result of the River run, which provides an easy-to-read certificate for its immunity.

For uniquely weighted preference profiles, we showed that River has a runtime of $\bigO{n^2 \log n}$ which is dominated by the sorting of the edges. The well-known Ranked Pairs methods has to check for cycles in a graph with fewer constraints, than the River diagram. Thus the naive runtime of Ranked pairs is in $\bigO{n^4}$, where this check is performed with a BFS for every edge.

The Ranked Pairs method is currently widely used, for example by Openassistant. There human curators rank replies according to their preference. The different rankings are merged with Ranked Pairs \cite{10.5555/3666122.3668186}.
In such a large dataset ties are plausible to occur. River with PUT could be applied in this context.

\section*{Future work}
We only showed that the semi-River diagram is a superset of all possible River diagrams. We conjecture the semi-River process can be adjusted so that the set of edges in the semi-River diagram and the set of edges in a River diagram under some tiebreak are equal. This is currently not so, because there are cases where at least one of the conditions is always satisfied for an edge but not the same conditions for every tiebreak.

The runtime of the \rvPutCheck given in this thesis is a naive runtime to show membership in P. This could likely be improved by optimizing the semi-River process. One example is shortcutting the later \rvPutCheck stages by investigating if the $\RVPUT$ winners can be directly read from the semi-River diagram. 
A drawback of $\RVPUT$ is that it is not resolute. Future work could investigate how to remedy this, by making some selection from the set of $\RVPUT$ winners.

Further investigations towards the properties of $\RVPUT$ seem most promising since the very similar Ranked Pairs method with PUT has neutrality.
More extensive properties already shown for River by \textcite{doringFlowingFairnessRiver2024} only hold for tiebreakers with certain characteristics and may also hold with PUT. 

It is surprising that River with PUT is in P, while Ranked Pairs is NP-hard, as they are computed in the same way, except that River has the additional branching condition. A direction of future work could be to further analyze this difference.

    \pagestyle{plain}

    \renewcommand*{\bibfont}{\small}
    \printbibheading
    \addcontentsline{toc}{chapter}{Bibliography}
    \printbibliography[heading = none]

@inproceedings{10.5555/3666122.3668186,
  title = {{{OpenAssistant}} Conversations - Democratizing Large Language Model Alignment},
  booktitle = {Proceedings of the 37th International Conference on Neural Information Processing Systems},
  author = {K{\"o}pf, Andreas and Kilcher, Yannic and {von R{\"u}tte}, Dimitri and Anagnostidis, Sotiris and Tam, Zhi-Rui and Stevens, Keith and Barhoum, Abdullah and Duc, Nguyen Minh and Stanley, Oliver and Nagyfi, Rich{\'a}rd and ES, Shahul and Suri, Sameer and Glushkov, David and Dantuluri, Arnav and Maguire, Andrew and Schuhmann, Christoph and Nguyen, Huu and Mattick, Alexander},
  year = {2024},
  series = {Nips '23},
  publisher = {Curran Associates Inc.},
  address = {Red Hook, NY, USA},
  abstract = {Aligning large language models (LLMs) with human preferences has proven to drastically improve usability and has driven rapid adoption as demonstrated by ChatGPT. Alignment techniques such as supervised fine-tuning (SFT) and reinforcement learning from human feedback (RLHF) greatly reduce the required skill and domain knowledge to effectively harness the capabilities of LLMs, increasing their accessibility and utility across various domains. However, state-of-the-art alignment techniques like RLHF rely on high-quality human feedback data, which is expensive to create and often remains proprietary. In an effort to democratize research on large-scale alignment, we release OpenAssistant Conversations, a human-generated, human-annotated assistant-style conversation corpus consisting of 161,443 messages in 35 different languages, annotated with 461,292 quality ratings, resulting in over 10,000 complete and fully annotated conversation trees. The corpus is a product of a worldwide crowd-sourcing effort involving over 13,500 volunteers. Models trained on OpenAssistant Conversations show consistent improvements on standard benchmarks over respective base models. We release our code and data under a fully permissive licence. https://github.com/LAION-AI/Open-Assistant, https://huggingface.co/OpenAssistant/oasst1},
  articleno = {2064}
}

@article{arrowDifficultyConceptSocial1950,
  title = {A {{Difficulty}} in the {{Concept}} of {{Social Welfare}}},
  author = {Arrow, Kenneth J.},
  year = {1950},
  journal = {Journal of Political Economy},
  volume = {58},
  number = {4},
  eprint = {1828886},
  eprinttype = {jstor},
  pages = {328--346},
  publisher = {University of Chicago Press},
  issn = {00223808, 1537534X},
  urldate = {2024-10-30}
}

@misc{brandtComputationalSocialChoice2023,
  title = {Computational {{Social Choice Lecture}} 1 ({{Introduction}})},
  author = {Brandt, Felix},
  year = {2023},
  month = oct,
  address = {Technical University of Munich},
  urldate = {2024-07-08},
  abstract = {00:00 Introduction 17:55 Why are there different voting rules? 34:12 Five common voting rules 48:14 A curious preference profile 1:06:38 Desirable properties (axioms) 1:38:53 Can voters be better off by lying about their preferences? A Course in Social Choice Theory with Algorithmic Elements Notation, Terminology, and Definitions: https://pub.dss.in.tum.de/brandt-rese... Felix Brandt (https://www.cs.cit.tum.de/dss/brandt/)}
}

@book{brandtHandbookComputationalSocial2016,
  title = {Handbook of Computational Social Choice},
  editor = {Brandt, Felix and Conitzer, Vincent and Endriss, Ulle and Lang, J{\'e}r{\^o}me and Procaccia, Ariel D.},
  year = {2016},
  publisher = {Cambridge University Press},
  address = {Cambridge},
  doi = {10.1017/CBO9781107446984},
  abstract = {The rapidly growing field of computational social choice, at the intersection of computer science and economics, deals with the computational aspects of collective decision making. This handbook, written by thirty-six prominent members of the computational social choice community, covers the field comprehensively. Chapters devoted to each of the field's major themes offer detailed introductions. Topics include voting theory (such as the computational complexity of winner determination and manipulation in elections), fair allocation (such as algorithms for dividing divisible and indivisible goods), coalition formation (such as matching and hedonic games), and many more. Graduate students, researchers, and professionals in computer science, economics, mathematics, political science, and philosophy will benefit from this accessible and self-contained book},
  isbn = {978-1-107-06043-2 978-1-107-44698-4},
  langid = {english},
  file = {C\:\\Users\\jmmal\\Zotero\\storage\\L5ESTCCC\\comsoc.pdf;C\:\\Users\\jmmal\\Zotero\\storage\\V2NPW8GJ\\Brandt et al. - 2016 - Handbook of computational social choice.pdf}
}

@inproceedings{brillPriceNeutralityRanked2021,
  title = {The {{Price}} of {{Neutrality}} for the {{Ranked Pairs Method}}},
  booktitle = {Proceedings of the {{AAAI Conference}} on {{Artificial Intelligence}}},
  author = {Brill, Markus and Fischer, Felix},
  year = {2021},
  month = sep,
  volume = {26},
  pages = {1299--1305},
  doi = {10.1609/aaai.v26i1.8250},
  urldate = {2024-03-27},
  abstract = {The complexity of the winner determination problem has been studied for almost all common voting rules. A notable exception, possibly caused by some confusion regarding its exact definition, is the method of ranked pairs. The original version of the method, due to Tideman, yields a social preference function that is irresolute and neutral. A variant introduced subsequently uses an exogenously given tie-breaking rule and therefore fails neutrality. The latter variant is the one most commonly studied in the area of computational social choice, and it is easy to see that its winner determination problem is computationally tractable. We show that by contrast, computing the set of winners selected by Tideman's original ranked pairs method is NP-complete, thus revealing a trade-off between tractability and neutrality. In addition, several known results concerning the hardness of manipulation and the complexity of computing possible and necessary winners are shown to follow as corollaries from our findings.},
  langid = {english},
  file = {C:\Users\jmmal\Zotero\storage\GS38YQXH\Brill und Fischer - 2021 - The Price of Neutrality for the Ranked Pairs Metho.pdf}
}

@misc{buchholzInfographic2024Super2024,
  title = {Infographic: 2024: {{The Super Election Year}}},
  shorttitle = {Statista {{Infographic}}},
  author = {Buchholz, Katharina and Statista},
  year = {2024},
  month = jan,
  urldate = {2024-07-07},
  abstract = {This chart shows countries where a national election is/was held in 2024.},
  langid = {english},
  file = {C:\Users\jmmal\Zotero\storage\3EKGXVLC\countries-where-a-national-election-is-was-held-in-2024.html}
}

@book{cormenIntroductionAlgorithms2022,
  title = {Introduction to Algorithms},
  author = {Cormen, Thomas H. and Leiserson, Charles Eric and Rivest, Ronald L. and Stein, Clifford},
  year = {2022},
  edition = {Fourth edition},
  publisher = {The MIT Press},
  address = {Cambridge, Massachusett},
  abstract = {"The leading introductory textbook and reference on algorithms"--},
  isbn = {978-0-262-04630-5},
  lccn = {QA76.6 .C662 2022},
  keywords = {Computer algorithms,Computer programming}
}

@unpublished{doringFlowingFairnessRiver2024,
  title = {Flowing {{Towards Fairness}}: {{The River Method}} as a {{Refinement}} of {{Split Cycle That}} Is {{Independent}} of {{Clones}} and {{Pareto-Dominated Alternatives}}},
  author = {D{\"o}ring, Michelle and Brill, Markus and Heitzig, Jobst},
  year = {2024},
  month = oct,
  abstract = {We introduce River, a novel Condorcet-consistent voting method that is based on pairwise majority margins and can be seen as a simplified variation of Tideman's Ranked Pairs method. River is simple to explain, simple to compute even ``by hand,'' and gives rise to an easy-to-interpret certificate in the form of a directed tree. Like Ranked Pairs and Schulze's Beat Path method, River is a refinement of the Split Cycle method and shares with those many desirable properties including independence of clones. Unlike the other three methods, River satisfies a strong form of resistance to agenda-manipulation that is known as independence of Pareto-dominated alternatives.},
  langid = {english},
  file = {C:\Users\jmmal\Zotero\storage\9L66DS77\Doring et al. - Flowing Towards Fairness The River Method as a Re.pdf}
}

@misc{eiulimitedDemocracyIndex2023,
  title = {Democracy {{Index}}},
  author = {EIU Limited},
  year = {2023},
  journal = {Economist Intelligence Unit},
  urldate = {2024-07-08},
  abstract = {The world's democracies seem powerless to prevent wars from breaking out around the globe. EIU's Democracy Index report analyses the relationship between democracy, war and peace and looks at the geopolitical drivers of conflict. It also provides an explanation of the changes in the global rankings and an in-depth regional overview.},
  howpublished = {https://www.eiu.com/n/campaigns/democracy-index-2023/},
  langid = {british},
  file = {C\:\\Users\\jmmal\\Zotero\\storage\\KVBLF49Y\\Democracy-Index-2023-Final-report.pdf;C\:\\Users\\jmmal\\Zotero\\storage\\UTR6H2WL\\democracy-index-2023.html}
}

@inproceedings{freemanGeneralTiebreakingSchemes2015,
  title = {General {{Tiebreaking Schemes}} for {{Computational Social Choice}}},
  booktitle = {Proceedings of the {{International Joint Conference}} on {{Autonomous Agents}} and {{Multiagent Systems}}},
  author = {Freeman, Rupert and Brill, Markus and Conitzer, Vincent},
  year = {2015},
  month = jan,
  volume = {3},
  pages = {1401--1409},
  abstract = {In (computational) social choice, how ties are broken can affect the axiomatic and computational properties of a voting rule. In this paper, we first consider settings where we may have multiple winners. We formalize the notion of parallel universes tiebreaking with respect to a particular tree that represents the computation of the winners, and show that the specific tree used does not matter if certain conditions hold. We then move on to settings where a single winner must be returned, generally by randomized tiebreaking, and examine some drawbacks of existing approaches. We propose a new class of tiebreaking schemes based on randomly perturbing the vote profile. Finally, we show that one member of this class uniquely satisfies a number of desirable properties.},
  langid = {english},
  file = {C:\Users\jmmal\Zotero\storage\NZAWJXUA\Freeman et al. - General Tiebreaking Schemes for Computational Soci.pdf}
}

@misc{hamlinTopWaysPlurality2015,
  title = {Top 5 {{Ways Plurality Voting Fails}}},
  author = {Hamlin, Aaron and {The Center for Election Science}},
  year = {2015},
  month = mar,
  journal = {The Center for Election Science},
  urldate = {2024-07-09},
  abstract = {By Aaron Hamlin Any academic will tell you that our choose-one voting method (plurality voting) is a terrible, terrible voting method. (There's better.)},
  howpublished = {https://electionscience.org/voting-methods/spoiler-effect-top-5-ways-plurality-voting-fails/},
  langid = {american},
  file = {C:\Users\jmmal\Zotero\storage\HIRBL7SY\spoiler-effect-top-5-ways-plurality-voting-fails.html}
}

@article{hollidaySplitCycleNew2023,
  title = {Split {{Cycle}}: A New {{Condorcet-consistent}} Voting Method Independent of Clones and Immune to Spoilers},
  shorttitle = {Split {{Cycle}}},
  author = {Holliday, Wesley H. and Pacuit, Eric},
  year = {2023},
  month = oct,
  journal = {Public Choice},
  volume = {197},
  number = {1-2},
  pages = {1--62},
  issn = {0048-5829, 1573-7101},
  doi = {10.1007/s11127-023-01042-3},
  urldate = {2024-07-08},
  abstract = {Abstract                            We propose a Condorcet-consistent voting method that we call Split Cycle. Split Cycle belongs to the small family of known voting methods satisfying the anti-vote-splitting criterion of               independence of clones               . In this family, only Split Cycle satisfies a new criterion we call               immunity to spoilers               , which concerns adding candidates to elections, as well as the known criteria of               positive involvement               and               negative involvement               , which concern adding voters to elections. Thus, in contrast to other clone-independent methods, Split Cycle mitigates both ``spoiler effects'' and ``strong no~show paradoxes.''},
  langid = {english},
  file = {C:\Users\jmmal\Zotero\storage\9M58I8RQ\Holliday and Pacuit - 2023 - Split Cycle a new Condorcet-consistent voting met.pdf}
}

@incollection{laslierLoserPluralityVoting2012,
  title = {And the {{Loser Is}}{\dots} {{Plurality Voting}}},
  booktitle = {Electoral {{Systems}}},
  author = {Laslier, Jean-Fran{\c c}ois},
  editor = {Felsenthal, Dan S. and Machover, Mosh{\'e}},
  year = {2012},
  pages = {327--351},
  publisher = {Springer Berlin Heidelberg},
  address = {Berlin, Heidelberg},
  doi = {10.1007/978-3-642-20441-8_13},
  urldate = {2024-07-07},
  isbn = {978-3-642-20440-1 978-3-642-20441-8},
  langid = {english},
  file = {C:\Users\jmmal\Zotero\storage\XC3NFDCW\Laslier - 2012 - And the Loser Is… Plurality Voting.pdf}
}

@incollection{listSocialChoiceTheory2022a,
  title = {Social {{Choice Theory}}},
  booktitle = {The {{Stanford Encyclopedia}} of {{Philosophy}}},
  author = {List, Christian},
  editor = {Zalta, Edward N. and Nodelman, Uri},
  year = {2022},
  edition = {Winter 2022},
  publisher = {Metaphysics Research Lab, Stanford University},
  urldate = {2024-07-09},
  abstract = {Social choice theory is the study of collective decision proceduresand mechanisms. It is not a single theory, but a cluster of models andresults concerning the aggregation of individual inputs (e.g., votes,preferences, judgments, welfare) into collective outputs (e.g.,collective decisions, preferences, judgments, welfare). Centralquestions are: How can a group of individuals choose a winning outcome(e.g., policy, electoral candidate) from a given set of options? Whatare the properties of different voting systems? When is a votingsystem democratic? How can a collective (e.g., electorate,legislature, collegial court, expert panel, or committee) arrive atcoherent collective preferences or judgments on some issues, on thebasis of its members' individual preferences or judgments? Howcan we rank different social alternatives in an order of socialwelfare? Social choice theorists study these questions not just bylooking at examples, but by developing general models and provingtheorems.},
  keywords = {Arrow's theorem,Belief Merging and Judgment Aggregation,Condorcet Marie-Jean-Antoine-Nicolas de Caritat Marquis de: in the history of feminism,consequentialism,Cusanus Nicolaus [Nicolas of Cusa],Jury Theorems,Normative Economics and Economic Justice,Pufendorf Samuel Freiherr von: moral and political philosophy,Rawls John,Tarski Alfred,Voting Methods},
  file = {C:\Users\jmmal\Zotero\storage\9VWQAN75\social-choice.html}
}

@article{maskinOpinionHowMajority2016a,
  title = {Opinion {\textbar} {{How Majority Rule Might Have Stopped Donald Trump}}},
  author = {Maskin, Eric and Sen, Amartya},
  year = {2016},
  month = apr,
  journal = {The New York Times},
  issn = {0362-4331},
  urldate = {2024-07-07},
  abstract = {An 18th-century vote-counting formula might have sunk the Trump candidacy long ago.},
  chapter = {Opinion},
  langid = {american},
  keywords = {Bharatiya Janata Party,Elections,India,Maskin Eric,Modi Narendra,Politics and Government,Presidential Election of 2016,Primaries and Caucuses,Sen Amartya,Trump Donald J,United States,Voting and Voters}
}

@misc{oxfordenglishdictionaryDemocracy2023,
  title = {Democracy, n.},
  author = {{Oxford English Dictionary}},
  year = {2023},
  month = jul,
  publisher = {Oxford University Press}
}

@article{primShortestConnectionNetworks1957,
  title = {Shortest Connection Networks and Some Generalizations},
  author = {Prim, R. C.},
  year = {1957},
  month = nov,
  journal = {The Bell System Technical Journal},
  volume = {36},
  number = {6},
  pages = {1389--1401},
  issn = {0005-8580},
  doi = {10.1002/j.1538-7305.1957.tb01515.x},
  urldate = {2024-08-05},
  abstract = {The basic problem considered is that of interconnecting a given set of terminals with a shortest possible network of direct links. Simple and practical procedures are given for solving this problem both graphically and computationally. It develops that these procedures also provide solutions for a much broader class of problems, containing other examples of practical interest.},
  keywords = {prim,rspt},
  file = {C\:\\Users\\jmmal\\Zotero\\storage\\D3KWN6Q4\\Prim - 1957 - Shortest connection networks and some generalizati.pdf;C\:\\Users\\jmmal\\Zotero\\storage\\ZNVMPJDN\\6773228.html}
}

@book{sipserIntroductionTheoryComputation2013,
  title = {Introduction to the Theory of Computation},
  author = {Sipser, Michael},
  year = {2013},
  edition = {Third edition, international edition},
  publisher = {Cengage Learning},
  address = {Australia Brazil Japan Korea Mexiko Singapore Spain United Kingdom United States},
  isbn = {978-1-133-18779-0 978-1-133-18781-3 978-0-357-67058-3},
  langid = {english},
  file = {C\:\\Users\\jmmal\\Zotero\\storage\\BAUGYW35\\Sipser - 2013 - Introduction to the theory of computation.pdf;C\:\\Users\\jmmal\\Zotero\\storage\\G6DQAKP6\\Sipser_Introduction.to.the.Theory.of.Computation.3E.pdf}
}

@misc{theelectoralknowledgenetworkGlobalDistributionElectoral2005,
  title = {The {{Global Distribution}} of {{Electoral Systems}}},
  author = {The Electoral Knowledge Network},
  year = {2005},
  journal = {aceproject},
  urldate = {2024-07-09},
  howpublished = {https://aceproject.org/main/english/es/esh.htm},
  langid = {english},
  file = {C:\Users\jmmal\Zotero\storage\RW68KB3H\esh.html}
}

@Preamble
{
    {
    \newcommand{\bibciac}[2]{Proceedings of the #1 Conference on Algorithms and Complexity (CIAC'#2)}
    \newcommand{\bibdac}[2]{Proceedings of the #1 Annual Design Automation Conference (DAC'#2)}
    \newcommand{\bibinvisau}[1]{Proceedings of the Australian Symposium on Information Visualisation (invis.au #1)}
    \newcommand{\bibieeepdp}[2]{Proceedings of the #1 IEEE Symposium on Parallel and Distributed Processing #2}
    \newcommand{\bibieeecs}[1]{Proceedings of the IEEE International Symposium on Circuits and Systems #1}
    \newcommand{\bibcccg}[2]{Proceedings of the #1 Canadian Conference on Computational Geometry (CCCG'#2)}
    \newcommand{\bibswat}[2]{Proceedings of the #1 Scandinavian Workshop on Algorithm Theory (SWAT'#2)}
    \newcommand{\bibipco}[2]{Proceedings of the #1 International Conference on Integer Programming and Combinatorial Optimization (IPCO'#2)}
    \newcommand{\bibsofsem}[2]{Proceedings of the #1 Conference on Current Trends in Theory and Practice of Computer Science (SOFSEM'#2)}
    \newcommand{\bibstoc}[2]{Proceedings of the #1 Annual ACM Symposium on Theory of Computing (STOC'#2)}
    \newcommand{\bibfocs}[2]{Proceedings of the #1 Annual Symposium on Foundations of Computer Science (FOCS'#2)}
    \newcommand{\bibsoda}[2]{Proceedings of the #1 Annual ACM-SIAM Symposium on Discrete Algorithms (SODA'#2)}
    \newcommand{\bibgd}[2]{Proceedings of the #1 International Symposium on Graph Drawing (GD'#2)}
    \newcommand{\bibinfovis}[1]{Proceedings of the IEEE Symposium on Information Visualization (InfoVis'#1)}
    \newcommand{\bibvis}[1]{Proceedings of the IEEE Conference on Visualization (Vis'#1)}
    \newcommand{\bibpvis}[1]{Proceedings of the IEEE Pacific Visualisation Symposium (PacificVis'#1)}
    \newcommand{\bibsoftvis}[2]{Proceedings of the #1 ACM Symposium on Software Visualization (SoftVis'#2)}
    \newcommand{\bibeurocg}[2]{Proceedings of the #1 European Workshop on Computational Geometry (EuroCG'#2)}
    \newcommand{\bibsocg}[2]{Proceedings of the #1 Annual Symposium on Computational Geometry (SoCG'#2)}
    \newcommand{\bibwads}[2]{Proceedings of the #1 International Symposium on Algorithms and Data Structures (WADS'#2)}
    \newcommand{\bibwg}[2]{Proceedings of the #1 Workshop on Graph-Theoretic Concepts in Computer Science (WG'#2)}
    \newcommand{\bibgta}{Proceedings of the Conference at Graph Theory and Applications}
    \newcommand{\bibisaac}[2]{Proceedings of the #1 International Symposium on Algorithms and Computation (ISAAC'#2)}
    \newcommand{\bibcocoon}[2]{Proceedings of the #1 Annual International Conference on Computing and Combinatorics (COCOON'#2)}
    \newcommand{\bibtamc}[2]{Proceedings of the #1 Annual Conference on Theory and Applications of Models of Computation (TAMC'#2)}
    \newcommand{\bibicalp}[2]{Proceedings of the #1 International Colloquium on Automata, Languages and Programming (ICALP'#2)}
    \newcommand{\biblatin}[2]{Proceedings of the #1 Latin American Symposium (LATIN'#2)}
    \newcommand{\bibesa}[2]{Proceedings of the #1 Annual European Symposium on Algorithms (ESA'#2)}
    }
}

@String{Order
= {Order}}

@String{OR
= {Operations Research}}

    \addchap{Declaration of Authorship}
    I hereby declare that this thesis is my own unaided work. All direct or indirect sources used are acknowledged as references.\\[6 ex]

I used the following tools to improve the qualitity of the writing:
\begin{itemize}
    \item I used Writefull (www.writefull.com) for Overleaf and Grammarly (www.grammarly.com) to check for language mistakes.
    \item I used DeepL (www.deepl.com) to find better wording.
\end{itemize}


\end{document}